 \renewcommand{\section}{\@startsection{section}{1}{0mm}%
                                   {2ex plus -1ex minus -.2ex}%
                                   {1.3ex plus .2ex}%
                                   {\normalfont\Large\bfseries}}%
 \renewcommand{\subsection}{\@startsection{subsection}{2}{0mm}%
                                     {1ex plus -1ex minus -.2ex}%
                                     {1ex plus .2ex}%
                                     {\normalfont\large\bfseries}}%
 \renewcommand{\subsubsection}{\@startsection{subsubsection}{3}{0mm}%
                                     {1ex plus -1ex minus -.2ex}%
                                     {1ex plus .2ex}%
                                     {\normalfont\normalsize\bfseries}}
 \renewcommand\paragraph{\@startsection{paragraph}{4}{0mm}%
                                    {1ex \@plus1ex \@minus.2ex}%
                                    {-1em}%
                                    {\normalfont\normalsize\bfseries}}
 \renewcommand\subparagraph{\@startsection{subparagraph}{5}{\parindent}%
                                       {2.0ex \@plus1ex \@minus .2ex}%
                                       {-1em}%
                                      {\normalfont\normalsize\bfseries}}
\newenvironment{proof}{{\bf Proof:  }}{\hfill\rule{2mm}{2mm}}
\newenvironment{proofof}[1]{{\bf Proof of #1:  }}{\hfill\rule{2mm}{2mm}}
\newtheorem{remark}{Remark}
\newtheorem{lemma}{Lemma}
\newtheorem{theorem}{Theorem}
\newtheorem{corollary}{Corollary}
\newtheorem{proposition}{Proposition}
\newtheorem{conjecture}{Conjecture}
\newtheorem{definition}{Definition}
\newcommand{\R}{\ensuremath{\mathbb R}}
\newcommand{\Z}{\ensuremath{\mathbb Z}}
\newcommand{\N}{\ensuremath{\mathbb N}}
\newcommand{\prob}[1]{\ensuremath{\text{{\bf Pr}$\left[#1\right]$}}}
\newcommand{\expct}[1]{\ensuremath{\text{{\bf E}$\left[#1\right]$}}}
\newcommand{\ceil}[1]{\ensuremath{\left\lceil#1\right\rceil}}
\newcommand{\junk}[1]{}
\newcommand{\ignore}[1]{}
\newcommand{\fl}[1]{\lfloor{#1}\rfloor} 
\newcommand{\cl}[1]{\lceil{#1}\rceil} 
\newcommand{\id}[1]{{1}_{\{{#1}\}}} 
\newcommand{\E}{\mathbb{E}} 
\newcommand{\dto}{\Rightarrow} 
\newcommand{\D}{\mathbf{D}} 
\newcommand{\prs}[1]{\mathbb{P}\left({#1}\right)} 
\newcommand{\fls}[1]{\bar{#1}^{(r)}} 
\newcommand{\sfs}[1]{\bar{#1}^{(r,l)}} 
\newcommand{\ds}[1]{\hat{#1}^{(r)}} 
\newcommand{\ssp}[1]{{#1}^{(r)}} 
\newcommand{\ser}{\mathcal{Z}} 
\newcommand{\buf}{\mathcal{Q}} 
\newcommand{\M}{\mathbf{M}} 
\newcommand{\pov}{\mathbf{d}} 
\newcommand{\inn}[2]{\langle{#1},{#2}\rangle} 
\newcommand{\cratio}{\frac{c^2_s+1}{c^2_s+c^2_a}} 
\def\argmax{\mathop{\rm arg\,max}}%
\def\argmin{\mathop{\rm arg\,min}}%
\newcommand{\ins}[1]{\textcolor{red}{\v{}{#1}}} 
\begin{document}
\title{Approximations and Optimal Control for State-dependent Limited Processor Sharing Queues}
\author{Varun Gupta \\  \small{Booth School of Business} \\  \small{University of Chicago} \\ \small{\texttt{varun.gupta@chicagobooth.edu}} \and Jiheng Zhang \\ \small{Department of Industrial Engg. and Logistics Management} \\ \small{The Hong Kong University of Science and Technology} \\ \small{\texttt{j.zhang@ust.hk}}}
\date{}
\maketitle
\rule[0.1in]{\textwidth}{0.025in}

\begin{abstract}
The paper studies approximations and control of a processor sharing (PS) server where the service rate depends on the number of jobs occupying the server. The control of such a system is implemented by imposing a limit on the number of jobs that can share the server concurrently, with the rest of the jobs waiting in a first-in-first-out (FIFO) buffer. A desirable control scheme should strike the right balance between efficiency (operating at a high service rate) and parallelism (preventing small jobs from getting stuck behind large ones). 

We employ the framework of \emph{heavy-traffic} diffusion analysis to devise near optimal 
 control heuristics for such a queueing system. However, while the literature on diffusion control of state-dependent queueing systems begins with a sequence of systems and an exogenously defined drift function, we begin with a finite discrete PS server and propose an axiomatic recipe to explicitly construct a sequence of state-dependent PS servers which then yields a drift function. 
We establish diffusion approximations and use them to obtain insightful and closed-form approximations for the original system under a static concurrency limit control policy.

We extend our study to control policies that dynamically adjust the concurrency limit. We provide two novel numerical algorithms to solve the associated diffusion control problem.
Our algorithms can be viewed as ``average cost'' iteration: The first algorithm uses binary-search on the average cost and can find an $\epsilon$-optimal policy in time $O\left( \left(\log \frac{1}{\epsilon} \right)^2\right)$; the second algorithm uses the Newton-Raphson method for root-finding and requires $O\left( \log \frac{1}{\epsilon} \log \log \frac{1}{\epsilon} \right)$ time.

Numerical experiments demonstrate the accuracy of our approximation for choosing optimal or near-optimal static and dynamic concurrency control heuristics.

\end{abstract}




\section{Introduction}

Consider an \emph{emergency room} where doctors, nurses, and diagnostic equipment make up a shared resource for admitted patients. It has been empirically observed that the service rate of such service systems is state-dependent (e.g., \cite{BattTerwiesch2012}). Human operators tend to speed up service when there is congestion. As another example, consider a typical \emph{web server or an online transaction processing system}. In such resource sharing systems, as the number of tasks (also called active threads) concurrently sharing the server increases, the server throughput initially increases due to more efficient utilization of resources. However, as the server switches from one task to another, it needs to make room for the new task's data in its cache memory by evicting an older task's data (only to fetch it again later). Without a limit on the number of concurrent tasks, this contention for the limited memory can lead to a phenomenon called \emph{thrashing} which causes the system throughput to drop drastically (e.g., \cite{Denningetal76, Blake82, Agrawal85, Elniketyetal04, HeissWagner91, Welsh01}). 

The resource sharing system examples we have described above fall into the category of the so-called \emph{State-dependent Limited Processor Sharing} (Sd-LPS) systems. To specify an Sd-LPS system, we begin with a processor sharing (PS) server whose service rate varies as a function of the number of jobs at the server. For example,
\begin{align}
\label{eqn:mu_example}
\mu(1) = 1, \mu(2) = 1.5, \mu(3) = 1.25, \mu(4) = 1, \mu(5) = 0.75, \ldots
\end{align}
When there are $n$ jobs at the PS server, each job gets served at a rate of $\frac{\mu(n)}{n}$ jobs/second. To ensure efficient operation, we impose a limit on the maximum number of jobs that can be served in parallel. We call this the concurrency limit, $K$. Arriving jobs that find the server busy with $K$ jobs wait in a first-come-first-served (FCFS) buffer. A \emph{static concurrency control policy} is one where the concurrency limit is independent of the state. If the concurrency level can vary with the system state (e.g., the queue length of the FCFS buffer), we call it a \emph{dynamic concurrency control policy}.

To understand the tradeoff involved in choosing the optimal concurrency level, suppose there are 3 jobs in the system described above. Even though the server is capable of serving at an aggregate rate of 1.5 jobs/second by limiting the concurrency level to 2, we may choose to increase the concurrency level to 3 and operate below peak capacity. Why might we want to do that? It is well known that if the job size distribution has high variability, then pure PS outperforms FCFS scheduling by allowing small jobs to overtake large ones. Therefore, it may be beneficial to increase the concurrency level beyond the peak efficiency even if some capacity will be lost. Similarly, for job size distributions with low variability, it may be beneficial to operate at $K=1$. Thus Sd-LPS systems are not ``work-conserving'' queueing systems.

\subsection*{Contributions}
Naturally, our goal is to choose the `best' concurrency control policy. In this work we aim to develop a diffusion approximation framework for Sd-LPS queues, and to utilize the proposed diffusion approximation to find concurrency control policies that minimize the mean sojourn time. This immediately leads to the question: 
\emph{Given that we want to control the state-dependent PS server (exemplified by \eqref{eqn:mu_example}), what is a `meaningful' asymptotic scaling and diffusion approximation?} 

While there are some works on heavy-traffic asymptotics for queues with state-dependent rates, $(i)$ they begin with a sequence of systems with exogenously given limiting drift functions whereas we begin with a discrete PS server of the kind shown in \eqref{eqn:mu_example}, $(ii)$ they are limited to models where the server can only serve one job at a time whereas multiple jobs are processed in parallel by the PS server; and $(iii)$ they only analyze a Jackson network type of system and do not solve a diffusion control problem. 
The present paper fills these gaps in the literature.  

Our main contributions are as follows:
\begin{enumerate}
\item We propose an axiomatic approach to ``reverse-engineer'' a sequence of Sd-LPS queueing systems starting with a discrete state-dependent PS server (Section~\ref{sec:model}). This sequence yields a limiting state-dependent drift function which we utilize to develop diffusion approximations. 
All prior literaure on diffusion analysis of state-dependent queues assumes that the drift function is given exogeneously.
\item We propose an approximation for the distribution of the number of jobs in the Sd-LPS system for a static concurrency limit under a $GI$ arrival process and $GI$ job sizes. This approximation is used to choose a near-optimal static concurrency limit to minimize any cost that is a function of the number of jobs in the system. 
\item We extend our framework by proposing a more general scaling for developing dynamic (state-dependent) control policies. We present two numerical algorithms for solving the resulting diffusion control problem to minimize the steady-state mean number of jobs in the system. Our algorithms can be viewed as `average cost iteration' (as opposed to value function iteration or policy iteration) and are novel to the best of our knowledge. In our simulation experiments, the dynamic policies based on diffusion control perform remarkably close to the true optimal dynamic policies (for input distributions where the true optimal policy can be computed numerically). 
\end{enumerate}

\subsection*{Related work on control of LPS systems}
The literature on LPS-type systems has mostly focused on the constant rate LPS queue where the server speed is independent of the state. Yamazaki and Sakasegawa \cite{YamazakiS1987} show qualitatively the effect of increasing the concurrency level on the mean sojourn time for NWU (New-Worse-than-Used) and Erlang job size distributions. Avi-Itzhak and Halfin \cite{AH88} derive an approximation for the mean sojourn time for the constant rate LPS queue with $M/GI/$ input process, while Zhang and Zwart \cite{ZhangZwart08} derive one for $GI/GI/$ input. Nair et al. \cite{NairWZ2010} expose the power of LPS scheduling by analyzing the tail of sojourn time under light-tailed and heavy-tailed job size distributions. They prove that with an appropriate choice of the concurrency level as a function of the load, LPS queues can achieve robustness to the distribution of job sizes (their tail to be precise). 

For Sd-LPS queues, Rege and Sengupta \cite{RegeSengupta85} derive expressions for the moments and distribution of the sojourn time under $M/M/$ input. Gupta and Harchol-Balter \cite{PSMPL_paper} propose an approximation for the mean sojourn time for $GI/GI/$ input by approximating the interarrival times and job size distribution by the tractable degenerate hyperexponential distribution. They also propose heuristic dynamic admission control policies under $M/GI/$ input.

In this paper, we propose the first diffusion approximation for Sd-LPS queues with a $GI/GI$ input and a static concurrency level. In addition, we propose the first heuristic dynamic admission control policies for Sd-LPS queues.

\subsection*{Related work on control of queueing systems}
There is a considerable literature on the control of the arrival and service rates of queueing systems, but the majority of this work focuses on control of $M/M/1$ or $M/M/s$ systems via Markov decision process formulation, e.g., \cite{GeorgeHarrison_2001, AtaShneorson_2006, AdusumilliHasenbein_2010, LeeKulkarni_2014}. Ward and Kumar \cite{WardKumar_2008} look at the diffusion control formulation for admission control in a $GI/GI/1$ with impatient customers. Our model differs significantly from those in the literature: in our model, the space of actions is the number of jobs admitted to the PS server and is therefore state-dependent. The action space for control problems studied in the literature is usually either the probability of admitting jobs or the server speed, neither of which is state-dependent. The state-dependence of the action space means that the value function may not even be monotonic in the state. We establish this result for our problem and present a simple criterion under which monotonicity holds for general control problems with state-dependent action spaces (see proof of Proposition~\ref{prop:monotonic_V}). In addition, the rather arbitrary nature of the service rate curve precludes elegant structural results for the optimal value function which leads us to propose novel and efficient numerical algorithms for solving the resulting diffusion control problem.

\subsection*{Related work on heavy-traffic analysis of systems with state-dependent rates}
Our heavy-traffic scaling is most closely related to the recent work of Lee and Puhalskii \cite{LeePuhalskii_2012}, who analyze a queueing network of FCFS queues in the critically loaded regime and under non-Markovian arrival and service processes. Yamada \cite{Yamada_1995} also analyzes Markovian state-dependent queueing networks under a similar scaling of state-dependent service and arrival rates. Whereas \cite{LeePuhalskii_2012, Yamada_1995} assume an exogenously given limiting drift function, we propose a method to calculate it from the finite queueing system which is the object of the control problem. Further, the scheduling policy we consider is Processor Sharing. Other works on analysis of heavy-traffic asymptotics of state-dependent Markovian queues include Krichagina \cite{Krichagina_1992}, Mandelbaum and Pats \cite{MandelbaumPats_1998}, Janssen et al.~\cite{JanssenLS_2013}.  

\subsection*{Outline}
In Section \ref{sec:model} we present details of the Sd-LPS model, introduce the notation used in the paper, and describe our approach towards arriving at the asymptotic regime for diffusion analysis.
In Section~\ref{sec:static_control}, we present our results on diffusion approximation for the Sd-LPS queue under a static concurrency control policy. We defer the proofs of convergence to the appendix. 
In Section~\ref{sec:dynamic_control} we turn to dynamic concurrency control policies for the Sd-LPS queue. We set up a diffusion control problem for the limiting diffusion-scaled system, and propose two numerical algorithms to solve the diffusion control problem. 
We make our concluding remarks in Section~\ref{sec:conclusions}.

\section{Model and Diffusion Scaling}
\label{sec:model}

\subsection{Stochastic model and Notation}
\label{sec:measure-vauled-model}
We begin with a description of the Sd-LPS system for which we want to find the optimal control.  Let $X(t)$ denote the total number of jobs in the system at time $t$. The control of such a system is implemented by imposing a concurrency limit $K$. Only $Z(t)=X(t)\wedge K$ jobs are in service and server capacity of $\mu(Z(t))$ is shared equally among the jobs. The remaining $Q(t)=(X(t)-K)^+$ jobs wait in a FCFS queue. A job, once in service, stays in service until completion. The rate of the server $\mu(Z(t))$ is understood to be the speed at which it drains the workload.  So the \emph{cumulative service amount} a job in service can receive from time $s$ to $t$ is
\begin{equation}\label{eq:cumulative}
S(s,t)=\int_s^t\psi(Z(\tau))d\tau,
\end{equation}
where 
\begin{equation}
  \label{eq:sharing}
  \psi(z) = \left\{
    \begin{array}{ll}
      \frac{\mu(z)}{z}, & \textrm{if }z \ge 0,\\
      0, & \textrm{if }z = 0.
    \end{array}
    \right.
\end{equation}
Without loss of generality, we assume that there is no intrinsic limit on the number of jobs the server can serve as we can set the service rate to 0 to model such a limit. 
Note that for the regular state-independent system whose service rate $\mu(\cdot)$ is a constant, say 1, $\frac{\mu(z)}{z}$ in the above will simply become $1/z$.
The state-dependent service rate makes the system \emph{non}-work-conserving, which brings a fundamental challenge to their study. Existing studies of PS or LPS systems crucially rely on the fact that the system is work-conserving, which implies that the workload process is equivalent to that of a simple $G/G/1$ queue. However, this is not the case for our Sd-LPS model. 


The number of job arrivals in time $[0,t]$ is denoted by $\Lambda(t)$. We assume that $\Lambda(\cdot)$ is a renewal process with rate $\lambda$, and $c^2_a$ denotes the squared coefficient of variation (SCV) for the $i.i.d.$ inter-arrival times. The system is allowed to be non-empty initially. We index jobs by $i = -X(0)+1,-X(0)+2,\ldots,0,1,\ldots$. The first $X(0)$ jobs are initially in the system, with jobs $i =-X(0)+1,\ldots,-Q(0)$ in service and jobs $i =-Q(0)+1,\ldots,0$ waiting in the queue. Arriving jobs are indexed by $i = 1, 2, \ldots$. The size of the $i$th job is denoted by $v_i$. We assume job sizes are $i.i.d.$ random variables with mean size $m$ (in the chosen unit measuring work) and SCV $c^2_s$. Jobs leave the system once the cumulative amount of service they have received from the server exceeds their job sizes.

In this study, we are interested in how the system performance (e.g., expected number of jobs in steady state) depends on the state-dependent service rate function $\mu(\cdot)$, the parameters $(\lambda, c^2_a, m, c^2_s) $ of the stochastic primitives, and the concurrency level $K$, which is a decision variable we can control and optimize.

\subsubsection*{Measure-valued state descriptor}
Analyzing the stochastic processes underlying the Sd-LPS model with generally distributed service times requires tracking of more information about the system state than just the number of jobs. Following the framework in \cite{ZDZ2009, ZhangDaiZwart2011}, we introduce a \emph{measure-valued} state descriptor to describe the full state of the system. At any time $t$ and for any Borel set $A\subset(0,\infty)$, let $\buf(t)(A)$ denote the total number of jobs in the buffer whose job size belongs to $A$ and $\ser(t)(A)$ denote the total number of jobs in service whose residual job size belongs to set $A$. Thus, $\buf(\cdot)$ and $\ser(\cdot)$ are measure-valued stochastic processes. 
Let $\delta_a$ denote the Dirac measure of point $a$ on $\R$ and $A+y \doteq \{a+y:a\in A\}$.  
By introducing the measure-valued processes, we can characterize the evolution of the system via the following \emph{stochastic dynamic equations}: 
\begin{align}
  \buf(t)(A)&=\sum_{i={B}(t)+1}^{\Lambda(t)}\delta_{v_i}(A),\label{eq:stoc-dym-eqn-B}\\
  \ser(t)(A)&=\ser(0)(A+{S}(0,t))
  +\sum_{i={B}(0)+1}^{{B}(t)}\delta_{v_i}(A+{S}(\tau_i,t)),\label{eq:stoc-dym-eqn-S}
\end{align}
where $\tau_i$ is the time when the $i$th job starts to receive service and
\begin{align}\label{eq:B(t)}
  B(t) = \Lambda(t)-Q(t),
\end{align}
which can be intuitively interpreted as the index of the last job to enter service by time $t$. For any Borel measurable function $f:\R_+\to\R$, the integral of this function with respect to a measure $\nu$ is denoted by $\inn{f}{\nu}$. Then, both $Q(t)$ and $Z(t)$ can be represented using the measure-valued descriptors: 
\begin{align*} 
  Q(t)=\inn{1}{\buf(t)},\quad Z(t)=\inn{1}{\ser(t)}.
\end{align*}
Let $W(t)$ denote the workload of the system at time $t$ which is defined as the sum of the sizes of all jobs in queue and the remaining sizes of all jobs in service. Due to the varying service rate of the server, the dynamics of the workload process is represented by
\begin{equation}
  \label{eq:workload-dynamics}
  W(t) = W(0) + \sum_{i=1}^{\Lambda(t)} v_i - \int_0^t\mu(Z(s))\id{W(s)>0}ds.
\end{equation}
Again, we can express the workload $W(t)$ in terms of the measure-valued descriptors:
\begin{equation}
  W(t)=\inn{\chi}{\buf(t)+\ser(t)},
\end{equation}
where $\chi$ denotes the identity function on $\R$.

\subsection{Proposed Asymptotic Regime for Diffusion Approximation of Sd-LPS systems}
\label{sec:regime-scaling}
We refer to the system introduced in Section~\ref{sec:measure-vauled-model} as our \emph{original} system. We now propose an asymptotic regime where a sequence of Sd-LPS systems, parametrized by $r\in \Z^+$, will be studied under an appropriate scaling. The objective is to a obtain a meaningful approximation of the original system with the goal of choosing the `best' concurrency control policy. This leads to the question:
\begin{quote}
\emph{What is the appropriate scaling to analyze the Sd-LPS queue? That is, what asymptotic regime captures the entire service-rate curve of the original Sd-LPS system, and thus can be used to find a near-optimal concurrency limit?}
\end{quote} 
As we mentioned earlier, the scaling we develop is very close to the scaling proposed by Yamada \cite{Yamada_1995} and Lee et al. \cite{LeePuhalskii_2012}. To provide an axiomatic justification for why this is the appropriate scaling for Sd-LPS systems we begin by examining \emph{two special cases of Sd-LPS systems}: $(i)$ multiserver systems, and $(ii)$ the constant rate LPS queue.

\begin{description}

\item[\bf The $G/GI/k$ multiserver system] A $G/GI/k$ multiserver system with a service rate of $\mu$ jobs/second per server and a central buffer can be viewed as an Sd-LPS system with  $\mu(n)=n \mu$ and a concurrency limit of $K=k$. There is a rich literature on the question of whether having many slow servers is better than having a few fast servers (e.g., Brumelle \cite{Brumelle1971}, Daley and Rolski \cite{DR84}), which is similar in spirit to the concurrency control problem. The work on diffusion approximations for multiserver systems started with K{\"o}llerstr{\"o}m \cite{Kollerstrom74} for the classical heavy-traffic regime where $k$ and $\mu$ are held constant while $\lambda \uparrow k \mu$. A more refined heavy-traffic regime is the Halfin-Whitt regime (starting with \cite{HalfinWhitt1981} and more recently \cite{Reed2009}, \cite{GamarnikMomcilovic08}) where one fixes $\mu$ and creates a sequence of multiserver systems parametrized by $r$, where the number of servers grows according to $\ssp{k} = rk$ while the mean arrival rate $\ssp{\lambda}$ increases so that $\frac{\ssp{k} \mu - \ssp{\lambda} }{\sqrt{\ssp{k}}} \to \beta$. The constant $\beta$ is chosen so that the probability that an arrival gets blocked converges to a non-degenerate limit (bounded away from $0$ and $1$). An extremely accurate diffusion approximation for a given $G/M/k$ system can be obtained by matching the blocking probability under the Halfin-Whitt regime. 

\item[\bf State-independent (constant rate) LPS queue] In the state-independent LPS queue, the service rate of the server is a constant $\mu$ irrespective of the number of jobs at the server, and there is a fixed concurrency limit $k$. Recently, Zhang et al.~\cite{ZhangDaiZwart2011} have proposed and analyzed a diffusion approximation for the LPS system where a sequence of LPS systems (parametrized by $r$) is devised so that the service rate remains fixed at $\mu$, the concurrency limit increases according to $\ssp{k} = r k $ and the arrival rate increases so that $\ssp{k} (\mu - \ssp{\lambda} ) \to \theta$, a constant. As in the Halfin-Whitt regime for the multiserver systems, under the proposed scaling for LPS systems the probability that an arrival finds all slots at the PS server occupied converges to a constant. In addition, the queue length scaled by $\frac{1}{\ssp{k}}$ also converges to a non-degenerate distribution, unlike Halfin-Whitt where the queue lengths are smaller and must be scaled by $\frac{1}{\sqrt{\ssp{k} }}$.
\end{description}

It is not obvious how either of these scalings can be extended to the Sd-LPS system, but among the chief desiderata is that the diffusion-scaled system should in some sense be a \emph{faithful proxy} for the original system. As an example of a regime that is not quite faithful enough, consider the following example: We scale the concurrency limit as $k^{(r)}=rk$, leave the mean arrival rate $\lambda$ unchanged, and `stretch' the service rate curve so that for the $r$th Sd-LPS system, $\mu^{(r)}(rx) = \hat{\mu}(x)$ where $\hat{\mu}(\cdot)$ is a continuous interpolation of $\mu(\cdot)$. This would correspond to a fluid limit where the steady state `gets stuck' around $x^*$, where $\hat{\mu}(x^*)=\lambda$. Therefore this fluid regime cannot be used to devise a control policy for the original Sd-LPS system.

Instead, we adopt an axiomatic approach to devising the asymptotic regime: Under some \emph{reasonable non-trivial assumptions} $\mathcal{A}$, the \emph{behavior} $\mathcal{B}$ of the diffusion-scaled system \emph{should mimic} the original discrete system we want to approximate. The choice of $\mathcal{A}$ and $\mathcal{B}$ can be seen as the axioms of our scaling which we will use to reverse-engineer a diffusion scaling. We now formalize our choice of the assumptions $\mathcal{A}$ and behavior $\mathcal{B}$ for Sd-LPS systems with static concurrency control. Later we will use the intuition gained from this exercise to engineer a scaling for developing dynamic control policies.

\noindent \textbf{Axioms for the Sd-LPS diffusion scaling}\\
We construct a sequence of Sd-LPS systems parametrized by $r\in \Z^+$ such that the $r$th system has a concurrency level of 
\begin{equation}
  \label{eq:K-r}
  \ssp{k} = rK.
\end{equation}
Further, the sequence of service rate curves $\ssp{\mu}(\cdot)$ satisfies:
\begin{description}
\item[\bf ($\mathcal{A}$, the assumptions)] under a Poisson arrival process with rate $\lambda$ and $i.i.d$. Exponentially distributed job sizes with mean size $m$ (i.e., under $M/M/$ input),
\item[\bf ($\mathcal{B}$, the behavior)] the distribution of the scaled number of jobs in the system (scaled by $\frac{1}{\ssp{k}}$) converges to a non-degenerate limit.
\end{description}
\ \\
\noindent \textbf{Consequences of the scaling axioms and an alternate characterization} \\
Since we start by fixing the concurrency levels for the sequence of Sd-LPS systems, the only design flexibility we have to satisfy the scaling axioms is the choice of state-dependent service rate curves. Let us denote the service rate curve for the $r$th system by $\ssp{\mu}(i)$, and the resulting distribution of the number of jobs in the $r$th system under $M/M/$ input by $\ssp{F}$. Our goal is to find the sequence $\ssp{\mu}(\cdot)$ so that
\begin{align}
\label{eqn:F_convergence_condition}
\lim_{r \to \infty} \ssp{F}(\ceil{r x} ) & = \hat{F}(x)  \quad \forall x \in [0, \infty)
\end{align}
for some distribution function $\hat{F}(\cdot)$. This gives us our \textbf{first way} of deriving the scaling: Fix $\hat{F}(\cdot)$ to be a smooth, strictly increasing interpolation of the distribution function of the original system  under $M/M/$ input and reverse-engineer the sequence of service rate functions $\ssp{\mu}(i)$. As we will show in the next section, the requisite service rate functions satisfy
\begin{align}
\label{eqn:mu_condition_F}
\lim_{r\to \infty} r \left( \lambda m - \ssp{\mu}(\ceil{r x}  \right) &= \lambda m \frac{d \log f(x)}{dx} \quad \forall x \in [0,\infty).
\end{align}
where $f(x) = \frac{d}{dx}\hat{F}(x)$.

Of course, by reverse-engineering the scaling, we guarantee ourselves a non-degenerate limit that is interesting in that it captures the effect of the entire $\mu(\cdot)$ function. Further, it turns out we never really need to compute the service rate functions $\ssp{\mu}(\cdot)$! In Section~\ref{sec:static_control}, we will show that we can directly express the limiting steady-state quantities in terms of the distribution $\hat{F}(\cdot)$, which can be easily obtained from that of the original system.

However, the method described above does not generalize to dynamic control policies since the distribution $F(\cdot)$ (and its smooth interpolation $\hat{F}(\cdot)$) was obtained under the assumption of a static concurrency limit of $K$. For this case, we propose a \textbf{second way} of deriving the scaling, that still guarantees \eqref{eqn:F_convergence_condition}:

Begin with $\hat{\mu}(\cdot):\R^+_0 \to \R^+_0$ satisfying:
\begin{enumerate}
\item $\hat{\mu}(\cdot)$ agrees with $\mu(\cdot)$ at integer arguments: $\hat{\mu}(i)=\mu(i)$ for $i=\{1,2,\ldots\}$
\item $\hat{\mu}(\cdot)$ is continuous and smooth
\end{enumerate}
The sequence of service rate functions $\{\ssp{\mu}(\cdot)\}$ is chosen to satisfy
\begin{align}
\label{eqn:mu_condition_mu}
\lim_{r \to \infty} r  \left( \lambda m - \ssp{\mu}(\ceil{r x}  \right) &= \lambda m \log\frac{\lambda m}{\hat{\mu}(x)}  \quad \forall x \in [0,\infty).
\end{align}
For either way of arriving at the diffusion scaling, we see that $r(\lambda-\ssp{\mu}(\ceil{r x)})$ converges to a non-degenerate \emph{drift function} $-\theta(x)$. In the first case the $\theta(x)$ function is reverse-engineered by fixing a limiting distribution, and in the second case it is obtained more directly using a continuous extension of  $\mu(i)$. The first is more appropriate for finding static control policies, while the second is more appropriate for computing dynamic control policies. In both cases there is limited flexibility in extending a discrete function to a continuous smooth function.
\ \\

\noindent \textbf{Intuitive explanation for the choice of service rate curves $\ssp{\mu}(\cdot)$}\\
We begin by explaining our first choice of the asymptotic regime \eqref{eqn:mu_condition_F}. Consider the $r$th Sd-LPS system operating under $M/M/$ input. Let $\ssp \pi(i)$ be the probability mass function for the steady-state number of jobs in the $r$th system. Flow-balance equations imply
\[ 
\frac{\ssp \pi (\ceil{rx+1})}{\ssp \pi(\ceil{rx})} = \frac{\lambda m}{ \ssp \mu(\ceil{rx})}.
\]
Since, by design, we want $r \ssp \pi(\ceil{rx})$ to converge to the density function $f(x)$, we should have:
\[ 
\frac{\lambda m}{\ssp \mu(\ceil{rx})} \approx \frac{f\left(x+\frac{1}{r}\right)}{f(x)} \approx 1 + \frac{1}{r} \frac{f'(x)}{f(x)}.
\]
Equivalently, $r(\lambda m - \ssp \mu(\ceil{rx}) \to \lambda m \frac{d \log f(x)}{dx}$.

To motivate the second proposal of $\ssp{\mu}(\cdot)$, note that if $\frac{\lambda m}{\ssp{\mu}(r\cdot x)} \approx 1- \frac{\theta(x)}{\lambda m r} \approx e^{-\frac{\theta(x)}{\lambda m r}}$, then
\[ \frac{\ssp{\pi}(ry) }{\ssp{\pi}(rx)} \approx e^{- \frac{1}{\lambda m}\int_{x}^{y} \theta(u) du} \to \frac{\pi(y)}{\pi(x)} = \Pi_{i=x+1}^{y} \frac{\lambda m}{\mu(i)}.\]
Or,
\[ - \frac{1}{\lambda m}\int_{x}^{y} \theta(u) du \approx \log \frac{\ssp \pi(ry)}{\ssp \pi(rx)} \approx \log \frac{\pi(y)}{\pi(x)} = \sum_{i=x+1}^{y} \log \frac{\lambda m}{\mu(i)}. \]
Comparing the first and last expressions above gives us an approximation for $\theta(u)$ in terms of a continuous extension $\hat{\mu}$ of $\mu$: $r(\lambda m - \ssp \mu(\ceil{rx}) \to - \lambda m \log \frac{\lambda m}{\hat{\mu}(x)}$.

\noindent \textbf{Comparison with existing diffusion scalings}
For the two specific examples of Sd-LPS systems we pointed out earlier we can now ask: What axioms do the existing diffusion scalings satisfy?

The \ins{implicit} axioms used by \cite{HalfinWhitt1981} posit that each system in the sequence is itself a homogeneous multiserver system, and further, under $M/M/$ input the blocking probability of the sequence converges to a non-degenerate limit (for example, to the blocking probability of the finite system being approximated). If we use our proposed scaling to approximate a multiserver system, the sequence of Sd-LPS systems would not be a homogeneous multiserver system. Indeed, $r(\lambda m-\ssp{\mu}(rx))$ grows as $\log(x)$ for small $x$. On the positive side, this flexibility allows us to capture the entire distribution of number of jobs, not just the blocking probability. We should also point out that while the Halfin-Whitt scaling is more useful for capacity provisioning, the goal of our proposed scaling is to solve the admission/concurrency control problem.

For the constant rate LPS system, our scaling matches the diffusion scaling of Zhang et al. \cite{ZhangDaiZwart2011}, and thus can be seen as an extension of their scaling to Sd-LPS. Our convergence proofs follow their outline as well.

\section{Diffusion approximation for the Sd-LPS queue with a static concurrency level}
\label{sec:static_control}
The goal of this section is to provide approximations for the steady-state performance of the Sd-LPS queue with a static concurrency level under the proposed scaling \eqref{eqn:mu_condition_F}. 
In Section~\ref{sec:diff_summary} we first summarize the results of this section by giving an approximation for the mean number of jobs in an Sd-LPS system under a static concurrency level (equation~\eqref{eqn:EN_static_approx}), and providing some simulation results which show the utility of the approximation for choosing a near-optimal concurrency level.
In Section~\ref{sec:diffusion_analysis}, we prove process-level limits for diffusion-scaled workload and head count processes. In Section~\ref{sec:ss_diffusion}, we justify using the steady state of the limiting processes as an approximation for the limit of the steady state of the diffusion-scaled processes by establishing the required interchange of limits. We also present closed-form formulae for these steady-state distributions. All the proofs for this section can be found in the appendix.

\subsection{An approximation and simulation results} \label{sec:diff_summary}
Let $N$ denote the steady-state number of jobs in the Sd-LPS system for a given static concurrency level $K$. Our main result of this section yields the following simple approximation formula for the expectation of $N$ as a function of the concurrency level and other system parameters (see Proposition~\ref{prop:final_approximation} for the formal statement)  
\begin{align}
\expct{N} & \approx 
\frac
{\sum_{n=0}^{\infty} (n \wedge K) \pi(n)^{\frac{c^2_s+1}{c^2_s+c^2_a}}}
{\sum_{n=0}^{\infty} \pi(n)^{\frac{c^2_s+1}{c^2_s+c^2_a}}}
+ \left(\frac{c^2_s+1}{2} \right)
\frac
{\sum_{n=0}^{\infty} (n-K)^+ \pi(n)^{\frac{c^2_s+1}{c^2_s+c^2_a}}}
{\sum_{n=0}^{\infty} \pi(n)^{\frac{c^2_s+1}{c^2_s+c^2_a}}},
\label{eqn:EN_static_approx}
\end{align}
where $\pi(n)$ denotes the steady-state probability of there being $n$ jobs in the Sd-LPS system under $M/M/$ input (that is, Poisson arrivals with mean rate $\lambda$ and $i.i.d.$ Exponentially distributed job sizes with mean size $m$).

\begin{figure}[ht]
\begin{center}
\includegraphics[width=2.45in]{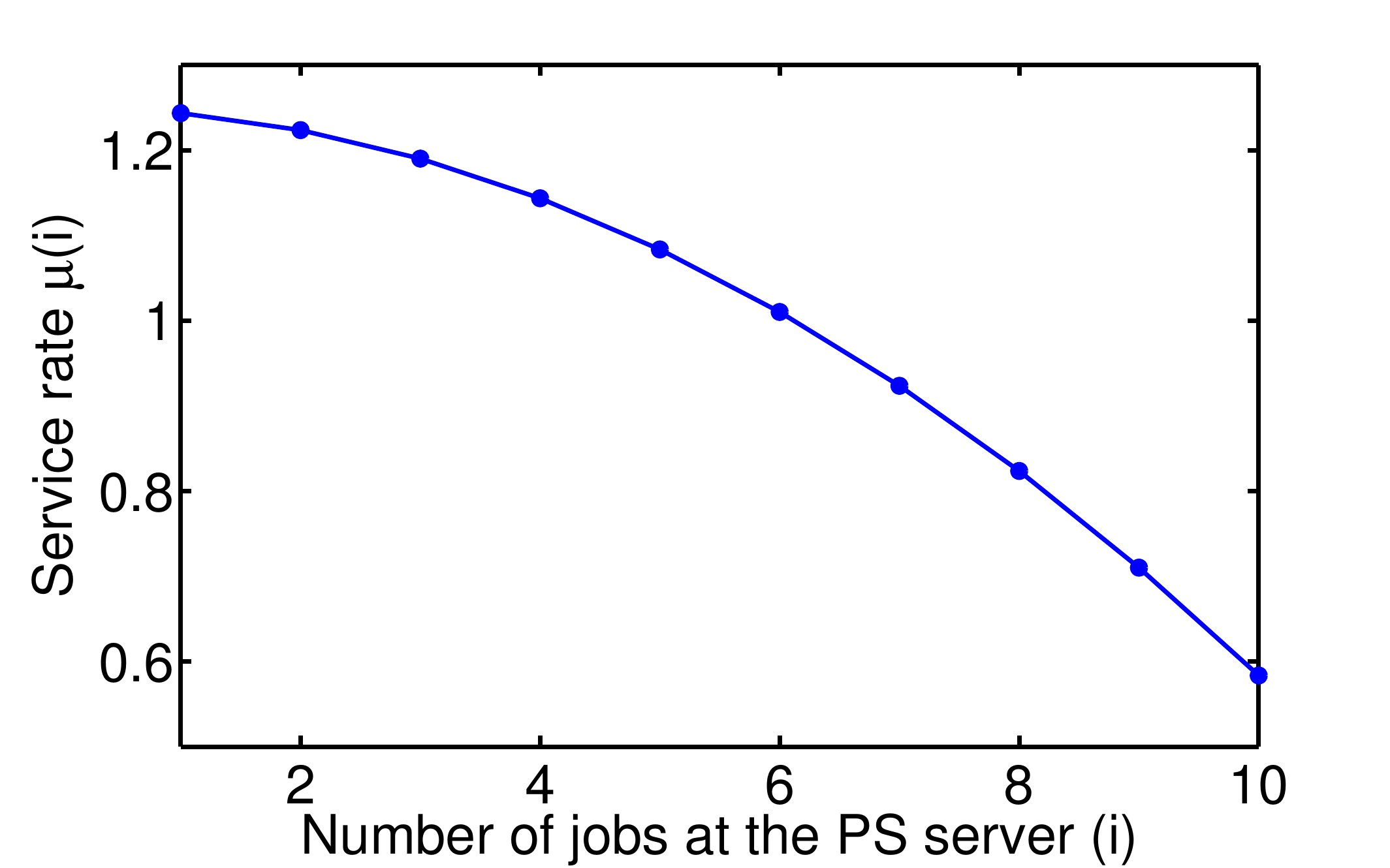}
\caption{State-dependent service rate function used for simulation results} \label{fig:mu_function_quad}
\end{center}
\end{figure}
Figure~\ref{fig:mu_function_quad} shows a hypothetical service rate function for a PS server. The service rate has the functional form $\mu(i) = 1.25 - \frac{i^2}{150}$, and is monotonically decreasing in the concurrency level. Figure~\ref{fig:performance_static} shows the simulation results for the steady-state mean number of jobs as a function of the concurrency level $K$. The arrival process is Poisson with mean arrival rate shown below the figures. We simulated three distributions, each with mean $m=1$ and SCV $c^2_s=19$. The solid curve shows the diffusion approximation \eqref{eqn:EN_static_approx} for the mean number of jobs. For each value of $\lambda$ and each distribution, the optimal concurrency level obtained via approximation \eqref{eqn:EN_static_approx} matches the one obtained from simulating the LPS system. We should point out the caveat that while the proposed diffusion approximation accurately captures the \emph{shape} of $\expct{N}$ versus the concurrency level curve and thus provides good guidance for concurrency control, the actual numerical values for $\expct{N}$ are not always very accurate for all values of $K$.
\begin{figure}[h!]
\hspace{-0.3in}
\begin{minipage}{7.5in}
\subfigure[$\lambda=0.7$]{\includegraphics[width=2.45in]{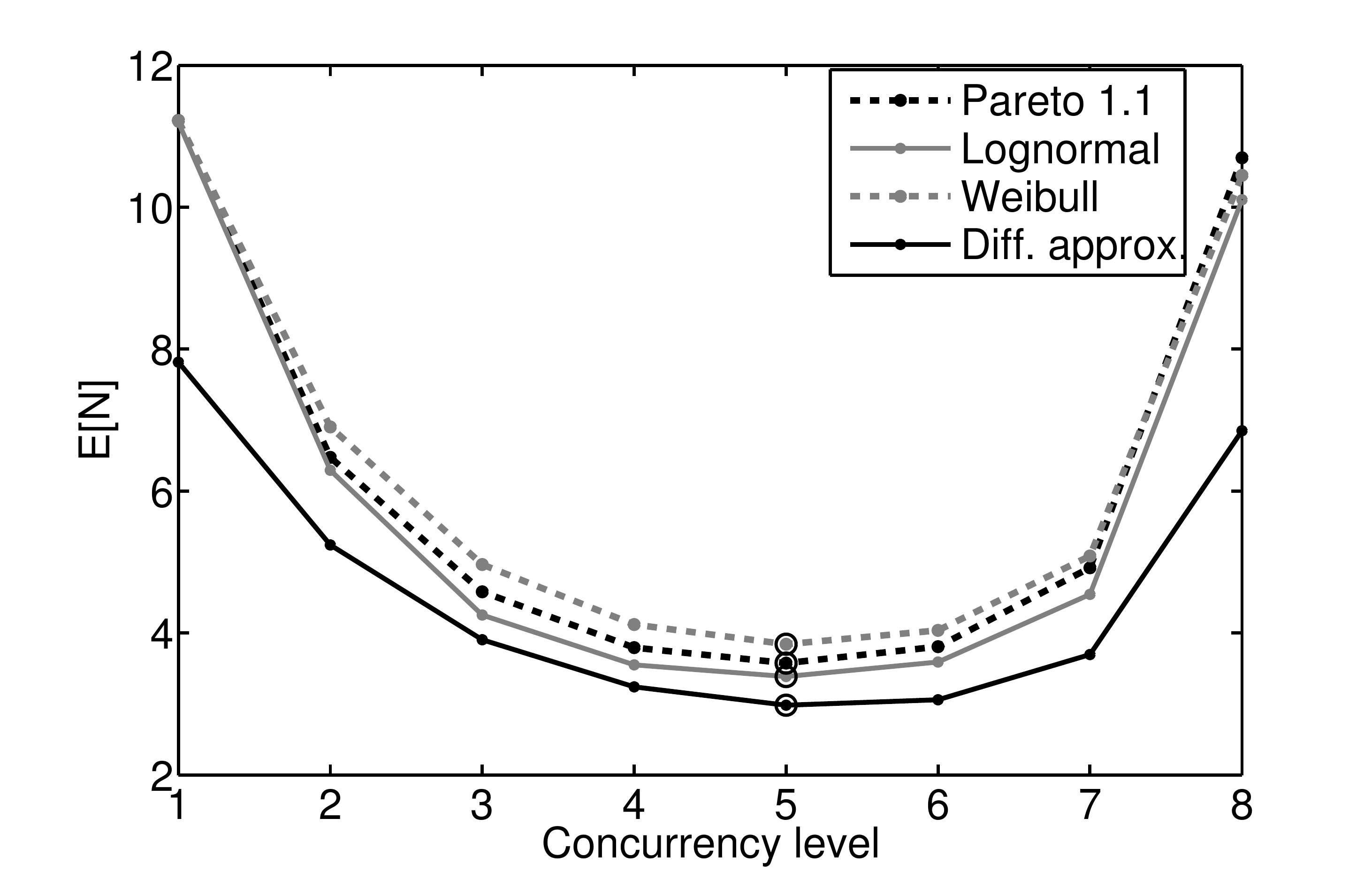}}
\hspace{-0.3in}
\subfigure[$\lambda=0.8$]{\includegraphics[width=2.45in]{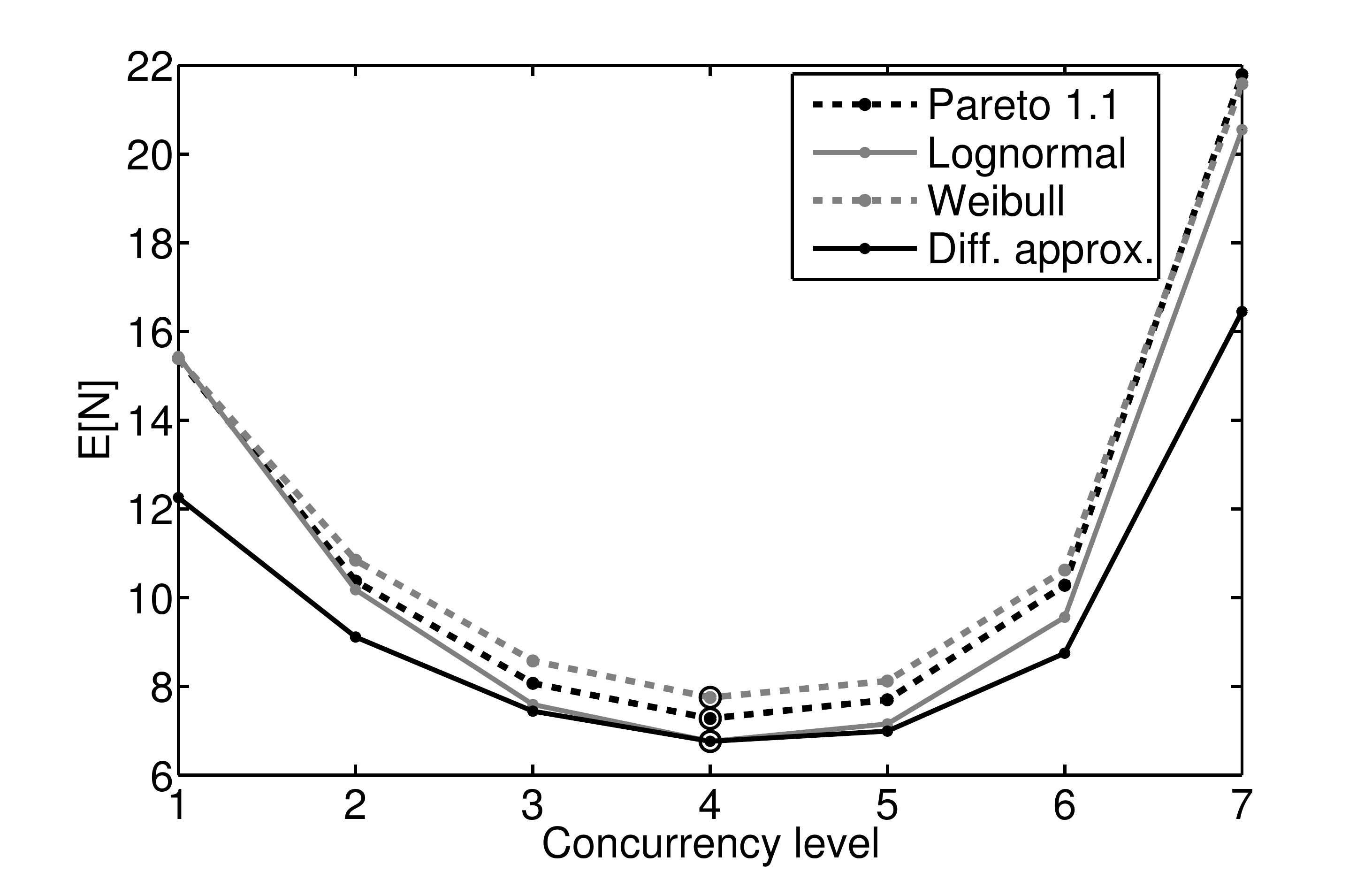}}
\hspace{-0.3in}
\subfigure[$\lambda=0.9$]{\includegraphics[width=2.45in]{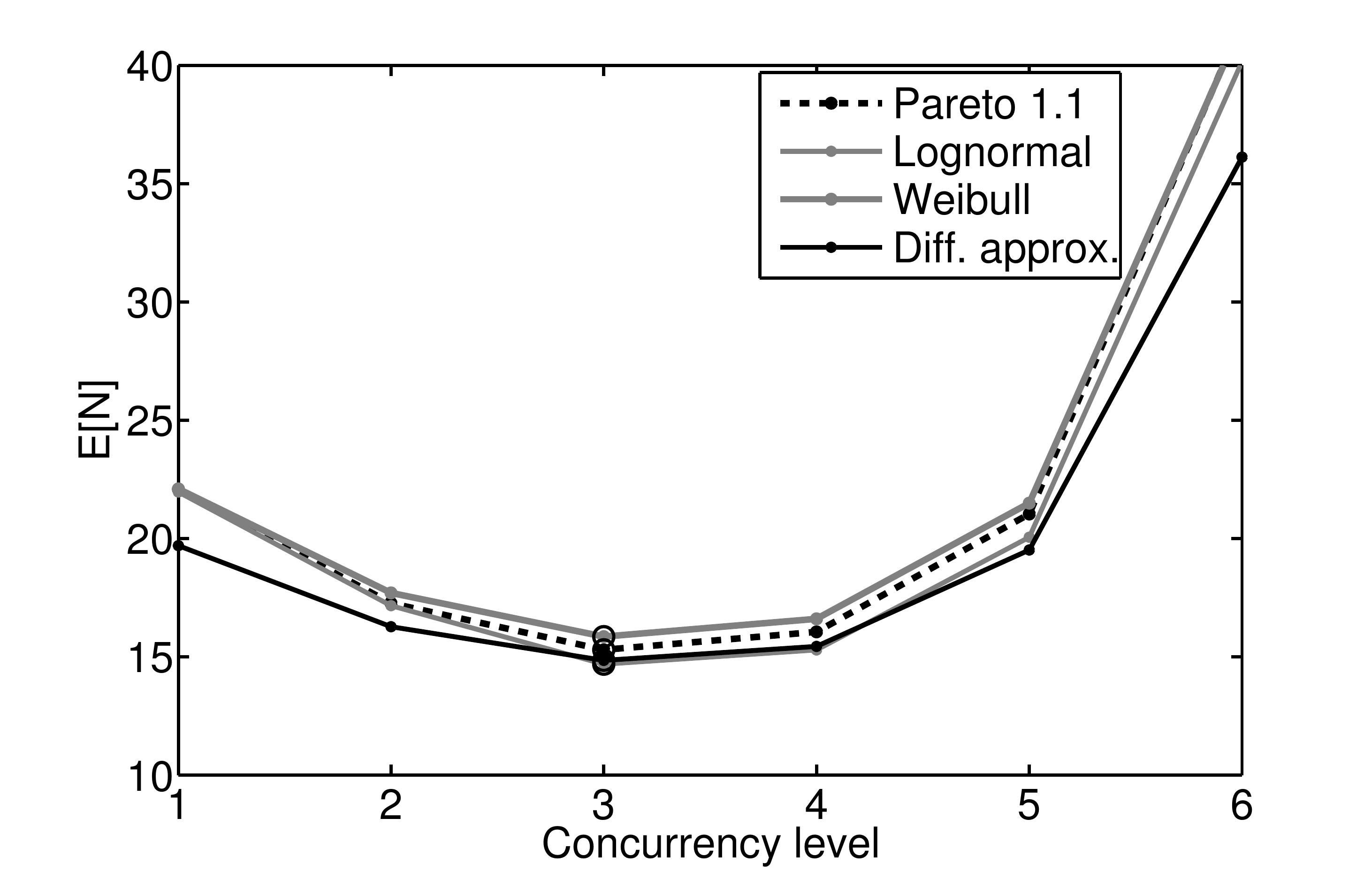}}
\end{minipage}
\caption{Simulation results for mean number of jobs in the system versus the concurrency level for the service rate function shown in Figure~\ref{fig:mu_function_quad} for various job size distributions, all with mean $m=1$ and SCV $c^2_s=19$. The arrival process is Poisson with indicated mean arrival rate $\lambda$. Also shown is the diffusion approximation from equation~\eqref{eqn:EN_static_approx}. The optimal concurrency level for each curve is shown with a circle.}
\label{fig:performance_static}
\end{figure}

\subsection{Diffusion analysis of Sd-LPS system}\label{sec:diffusion_analysis}
We now present the analysis of the Sd-LPS system under the asymptotic regime described in \eqref{eqn:mu_condition_F}. For generality and notational convenience, we present all the analysis in terms of the general drift function $\theta(x)$, and then translate the result into a form involving $f(x)$ (Proposition~\ref{prop:final_approximation}) for convenience.

Consider the sequence of Sd-LPS systems indexed by $r$. We append a superscript $(r)$ to all the quantities associated with the $r$th system. 
The concurrency level $\ssp{k}$ is specified as \eqref{eq:K-r}. 

Assume that the arrival process $\ssp{\Lambda}(\cdot)$ satisfies 
\begin{equation}
  \label{eq:HT-arrival}
  \frac{\ssp{\Lambda}(r^2t)-r^2\lambda t}{r}\dto M_a(t),\quad \textrm{as }r\to\infty,
\end{equation}
where $M_a(\cdot)$ is a Brownian motion with zero drift and variance $c_a^2$. 
Further, we assume that the sizes of arriving jobs follow distribution $G$ which satisfies
\begin{equation}
  \label{eq:cond-ser-dist}
  G \text{ is a continuous distribution function with mean } m.
\end{equation}
Introduce the drift function
\begin{equation*}
  \ssp{\theta}(x) = \left\{
    \begin{array}{ll}
       r\left(\ssp{\mu}(\ceil{rx})-\lambda m \right) & x>0,\\
       0 & x=0.
    \end{array}
    \right.
\end{equation*}
The above definition is only for technical convenience, since otherwise $\ssp\mu(0)$ would be undefined. However, this does not matter since the server idles when there are no jobs in the system.
The heavy traffic condition is specified by 
\begin{equation}
  \label{eq:cond-HT}
  \ssp{\theta}(x)\stackrel{u.o.c}{\longrightarrow}\theta(x)
  \quad\textrm{as }n\to\infty,
\end{equation}
for some locally Lipschitz continuous function $\theta(\cdot)$ on $(0,\infty)$ satisfying 
\begin{equation}
  \label{eq:cond-stability}
  \theta(K)>0.
\end{equation}
The notation $\stackrel{u.o.c}{\longrightarrow}$ means uniform convergence on compact sets, which is only required for technical reasons. Condition \eqref{eq:cond-stability} ensures that the system is stable (see the proof of Theorem~\ref{thm:XW_ss_convergence}). 
As a quick remark, we make a connection with the traditional single server system where the server speed is constant, say $\ssp{\mu}(\cdot)\equiv1$, and the drift is created by constructing a sequence of $\ssp{\lambda}$ which converges to $\lambda$ at the rate of $1/r$. The heavy traffic condition for this constant rate LPS system then becomes 
\begin{equation*}
  r\left(1-\ssp{\lambda}m\right)\to\theta>0, \quad \textrm{as }r\to\infty.
\end{equation*}
We are interested in the asymptotic behavior of the diffusion-scaled processes for the $r$th system, defined as
\begin{align}
  \label{eq:difusion-scaling}
  \ds X(t)=\frac{1}{r}\ssp{X}(r^2t),\quad
  \ds W(t)=\frac{1}{r}\ssp{W}(r^2t). 
\end{align}
The diffusion scaling for other stochastic processes $\ssp\buf$, $\ssp\ser$, $\ssp{Z}$, $\ssp{Q}$ and $\ssp{B}$ is defined in the same way. To obtain the diffusion limit of the head count process $\ds X$ and workload process $\ds W$, we need to carefully analyze the measure-valued processes introduced. The detailed analysis is presented in Appendix~\ref{sec:state-space-collapse}. 

Since we need to work with the measure-valued process, let $\nu$ denote the probability measure associated with the probability distribution function $G$, and $\nu_e$ denote the probability measure associated with the \emph{equilibrium} distribution $G_e$ of $G$. That is, $G_e(x) = \frac{1}{m}\int_0^x[1-G(y)]dy$ and the mean of $G_e$ is
\begin{equation*}
  m_e = \frac{1+c^2_s}{2} m.
\end{equation*}
Let $\M$ denote the space of all non-negative finite Borel measures on $[0,\infty)$. 
We need the following regularity assumptions on the initial state to rigorously prove the diffusion approximation results.
Assume there exists $(\xi^*,\mu^*)\in\M\times\M$ such that
\begin{align}
  \label{eq:cond-initial}
  (\ds\buf(0),\ds\ser(0))&\dto(\xi^*,\mu^*),\\
  \label{eq:cond-initial-u}
  \inn{\chi^{1+p}}{\ds\buf(0)+\ds\ser(0)}&\dto\inn{\chi^{1+p}}{\xi^*+\mu^*}\quad \textrm{for some }p>0,
\end{align}
as $r\to\infty$, and
\begin{align}
  \label{eq:cond-init-ssc}
  & (\xi^*,\mu^*)=\left(\frac{w^*\wedge K m_e}{m_e}\nu, \frac{(w^*-K m_e)^+}{m}\nu_e\right),
\end{align}
where $w^*=\inn{\chi}{\xi^*+\mu^*}$. The above regularity assumptions \eqref{eq:cond-initial}--\eqref{eq:cond-init-ssc} basically require that the sequence of initial states is well behaved. These assumptions, together with the heavy traffic assumptions \eqref{eq:HT-arrival}--\eqref{eq:cond-HT}, are made throughout the rest of this paper. 

The first result we present is an asymptotic relationship, called State Space Collapse (SSC), between the workload process and the head count process. Define a map $\Delta_K(\cdot):\R_+\to\R_+$ by
\begin{align}
\label{eq:delta}
\Delta_K(w) = \frac{w\wedge K m_e}{m_e} + \frac{(w-K m_e)^+}{m}. 
\end{align}
The SSC result states that the total number of jobs in the system $\ds X$ can be \emph{asymptotically} represented using the workload $\ds W$ via the map $\Delta_K$, which is a bijective map meaning that workload can also be represented using the total number of jobs. SSC is described as follows: 
\begin{proposition}[State Space Collapse]
\label{prop:state_space_collapse} For the sequence of Sd-LPS systems parametrized by $r \in \Z^+$ and satisfying initial conditions \eqref{eq:cond-initial}-\eqref{eq:cond-init-ssc}, as $r \to \infty$, 
  \begin{equation}
    \label{eq:SSC}
    \sup_{t\in[0,T]}\big|(\ds X(t)\wedge K)m_e+(\ds X(t)-K)^+m
    -\ds W(t)\big|\dto 0.
  \end{equation}
\end{proposition}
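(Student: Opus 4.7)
The plan is to decompose the workload via $W(t) = \inn{\chi}{\buf(t)} + \inn{\chi}{\ser(t)}$ and to show separately, at the diffusion scale, that $r^{-1}\inn{\chi}{\ssp\buf(r^2\cdot)}$ is close to $m\,\ds Q(\cdot)$ and $r^{-1}\inn{\chi}{\ssp\ser(r^2\cdot)}$ is close to $m_e\,\ds Z(\cdot)$. Since $\ds Z = \ds X\wedge K$ and $\ds Q = (\ds X - K)^+$ by definition, adding the two estimates delivers \eqref{eq:SSC} directly.

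The buffer estimate is essentially a functional law of large numbers applied to a renewal sum of job sizes. From \eqref{eq:stoc-dym-eqn-B}, $\inn{\chi}{\ssp\buf(t)} = \sum_{i=\ssp{B}(t)+1}^{\ssp\Lambda(t)} v_i$; since the $v_i$ are i.i.d.\ with mean $m$ and $\ssp\Lambda(r^2\cdot) - \ssp{B}(r^2\cdot) = \ssp{Q}(r^2\cdot)$ is $O(r)$ in diffusion scale, the usual random-time-change argument applied to the renewal sum $U(n) = \sum_{i=1}^n v_i$, composed with $\ssp B(r^2\cdot)$ and $\ssp\Lambda(r^2\cdot)$, yields $\sup_{t\in[0,T]} \bigl| r^{-1}\inn{\chi}{\ssp\buf(r^2 t)} - m\,\ds Q(t)\bigr| \dto 0$.

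The service piece is the crux of the proof. The goal is a measure-valued state space collapse $\ssp\ser(r^2 t) \approx \ssp{Z}(r^2 t)\,\nu_e$ at the diffusion scale, after which $\inn{\chi}{\nu_e} = m_e$ gives $r^{-1}\inn{\chi}{\ssp\ser(r^2 t)} \approx m_e\,\ds Z(t)$. Here I would follow the measure-valued version of Bramson's framework developed in \cite{ZhangDaiZwart2011}: (i) establish fluid-scale limits of $(\ssp\buf, \ssp\ser)$ as continuous solutions of a fluid model; (ii) characterize the invariant manifold $\mani$ of the fluid model as the one-parameter family of measures parametrized by workload $w$ of precisely the form appearing in \eqref{eq:cond-init-ssc}; (iii) show that fluid solutions reach $\mani$ in finite time; and (iv) lift this convergence to the diffusion scale via a multiplicative state space collapse argument, concluding that the measures stay within $o(1)$ of $\mani$ uniformly on $[0,T]$. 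The regularity hypotheses \eqref{eq:cond-initial}--\eqref{eq:cond-init-ssc} already place the initial measures on $\mani$, so one only needs to propagate this property in time.

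The main obstacle relative to the constant-rate LPS analysis of \cite{ZhangDaiZwart2011} is the state-dependent service rate: $\ssp{S}(s,t) = \int_s^t \psi(\ssp Z(\tau))\,d\tau$ is no longer linear in $t$, so one must verify that the $\ssp Z$-dependent drift does not distort the shape of $\ssp\ser$ to leading order. The redeeming features are the deterministic bound $\ssp Z \le rK$ and the heavy-traffic condition \eqref{eq:cond-HT}, which together give $\ssp\mu(\ssp Z) = \lambda m + O(1/r)$ uniformly in $\tau$; hence the per-job service rate $\psi(\ssp Z) = \ssp\mu(\ssp Z)/\ssp Z$ is a small perturbation of $\lambda m / \ssp Z$, and after diffusion scaling the perturbation is of smaller order than the fluctuations controlled by the MSSC machinery. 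With these estimates in hand, the fluid-model analysis and the MSSC lifting step of \cite{ZhangDaiZwart2011} go through essentially verbatim, and combining the two pieces yields \eqref{eq:SSC}.
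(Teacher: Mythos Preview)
Your proposal is essentially correct and follows the same route as the paper: derive the simple SSC from a full measure-valued collapse for the server component by importing Bramson's framework as implemented in \cite{ZhangDaiZwart2011}, together with the key observation that \eqref{eq:cond-HT} and $\ssp Z\le rK$ force $\ssp\mu(\ssp Z)=\lambda m+O(1/r)$, so the underlying fluid model coincides with the constant-rate one. Your decomposition into a buffer piece handled by a direct FLLN and a server piece handled by MSSC is a mild repackaging of the paper's full SSC (Theorem~\ref{thm:full-ssc}), which treats both measures simultaneously; either way the hard work lies in the server measure.

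One point deserves tightening. You assert that $\ssp Q(r^2\cdot)$ is $O(r)$ and that the MSSC machinery of \cite{ZhangDaiZwart2011} goes through ``essentially verbatim''; both of these rely on an a priori stochastic bound for the diffusion-scaled workload $\ds W$. In the constant-rate LPS model this bound is free, because the workload is that of a $G/G/1$ queue and converges to an RBM independently of any SSC. In the Sd-LPS model the workload dynamics \eqref{eq:workload-dynamics} depend on $\ssp Z$, so this shortcut is unavailable, and the paper explicitly flags this as a second obstacle beyond the one you identify. It is resolved by Lemma~\ref{lem:workload-upperbound}: using the uniform lower bound $\ssp\theta(\cdot)\ge\underline\theta>-\infty$ implied by \eqref{eq:cond-HT}, one dominates $\ds W$ by a reflected process with constant drift, which does converge to an RBM. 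With this bound in hand the tightness estimates (Lemma~\ref{lem:bound-estimation}) follow, and the remainder of your outline indeed proceeds as in \cite{ZhangDaiZwart2011}.
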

Note that \[ \Delta_K^{-1}(x) = (x \wedge K)m_e + (x-K)^+ m \] is the inverse of the map $\Delta_K(\cdot)$.
A full version of the SSC, which demonstrates a bijective map between the workload $\ds W$ and the measure-valued status $(\ds\buf,\ds\ser)$, is presented and proved in Appendix~\ref{sec:state-space-collapse}. 
Roughly speaking, SSC reveals that the residual sizes of jobs in service follow the equilibrium distribution $G_e$. 
The simpler SSC of Proposition~\ref{prop:state_space_collapse} can be derived from the full version proved in Appendix~\ref{sec:state-space-collapse}. For the purpose of performance analysis and for optimal control in this paper, we only need the simple version of SSC. 

The next step is the analysis of the workload process defined in \eqref{eq:workload-dynamics}. The challenge here is that the evolution of workload depends on the number of jobs in service due to the state-dependent service rate. The simple SSC result allows us to overcome this difficulty. The following theorem establishes the diffusion limit of the workload process $\ds W(t)$ and the number of jobs $\ds X$ as reflected Brownian motion (RBM) with state-dependent drifts. 

\begin{theorem}[Weak convergence to RBMs with state-dependent drift]
\label{thm:W_process_convergence}
For the sequence of Sd-LPS systems parametrized by $r \in \Z^+$ satisfying \eqref{eq:cond-initial}-\eqref{eq:cond-init-ssc}, as $r \to \infty$, 
  \begin{align}
    \label{eq:worload-difussion-limit}
    \ds W &\dto W^*,
  \end{align}
  where $W^*$ is an RBM with initial value $W^*(0)=w^*$, drift $-\theta\left(\Delta_K(W^*(t)) \wedge K\right)$ and variance $\sigma^2 = \lambda m^2(c_a^2+c_s^2)$. Moreover, as $r \to \infty$, 
  \begin{equation}
    \label{eq:total-jobs-diffusion-limit}
    \ds X \dto X^*= \Delta_K(W^*).
  \end{equation}
\end{theorem}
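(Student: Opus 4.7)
The plan is to derive a Skorokhod-type decomposition for the scaled workload $\ds W$, show that its centered driving term converges to a Brownian motion, and then invoke the state space collapse from Proposition~\ref{prop:state_space_collapse} to rewrite the drift in terms of $\ds W$ itself, yielding the state-dependent RBM in the limit.

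First, starting from \eqref{eq:workload-dynamics} and the diffusion scaling \eqref{eq:difusion-scaling}, I add and subtract $\lambda m$ inside the service integral. Using the identity $\ssp Z(r^2 u) = \ssp X(r^2 u)\wedge \ssp k = r(\ds X(u) \wedge K)$, the definition of $\ssp\theta$, and a change of variable $s = r^2 u$, I obtain the decomposition
\begin{equation*}
\ds W(t) = \ds W(0) + \ds M(t) - \int_0^t \ssp\theta(\ds X(u) \wedge K)\, du + \ds L(t),
\end{equation*}
where
\begin{equation*}
\ds M(t) = \frac{1}{r}\sum_{i=1}^{\ssp\Lambda(r^2 t)} (v_i - m) + \frac{m}{r}\bigl(\ssp\Lambda(r^2 t) - \lambda r^2 t\bigr)
\end{equation*}
is the centered driving term and $\ds L(t) = \frac{\lambda m}{r}\int_0^{r^2 t}\id{\ssp W(s) = 0}\,ds$ is a non-decreasing process that increases only when $\ds W = 0$. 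By the FCLT hypothesis \eqref{eq:HT-arrival} for the arrivals and Donsker's theorem for the i.i.d.\ job sizes, combined with a random-time-change argument, $\ds M \dto M^*$, a zero-drift Brownian motion with the variance $\sigma^2$ stated in the theorem.

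Next, to handle the nonlinear drift, I invoke Proposition~\ref{prop:state_space_collapse}, which gives $\ds X - \Delta_K(\ds W) \dto 0$ uniformly on compact intervals. Since $\ds X \wedge K$ takes values in the compact set $[0,K]$ and $\ssp\theta \to \theta$ uniformly there by \eqref{eq:cond-HT}, the integrand is uniformly bounded. This boundedness, together with the convergence of $\ds M$, yields tightness of $\ds W$ (with the regulator $\ds L$ controlled via the one-dimensional reflection map). Once tightness is in hand, SSC and the continuity of $\theta$ and $\Delta_K$ imply that the drift integral converges jointly to $\int_0^t \theta(\Delta_K(W^*(u)) \wedge K)\, du$ along any subsequential limit $W^*$, and any such limit satisfies
\begin{equation*}
W^*(t) = w^* + M^*(t) - \int_0^t \theta(\Delta_K(W^*(u)) \wedge K)\, du + L^*(t),
\end{equation*}
with $W^*\ge 0$, $L^*$ non-decreasing, and $L^*$ increasing only on $\{W^* = 0\}$.

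Since the drift coefficient $w\mapsto\theta(\Delta_K(w)\wedge K)$ is Lipschitz on bounded sets (as a composition of Lipschitz maps, using the local Lipschitz continuity of $\theta$ on $(0,\infty)$ and the Lipschitz property of $\Delta_K\wedge K$), classical results on one-dimensional reflected SDEs with state-dependent drift yield pathwise uniqueness and hence weak uniqueness of the limiting RBM $W^*$. This promotes subsequential convergence to convergence of the whole sequence, $\ds W \dto W^*$. The second conclusion, $\ds X \dto \Delta_K(W^*)$, then follows immediately from Proposition~\ref{prop:state_space_collapse} and the continuous mapping theorem applied to the continuous map $\Delta_K$. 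The main obstacle is passing to the limit in the state-dependent drift $\int_0^t \ssp\theta(\ds X(u)\wedge K)\, du$, which needs joint weak convergence of $(\ds W, \ds X)$ and uniform control of $\ssp\theta\to\theta$; the truncation at $K$ fortunately keeps the argument inside the compact set $[0,K]$, circumventing what would otherwise be a delicate concentration argument and reducing uniqueness for the limiting RBM to the classical Lipschitz regime.
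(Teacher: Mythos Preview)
Your argument is correct and uses the same ingredients as the paper's proof, but it is organized differently. The paper rewrites the drift directly in terms of $\ds W$ rather than $\ds X$: it inserts $\ssp\mu(r\Delta_K(\ds W(s))\wedge \ssp k)$ inside the service integral, producing two explicit error terms $\ds G_1$ (the discrepancy between $\ds Z$ and $\Delta_K(\ds W)\wedge K$, killed by Proposition~\ref{prop:state_space_collapse}) and $\ds G_2$ (the discrepancy between $\ssp\theta$ and $\theta$, killed by \eqref{eq:cond-HT} on the compact set furnished by a workload bound). Once the drift is expressed as $-\int_0^t \theta(\Delta_K(\ds W(s))\wedge K)\,ds$, the paper invokes a single Skorokhod-with-Lipschitz-drift mapping lemma (Lemma~\ref{lem-reflection-map}, from \cite{LeeWeerasinghe2011}) asserting that the solution map $x\mapsto z$ of $z(t)=x(t)+\int_0^t g(z(s))\,ds+y(t)$ is continuous; convergence of $\ds W$ to $W^*$ is then a one-line application of the continuous mapping theorem, with no separate tightness or subsequential step.

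By contrast, you keep the drift in terms of $\ds X\wedge K$, obtain tightness of $\ds W$ from the boundedness of $\ssp\theta$ on $[0,K]$ and convergence of the Brownian part, identify subsequential limits via SSC, and close with pathwise uniqueness for the limiting reflected SDE. This is the classical tightness/identification/uniqueness template, and it works here because the truncation at $K$ confines the argument of $\ssp\theta$ to a compact set. The trade-off: the paper's route is shorter and packages continuity and uniqueness into one lemma, whereas your route is more self-contained (it does not require citing the specialized mapping result) at the cost of an extra tightness argument. Either way, the second conclusion $\ds X\dto\Delta_K(W^*)$ follows identically from SSC and the continuous mapping theorem.
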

The proof of Theorem~\ref{thm:W_process_convergence} is presented in Appendix~\ref{sec:state-space-collapse}.

\subsection{Steady State of the Diffusion Limit}\label{sec:ss_diffusion}

The entire goal of heavy traffic analysis is to obtain a tractable process, an RBM with state-dependent drift, as an approximation of the complicated stochastic process underlying the original model. That is, the steady state of the limiting RBM can be computed. We begin with a basic result on the steady-state distribution of an RBM with state-dependent drift and variance (Lemma~\ref{lemma:RBM_ss}). We then use this lemma to derive the steady-state distribution and mean of the limiting workload and number of jobs ($W^*$ and $X^*$) in terms of the primitives of the original Sd-LPS model.

\begin{lemma}
\label{lemma:RBM_ss}
The stationary distribution of a one-dimensional RBM, $W$, with state-dependent drift $-\beta(\cdot)$ and state-dependent variance $s(\cdot)$ is given by
\begin{equation}
  \label{eq:RBM_ss}
  \prob{W(\infty) \leq w} 
   = \alpha  \int_{0}^w e^{-\int_0^u \frac{\beta(v) + \frac{1}{2}s'(v)}{\frac{1}{2}s(v)} dv} du 
  = \alpha \int_{0}^w \frac{1}{s(u)} e^{-\int_0^u \frac{\beta(v)}{\frac{1}{2}s(v)} dv} du,
\end{equation}
where $\alpha$ is a normalization constant.
\end{lemma}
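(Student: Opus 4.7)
The plan is to derive the stationary density from the Kolmogorov forward (Fokker--Planck) equation and then integrate. The RBM $W$ satisfies (in SDE form) $dW(t) = -\beta(W(t))\,dt + \sqrt{s(W(t))}\,dB(t) + dL(t)$, where $B$ is a standard Brownian motion and $L$ is the local time at $0$ providing reflection. For any $t>0$ and $w>0$, the density $p(w,t)$ of $W(t)$ evolves according to
\begin{equation*}
\frac{\partial p}{\partial t}(w,t) \;=\; \frac{\partial}{\partial w}\bigl[\beta(w)\,p(w,t)\bigr] \;+\; \frac{1}{2}\frac{\partial^2}{\partial w^2}\bigl[s(w)\,p(w,t)\bigr],
\end{equation*}
which I would rewrite in conservation form $\partial_t p = -\partial_w J$ with probability current $J(w,t) = -\beta(w)p(w,t) - \tfrac{1}{2}\partial_w[s(w)p(w,t)]$.

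Next, I would specialize to the stationary density $p(w)$, for which $\partial_t p\equiv 0$, so $J$ is constant in $w$. The reflecting boundary at $0$ and integrability at infinity force this constant to be zero: the reflection condition prevents any steady flux out of $\{0\}$, and $\lim_{w\to\infty}s(w)p(w)=0$ (together with $p,\beta$ bounded on compacts) rules out a nonzero constant. Hence $J(w)=0$ on $(0,\infty)$, which yields the first-order linear ODE
\begin{equation*}
\frac{d}{dw}\bigl[s(w)p(w)\bigr] \;=\; -\frac{2\beta(w)}{s(w)}\cdot\bigl[s(w)p(w)\bigr].
\end{equation*}
Setting $q(w) := s(w)p(w)$, this is simply $q'/q = -\beta/(\tfrac{1}{2}s)$, so
\begin{equation*}
q(w) \;=\; q(0)\exp\!\left(-\int_0^w \frac{\beta(v)}{\tfrac{1}{2}s(v)}\,dv\right), \qquad p(w) \;=\; \frac{q(0)}{s(w)}\exp\!\left(-\int_0^w \frac{\beta(v)}{\tfrac{1}{2}s(v)}\,dv\right).
\end{equation*}
Integrating $p$ from $0$ to $w$ and absorbing $q(0)$ into a normalization constant $\alpha$ (chosen so the total mass is $1$) gives the second equality in \eqref{eq:RBM_ss}.

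To obtain the first equality, I would observe that $1/s(u) = s(0)^{-1}\exp(-\int_0^u s'(v)/s(v)\,dv)$, so
\begin{equation*}
\frac{1}{s(u)}\exp\!\left(-\int_0^u \frac{\beta(v)}{\tfrac{1}{2}s(v)}\,dv\right) \;=\; \frac{1}{s(0)}\exp\!\left(-\int_0^u \frac{\beta(v)+\tfrac{1}{2}s'(v)}{\tfrac{1}{2}s(v)}\,dv\right),
\end{equation*}
and the constant factor $1/s(0)$ is again absorbed into $\alpha$, producing the first form. There are no deep obstacles here; the one point deserving care is the justification of the zero-flux boundary condition at $0$ for a reflected diffusion with state-dependent coefficients, which I would either cite from the standard theory (e.g., the Feller/Skorokhod construction of RBMs on the half-line) or verify directly by showing that a nonzero constant current is incompatible with $p\in L^1(0,\infty)$ given mild growth hypotheses on $\beta$ and $s$.
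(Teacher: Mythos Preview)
Your proof is correct, but it takes a different route from the paper's. The paper explicitly acknowledges the Kolmogorov/Fokker--Planck approach you use (citing Karlin--Taylor) and then chooses instead to give an alternative derivation via the \emph{basic adjoint relationship} (BAR): it applies It\^o's formula to $f(W(t))$ for a smooth test function $f$, takes expectations under the stationary law $\pi$, and obtains the identity
\[
\int_0^\infty\!\Bigl[\tfrac12 s(w)f''(w)-\beta(w)f'(w)\Bigr]\,d\pi(w) + f'(0)\,\frac{\E_\pi[Y(t)]}{t}=0.
\]
It then guesses the form of $d\pi$ and verifies, via a product-rule computation, that with this choice the integral collapses to a multiple of $f'(0)$, balancing the boundary term. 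Your approach is more direct for this one-dimensional result: the zero-flux condition $J\equiv 0$ immediately gives a separable first-order ODE for $s(w)p(w)$, with no guessing needed. The BAR approach, on the other hand, makes the role of the reflection term $Y$ explicit (you identify $\E_\pi[Y(t)]/t = \tfrac12 s(0)\alpha$ along the way) and generalizes more readily to settings where densities are awkward to work with, such as multidimensional RBMs.
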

The proof of this lemma is presented in Appendix~\ref{sec:state-space-collapse}. In our setting, the drift $\beta(w)=\theta\left(\Delta_K(w)\right)$ and the variance $s(w)=\sigma^2 = \lambda m^2(c^2_a+c^2_s)$ is a constant. Using Lemma~\ref{lemma:RBM_ss}, we immediately have the closed-form expression for the steady-state $W^*(\infty)$ of the diffusion limit $W^*$ in terms of the drift function $\theta(\cdot)$:
\begin{corollary}
\label{cor:gen_approximation}
With $W^*$ as defined in Theorem~\ref{thm:W_process_convergence},
\begin{align}
  \label{eq:W_RBM_ss}
  \prob{W^*(\infty) \leq w} 
  &= \begin{cases}
    \alpha \bigintss_0^w e^{-\frac{2\int_0^u \theta(v/m_e)}{\sigma^2}dv} du  
    & w \leq K \cdot m_e, \\
    \alpha \bigintss_0^w \left( e^{-\frac{2\int_0^{Km_e} \theta(v/m_e)}{\sigma^2} dv}  \right) 
    e^{ - \frac{2 \theta(K) (u-Km_e)}{\sigma^2}}  du
    & w > K\cdot m_e.
  \end{cases} 
\end{align}
\end{corollary}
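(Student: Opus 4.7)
The plan is to derive Corollary~\ref{cor:gen_approximation} as a direct specialization of Lemma~\ref{lemma:RBM_ss} to the state-dependent drift and constant variance that Theorem~\ref{thm:W_process_convergence} identified for $W^*$. First, I would take the drift and variance of the limiting RBM given by the theorem, namely $\beta(w) = \theta(\Delta_K(w)\wedge K)$ and $s(w) \equiv \sigma^2 = \lambda m^2(c_a^2+c_s^2)$, and simplify $\beta$ by unfolding the definition of $\Delta_K$: for $w \leq K m_e$ the argument $\Delta_K(w) = w/m_e$ lies in $[0,K]$ so the minimum with $K$ is inactive and $\beta(w) = \theta(w/m_e)$; for $w > K m_e$ the argument $\Delta_K(w) = K + (w-Km_e)/m$ exceeds $K$ so $\beta(w) = \theta(K)$.

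Next, I would plug these into the second expression of \eqref{eq:RBM_ss}. Since $s$ is constant, the factor $\tfrac{1}{s(u)} = \tfrac{1}{\sigma^2}$ pulls outside the integral and can be absorbed into the normalization constant $\alpha$, and the denominators inside the exponent collapse to $\sigma^2/2$. For $w \leq K m_e$ the inner integral $\int_0^u \beta(v)\,dv$ reduces to $\int_0^u \theta(v/m_e)\,dv$ on the entire range of integration, giving the first branch of the corollary directly.

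For $w > K m_e$, the inner integral $\int_0^u \beta(v)\,dv$ splits at $K m_e$: for $u \leq K m_e$ it agrees with the previous case, while for $u > K m_e$ it equals
\begin{equation*}
\int_0^{K m_e} \theta(v/m_e)\,dv + \theta(K)(u - K m_e).
\end{equation*}
Substituting this into the exponential factorizes the integrand on $(K m_e, w]$ as the product of the constant $\exp\bigl(-\tfrac{2}{\sigma^2}\int_0^{K m_e}\theta(v/m_e)\,dv\bigr)$ and $\exp\bigl(-\tfrac{2 \theta(K)(u - K m_e)}{\sigma^2}\bigr)$, which is exactly the second branch displayed in the corollary (with the contribution from $[0,K m_e]$ being carried over from the first branch or absorbed into the normalization as stated).

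There is no real obstacle here: the entire argument is a piecewise substitution into Lemma~\ref{lemma:RBM_ss}. The only point requiring any care is handling the normalization constant $\alpha$ so that the overall density integrates to one across both pieces and that the constant factor $1/\sigma^2$ and the $\int_0^{Km_e}$ constant are correctly bookkept; this is routine and produces the closed-form expression in \eqref{eq:W_RBM_ss}.
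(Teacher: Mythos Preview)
Your proposal is correct and matches the paper's own treatment: the paper states the corollary as an immediate consequence of Lemma~\ref{lemma:RBM_ss} with the drift $\beta(w)=\theta(\Delta_K(w)\wedge K)$ and constant variance $s(w)=\sigma^2$ identified in Theorem~\ref{thm:W_process_convergence}, which is exactly the substitution you carry out. Your piecewise unfolding of $\Delta_K(w)\wedge K$ and the splitting of the inner integral at $Km_e$ for the second branch are precisely the routine bookkeeping the paper leaves implicit.
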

To present the result for the limiting steady-state quantities in a form that is easier to apply in practice, we express the drift function $\theta(\cdot)$ as follows:
\begin{equation}
  \label{eq:theta-f}
  - \theta(x) = \lambda m \frac{d \log f(x)}{dx}
\end{equation}
where, recall, the function $f(\cdot)$ represents the derivative of a twice-differentiable interpolation of the distribution of steady-state number of jobs for the original Sd-LPS system with concurrency level $K$ under $M/M/$ input (see the discussion on the asymptotic regime in Section~\ref{sec:regime-scaling} preceding equation~\eqref{eqn:mu_condition_F}). We assume that $\frac{d \log f(x)}{dx} $ is a constant less than 0 on the interval $[K,\infty)$ (it is easy to verify that such an extension exists). It turns out that while we need $f(\cdot)$ to be differentiable to define $\theta(\cdot)$, the limiting steady-state distribution is well defined even without this condition.
Finally, we obtain the following result on the steady-state distribution of workload and number of jobs in the system:
\begin{proposition}
\label{prop:final_approximation} 
Let $W^*$ and $X^*$ be the workload and number of jobs for the limiting Sd-LPS system (as defined in Theorem~\ref{thm:W_process_convergence}). Let the drift function $\theta(x)$ be given by $-\theta(x) = \lambda m \frac{d\log f(x)}{dx}$. 

The steady-state distributions of $W^*$ and $X^*$ are given by
\begin{align}
	\label{eqn:W_RBM_ss}
	\prob{W^*(\infty) \leq w} &= \alpha \int_{0}^{\frac{w}{m_e}} f(x)^{\frac{c^2_s+1}{c^2_s+c^2_a}} dx, \\
  \label{eq:X_RBM_ss}
  \prob{X^*(\infty) \leq x} 
  &=
  \begin{cases}
    \alpha \int_0^x f(u)^{\frac{c^2_s+1}{c^2_s+c^2_a}} du & x\le K,\\
    \alpha \int_{0}^{K+(x-K)\frac{m}{m_e}} f(u)^{\frac{c^2_s+1}{c^2_s+c^2_a}} du  & x>K,
  \end{cases}
\end{align}
where $\alpha$ is the normalization constant. The mean of the limiting scaled number of jobs is given by
\begin{align}\label{eq:X-expectation-approx}
  \expct{X^*(\infty)} 
  &= \frac{ \int_{x=0}^{\infty} (x \wedge K) f(x)^{\frac{c^2_s+1}{c^2_s+c^2_a} }dx }
          {\int_{x=0}^{\infty} f\left( x\right)^{\frac{c^2_s+1}{c^2_s+c^2_a}} dx } 
   + \frac{c^2_s+1}{2} \cdot 
     \frac{ \int_{x=0}^{\infty} (x-K)^+ f(x)^{\frac{c^2_s+1}{c^2_s+c^2_a} }dx }
          {\int_{x=0}^{\infty} f\left( x\right)^{\frac{c^2_s+1}{c^2_s+c^2_a}} dx }.
\end{align}
\end{proposition}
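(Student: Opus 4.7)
The plan is to start from the closed form for $W^*(\infty)$ given in Corollary~\ref{cor:gen_approximation}, substitute in the relation $-\theta(x) = \lambda m \frac{d\log f(x)}{dx}$, and then transport the result to $X^*(\infty)$ using the identity $X^* = \Delta_K(W^*)$ from Theorem~\ref{thm:W_process_convergence}. The expectation formula then follows by directly integrating $\Delta_K$ against the density of $W^*(\infty)$ and substituting $m_e = \frac{c_s^2+1}{2}m$.

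For the workload distribution, I would first show that the two cases in \eqref{eq:W_RBM_ss} can be unified. In the region $w \le K m_e$ I would substitute $y = v/m_e$ so that $\int_0^u \theta(v/m_e)\,dv = m_e\int_0^{u/m_e}\theta(y)\,dy = -\lambda m\, m_e[\log f(u/m_e)-\log f(0)]$. For $w > K m_e$ I would use the hypothesis that $\frac{d\log f}{dx}$ is constant on $[K,\infty)$, which means $\theta(y) \equiv \theta(K)$ there, so the linear term $\theta(K)(u-K m_e)$ equals $m_e\int_K^{u/m_e}\theta(y)\,dy$ and the full exponent again reduces to $-\lambda m\, m_e[\log f(u/m_e) - \log f(0)]$. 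With $\sigma^2=\lambda m^2(c_a^2+c_s^2)$ and $m_e=\frac{c_s^2+1}{2}m$, the factor $\frac{2 m_e}{m(c_a^2+c_s^2)}$ collapses to the exponent $\frac{c_s^2+1}{c_a^2+c_s^2}$. A final change of variable $x = u/m_e$ and absorption of constant factors into $\alpha$ gives \eqref{eqn:W_RBM_ss}.

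For the distribution of $X^*(\infty)$, since $\Delta_K$ is a strictly increasing bijection on $\R_+$, I would write $\prob{X^*(\infty)\le x} = \prob{W^*(\infty)\le \Delta_K^{-1}(x)}$ with $\Delta_K^{-1}(x) = x m_e$ for $x\le K$ and $\Delta_K^{-1}(x) = K m_e + (x-K)m$ for $x>K$. Plugging these into \eqref{eqn:W_RBM_ss} and dividing the upper limit by $m_e$ gives the two cases in \eqref{eq:X_RBM_ss}.

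For the mean, I would compute $\expct{X^*(\infty)} = \expct{\Delta_K(W^*(\infty))}$ by integrating against the density $\alpha f(w/m_e)^{(c_s^2+1)/(c_a^2+c_s^2)}$ implied by \eqref{eqn:W_RBM_ss}. A substitution $x = w/m_e$ pulls out a factor $m_e$ and turns $\Delta_K(w)$ into $(x\wedge K) + (x-K)^+\frac{m_e}{m}$; the ratio $m_e/m = \frac{c_s^2+1}{2}$ produces the prefactor in the second term of \eqref{eq:X-expectation-approx}. Normalizing by $\int_0^\infty f(x)^{(c_s^2+1)/(c_a^2+c_s^2)}\,dx$ to eliminate $\alpha m_e$ delivers the stated formula. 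The main delicate step is the case $w>K m_e$ in the workload identity: without the assumption that $\theta$ is constant beyond $K$ (equivalently that $\frac{d\log f}{dx}$ is constant on $[K,\infty)$), the two pieces of \eqref{eq:W_RBM_ss} would not merge into a single integral of $f^{(c_s^2+1)/(c_a^2+c_s^2)}$, so this is where I would be most careful to cite the standing assumption made immediately before the proposition.
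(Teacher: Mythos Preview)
Your proposal is correct and follows essentially the same route as the paper's proof: both start from the stationary RBM formula (you via Corollary~\ref{cor:gen_approximation}, the paper directly from Lemma~\ref{lemma:RBM_ss}), invoke the constancy of $\frac{d\log f}{dx}$ on $[K,\infty)$ to merge the two workload cases into a single integral of $f^{(c_s^2+1)/(c_a^2+c_s^2)}$, then push through $\Delta_K^{-1}$ for the distribution of $X^*(\infty)$ and integrate $\Delta_K$ against the density of $W^*(\infty)/m_e$ for the mean. Your identification of the merging step as the one place that genuinely needs the standing assumption is exactly right.
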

We have thus obtained closed-form formulae (approximations) for steady-state quantities based on the limiting diffusion process. The approximation \eqref{eqn:EN_static_approx} at the beginning of this section is obtained from \eqref{eq:X-expectation-approx} by further using the probability mass function, $\pi(\cdot)$, for the number of jobs corresponding to the original Sd-LPS system in place of the density function $f(\cdot)$.

We now close the loop by translating the convergence at the process level to convergence of steady-state distributions in the following theorem. The proof is presented in Section~\ref{lem-reflection-map}. This justifies the formulae in Proposition~\ref{prop:final_approximation} as an approximation for the steady state of the original Sd-LPS system. The quality of the approximation is demonstrated in the numerical experiment presented at the beginning of this section (see Figure~\ref{fig:performance_static}).

\begin{theorem}[Convergence of steady-state distributions]
\label{thm:XW_ss_convergence}
For all large enough $r$, the stochastic process $\ds{X}$ has a steady state, denoted by $\ds{X}(\infty)$. Moreover, 
  \begin{align*}
    \ds W(\infty) &\dto W^*(\infty),\\
    \ds X(\infty) &\dto X^*(\infty),
  \end{align*}
  where $W^*(\infty)$ and $X^*(\infty)$ are characterized in \eqref{eq:W_RBM_ss} and \eqref{eq:X_RBM_ss}. 
\end{theorem}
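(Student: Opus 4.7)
The plan is to apply the standard interchange-of-limits paradigm: existence plus uniform tightness of the prelimit stationary distributions, combined with uniqueness of the stationary distribution of the limiting RBM, will yield the desired convergence.

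First, I would establish stability of the $r$th system for all $r$ sufficiently large. Whenever $\ds X(t) \geq K$, the server operates at the saturated rate $\ssp{\mu}(rK) = \lambda m + \ssp{\theta}(K)/r$; by \eqref{eq:cond-HT} and \eqref{eq:cond-stability}, this drain rate strictly exceeds the arrival rate $\lambda m$ for all $r$ large. Coupling the Sd-LPS workload pathwise from above by the workload of a $GI/GI/1$ queue with service rate $\ssp{\mu}(rK)$ yields positive recurrence of the measure-valued Markov process, hence existence of the stationary distribution $\ds X(\infty)$. The same coupling, together with Kingman-type heavy-traffic bounds for $GI/GI/1$, shows that the laws of $\ds W(\infty)$ are uniformly tight in $r$ (and dominated in the limit by an exponential with rate $2\theta(K)/\sigma^2$); via Proposition~\ref{prop:state_space_collapse} the laws of $\ds X(\infty)$ are uniformly tight as well.

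Next, take any subsequential weak limit $\tilde W$ of $\ds W(\infty)$ and use it as the initial distribution for the $r$th process along the convergent subsequence. Theorem~\ref{thm:W_process_convergence} then gives path-level convergence, and stationarity of $\ds W$ passes to the limit, so $\tilde W$ is the one-dimensional marginal of a stationary version of the limiting state-dependent RBM $W^*$. Because $W^*$ has drift bounded below by $\theta(K) > 0$ on $\{w \geq K m_e\}$ and constant variance $\sigma^2$, it is positive recurrent with the unique stationary distribution \eqref{eq:W_RBM_ss} identified in Corollary~\ref{cor:gen_approximation}. Hence $\tilde W \stackrel{d}{=} W^*(\infty)$, and since every convergent subsequence of the tight family has the same weak limit, $\ds W(\infty) \dto W^*(\infty)$. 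Applying the continuous map $\Delta_K$ via SSC then yields $\ds X(\infty) \dto \Delta_K(W^*(\infty)) = X^*(\infty)$.

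The main obstacle is that Theorem~\ref{thm:W_process_convergence} is stated under the regularity conditions \eqref{eq:cond-initial}-\eqref{eq:cond-init-ssc} on the initial measure-valued state, which require $(\ds\buf(0),\ds\ser(0))$ to already lie on the SSC manifold described by \eqref{eq:cond-init-ssc}. To apply the process-level theorem when the $r$th process is initialized from its stationary law, I must verify that the stationary law of $(\ds\buf(\infty),\ds\ser(\infty))$ concentrates on this manifold as $r \to \infty$. This is handled by invoking the full measure-valued SSC proved in the appendix at a large but fixed time $t$, and then transferring the conclusion back to time $0$ by stationarity; the uniform workload tightness obtained in the first step ensures no mass escapes in the limit and the requisite $(1+p)$-moment condition \eqref{eq:cond-initial-u} can be verified from the exponential domination above. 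Once the initial conditions are verified, the remainder is the soft combination outlined above.
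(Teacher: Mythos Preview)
Your overall architecture (tightness of prelimit stationary laws, subsequential limits, uniqueness of the stationary distribution of the limiting RBM, then SSC for $X$) is exactly the route the paper takes, and the paper in fact defers the soft combination you describe to \cite{GamarnikZeevi2006}. The difference lies in how tightness is obtained, and there your coupling has a genuine gap.

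You propose to dominate the Sd-LPS workload pathwise by a $GI/GI/1$ queue draining at rate $\ssp{\mu}(rK)$. This fails: when $Z^{(r)}(t)<rK$ the Sd-LPS server runs at $\ssp{\mu}(Z^{(r)}(t))$, and nothing in the hypotheses (only \eqref{eq:cond-stability}, i.e.\ $\theta(K)>0$) prevents $\ssp{\mu}(z)<\ssp{\mu}(rK)$ or even $\ssp{\mu}(z)<\lambda m$ for some $z<rK$. In those states the Sd-LPS workload drains \emph{slower} than your $GI/GI/1$, so the pathwise ordering breaks. The paper's fix is to couple not with a $GI/GI/1$ but with an auxiliary LPS system that serves at rate $\ssp{\mu}(rK)$ whenever it has at least $rK$ jobs and \emph{shuts down completely} otherwise; this forces the auxiliary's queue length to dominate the original's at all times. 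The auxiliary is then analyzed as $rK$ parallel renewal servers and uniform first-moment bounds for its stationary queue are extracted from the Lyapunov estimates of \cite{LeeWeerasinghe2011} (checking their condition (A8.p)), not from a Kingman-type inequality.

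A secondary issue: you pass from tightness of $\ds W(\infty)$ to tightness of $\ds X(\infty)$ via Proposition~\ref{prop:state_space_collapse}, but that proposition is a finite-horizon process statement and using it in stationarity is exactly the interchange you are trying to prove. The paper sidesteps this by bounding the queue length $\ds Q$ directly through the coupling and then using the trivial inequality $\ds X\le \ds Q+\ssp{k}/r$, which needs no SSC.
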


\section{Dynamic concurrency control for the Sd-LPS queue}
\label{sec:dynamic_control}

In Section~\ref{sec:static_control}, we established approximations for the steady-state number of jobs and workload in an Sd-LPS system operating under a \emph{static} concurrency level. 
Our numerical experiments showed that the optimal static level based on the approximations yields near-optimal performance for the original Sd-LPS system. In this section we go further by allowing a \emph{dynamically adjustable} concurrency level. 

In Section~\ref{sec:setup_dynamic} we first summarize our translation of the discrete state space control problem for the original system to a continuous state space diffusion control problem, and the translation of the resulting control back to that for the original Sd-LPS system. To demonstrate the efficacy of our approach, we present results of numerical experiments comparing the performance of the proposed diffusion limit based control policies against the true optimal dynamic control policy for a special non-trivial input process for which the true optimal policy can be computed numerically.
In Section~\ref{sec:diff_control_prob} we formulate the diffusion control problem and show how this can help solve the dynamic control problem for the original system. 
We then describe two novel numerical algorithms to solve the diffusion control problem: an algorithm that iteratively refines its estimate of the average cost of the optimal policy using binary search in Section~\ref{sec:bin_search}, and an algorithm that uses the Newton-Raphson root finding method to search for the average cost of the optimal policy in Section~\ref{sec:Newton}.  


\subsection{Overview of our approach and Simulation results} \label{sec:setup_dynamic}
The following steps outline our approach to obtaining a heuristic dynamic control policy for the original Sd-LPS server:
\begin{enumerate}
\item Convert the discrete service rate vector $\mu(i)$ for the original state-dependent PS server into a drift function according to \eqref{eqn:mu_condition_mu}:
\begin{equation*}
  \theta(x) \doteq -\lambda m \log{\frac{\lambda m}{\hat{\mu}(x)}},
\end{equation*}
where $\hat{\mu}$ is a continuous extension of $\mu(i)$.
\item Formulate a diffusion control problem to minimize the steady-state mean number of jobs. The action/control will be the concurrency level as a function of the state. For convenience, we frame the diffusion control problem with workload as the state variable since the variance of workload is a constant (i.e., independent of state or action), and the control affects the drift of the workload through $\theta(x)$. 
\item Given $k^*(w)$, the optimal concurrency level as a function of the workload for the diffusion control problem, to obtain a control policy for the original (discrete) Sd-LPS system, we first obtain a control function $\widetilde{k}(w)$ with discrete concurrency levels by rounding $k(w)$ to the nearest integer for all $w$. The control algorithm is implemented by using $\widetilde{W}(t) = m_e  Z(t) + mQ(t)$ for the original system as the proxy for the current workload, and then taking action to reach the concurrency level dictated by the diffusion control problem: $\widetilde{k}(\widetilde{W}(t))$. In controling the original system we only take actions upon job arrivals and departures, do not preempt jobs once they enter service, and do not increase the concurrency level by more than one in any arrival/departure event. The precise policy is given as follows:
\begin{itemize}
\item \textbf{On arrival at $t$:} Let $\widetilde{W} = m_e Z(t_-)+m(Q(t_-)+1)$, where $(Z(t_-),Q(t_-))$ denotes the system state immediately before the event. If $\widetilde{k}(\widetilde{W}) \geq (Z(t_-)+1)$ then admit one job to the server at $t$, otherwise do nothing.
\item \textbf{On departure at $t$:} Let $\widetilde{W} = m_e (Z(t_-)-1) + mQ(t_-)$. Admit $\min\big\{ \big(\widetilde{k}(\widetilde{W}) - Z(t_-)+1\big)^+, 2\big\}$ jobs at $t$.
\end{itemize}
\end{enumerate}

\subsection*{Simulation Results}\label{sec:sim_dynamic}

Table~\ref{table:dynamic_results} shows experimental results comparing the performance of the dynamic policies produced using the proposed diffusion scaling and the true optimal dynamic policy.  We focus on a special class of input processes: Poisson arrivals and a degenerate Hyperexponential job size distribution (a mix of a point mass at 0 and an Exponential distribution). This allows us to compute optimal dynamic policies using the algorithm proposed by \cite{PSMPL_paper}. The dynamic policy for the diffusion control problem was computed using the Newton-Raphson method (Algorithm~\ref{alg:Newtons_method}, Section~\ref{sec:Newton}). The service rate curve is the one shown in Figure~\ref{fig:mu_function_quad}, which gives the drift function as
\begin{align}
- \theta(x) \doteq \lambda m \log \frac{\lambda m}{\hat{\mu}(x)} \doteq \lambda m \log \frac{\lambda m}{1.25 - \frac{x^2}{150}}.
\end{align}
We used MATLAB's \texttt{ode45} function to solve the differential equations involved in Algorithm~\ref{alg:Newtons_method}. The performance of the diffusion control policy was evaluated by simulating it for a Poisson arrival process and Hyperexponential job size distribution with the indicated $c^2_s$. 

For each of the six cases shown, the steady-state mean number of jobs for the diffusion control heuristic is within $2\%$ of the optimal dynamic policy, demonstrating the validity of our proposed scaling for computing control policies for Sd-LPS systems across a range of traffic intensities.

\begin{table}[ht]
\renewcommand*{\arraystretch}{1.2}
\centering
\begin{tabular}{|c|c||c|c|c|}
\hline
\multicolumn{2}{|c||}{} & \multicolumn{3}{c|}{Steady-state mean number of jobs $\expct{N}$} \\
\cline{3-5}
\multicolumn{2}{|c||}{} & Opt. dynamic policy & Diffusion control policy & Suboptimality (\%)\\
\hline
\hline
\multirow{3}{*}{$c^2_s=4$} & $\lambda = 0.7$ & 1.739 & 1.744 & 0.29 \\ 
\cline{2-5}
 & $\lambda = 0.8$ & 2.854 & 2.885 & 1.09\\ 
\cline{2-5}
& $\lambda = 0.9$ & 4.873 & 4.893 & 0.41 \\
\hline
\multirow{3}{*}{$c^2_s=19$} & $\lambda = 0.7$ & 2.90 & 2.94 & 1.38 \\
\cline{2-5}
 & $\lambda = 0.8$ & 6.51 & 6.63 & 1.84 \\
\cline{2-5}
& $\lambda = 0.9$ & 14.24 & 14.33 & 1.01 \\
\hline
\end{tabular}
\caption{Simulation results comparing the performance of dynamic policies for Poisson arrivals with rate $\lambda$ and a degenerate hyperexponential ($H^*$) job size distribution with $m=1$ and SCV $c^2_s$. The first column shows the steady-state mean number of jobs for the optimal dynamic policy. The second column shows the same metric for the heuristic policy obtained from the diffusion control problem. For each case, the diffusion control policy yields an $\expct{N}$ of at most 2\% larger than the optimal policy.}
\label{table:dynamic_results}
\end{table}

\subsection{The diffusion control problem} \label{sec:diff_control_prob}
In this section we set up the diffusion control problem for dynamic concurrency control of the LPS server. We begin by generalizing the scaling of the concurrency limit $\ssp{k}$ given in \eqref{eq:K-r} so that it becomes a function of the workload in the system:
\begin{equation}
  \label{eq:K-r-dyn}
  \ssp{k}(\ssp{W}(t)) = r k\left(\frac{\ssp{W}(t)}{r} \right),
\end{equation}
where $k:\R_+\to\R_+$ and $k(w)\le w/m_e$ for any $w$. The restriction $k(w) \leq w/m_e$ on the choice of concurrency level is driven by the state space collapse (see Conjecture~\ref{thm:W_process_convergence-dyn}).

The objective is to find the optimal state-dependent concurrency level function $k(\cdot)$.  For technical reasons, we restrict our consideration to the following family of dynamic controls
\begin{equation}
  \label{eq:policy-family-dyn}
  \mathcal K = \left\{ k:\R_+\to\R_+| k(w)\le w/m_e ; \textrm{  $k$ is Lipschitz continuous} ; \int_{v=0}^\infty e^{-\int_0^v \theta(k(w)) dw} dv < \infty \right\}. 
\end{equation}
The Lipschitz continuity requirement is for technical reasons, and the last condition above is only to ensure that a stationary distribution for the diffusion-scaled workload under $k(w)$ exists. We use the same heavy traffic regime as in Section~\ref{sec:diffusion_analysis} except that stability condition \eqref{eq:cond-stability} is replaced by 
\begin{align}
  \label{eq:cond-stability-dyn}
\sup_{x\in[0,M]} \theta(x) > 0,
\end{align}
for some $M<\infty$. That is, a service rate strictly larger than the arrival rate is achievable at a finite concurrency limit and hence at a finite workload. In fact, we will make a stronger assumption on $\sup_{x} \theta(x)$. Define
\begin{align*}
\hat \theta \doteq \sup_{x\in\R_+} \theta(x) \ ;\quad \hat{k} \doteq \argmax_{k} \{ \theta(k) \}.
\end{align*}
Here $\hat{k}$ denotes the most efficient concurrency level, which we will assume to be finite.
For any $k\in\mathcal K$, define the mapping $\Delta_k:\R_+\to\R_+$ by 
\begin{equation}
  \label{eq:map-delta-dyn}
  \Delta_k(w) = \frac{w \wedge k(w)m_e}{m_e} + \frac{(w-k(w)m_e)^+}{m}.
\end{equation}
Note that we use $\Delta_K$ to denote the mapping under a static concurrency level $K$, and $\Delta_k$ to denote the mapping under a dynamic concurrency policy $k\in\mathcal K$.
Extending the diffusion limit result in Section~\ref{sec:diffusion_analysis}, we have the following conjecture:

\begin{conjecture}[Diffusion limits under a dynamic policy]
\label{thm:W_process_convergence-dyn}
For the sequence of Sd-LPS systems parametrized by $r \in \Z^+$ under the dynamic policy \eqref{eq:K-r-dyn} for some $k\in\mathcal K$, as $r \to \infty$, 
  \begin{align}
    \label{eq:worload-difussion-limit-dyn}
    \ds W &\dto W^*,
  \end{align}
  where $W^*$ is an RBM with initial value $W^*(0)=w^*$, drift $-\theta\left(\Delta_K(W^*(t))\wedge k(W^*(t)) \right)$ and variance $\sigma^2 = \lambda m^2(c_a^2+c_s^2)$. Moreover, as $r \to \infty$, 
  \begin{equation}
    \label{eq:total-jobs-diffusion-limit-dyn}
    \ds X \dto X^* = \Delta_k(W^*).
  \end{equation}
\end{conjecture}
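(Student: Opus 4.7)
\textbf{Proof Proposal for Conjecture~\ref{thm:W_process_convergence-dyn}.}
The plan is to mirror the structure used for Theorem~\ref{thm:W_process_convergence} (static case) while replacing the constant $K$ by the state-dependent policy $k(\cdot)$, exploiting the Lipschitz assumption in \eqref{eq:policy-family-dyn} to control the fluctuations introduced by letting the concurrency limit vary with the (diffusion-scaled) workload. First I would establish a dynamic analogue of the state-space-collapse result in Proposition~\ref{prop:state_space_collapse}, namely
\begin{equation*}
\sup_{t\in[0,T]}\bigl|\Delta_k\bigl(\ds W(t)\bigr)-\ds X(t)\bigr|\dto 0,
\end{equation*}
together with the full measure-valued SSC saying that, asymptotically, the residual size distribution of the jobs in service is $G_e$ and that of the jobs in the buffer is $G$. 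The argument would follow the outline used in Appendix~\ref{sec:state-space-collapse} for the static case: since $k(\cdot)$ is Lipschitz, over any diffusion-scaled time interval of length $\delta$ the concurrency cap $\ssp k(\ssp W(r^2 s))$ moves by $O(r\delta)$ on the unscaled scale, so the ``quasi-static'' arguments used to show that the residual work equilibrates to $G_e$ go through with an additional error term of order $\delta$, which vanishes after taking $r\to\infty$ and $\delta\downarrow0$.

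Second, using the workload dynamics \eqref{eq:workload-dynamics}, decompose
\begin{equation*}
\ds W(t)=\ds W(0)+\underbrace{\tfrac{1}{r}\Bigl[\sum_{i=1}^{\ssp\Lambda(r^2t)}v_i-\lambda m\cdot r^2 t\Bigr]}_{=:\ds M(t)}
-\tfrac{1}{r}\int_0^{r^2 t}\bigl[\ssp\mu(\ssp Z(s))-\lambda m\bigr]\,ds+\ds Y(t),
\end{equation*}
where $\ds Y(t)=\tfrac{1}{r}\int_0^{r^2 t}\ssp\mu(\ssp Z(s))\,\mathbf 1_{\{\ssp W(s)=0\}}\,ds$ is a non-decreasing process supported on $\{\ds W=0\}$. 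A standard FCLT argument for the renewal arrival process combined with Donsker's theorem for the i.i.d.\ job sizes (together with random time change) gives $\ds M\dto\sigma\,B$, a Brownian motion with $\sigma^2=\lambda m^2(c_a^2+c_s^2)$. For the drift term, a change of variables $s=r^2 u$ followed by SSC and the heavy-traffic condition \eqref{eq:cond-HT} yields
\begin{equation*}
\tfrac{1}{r}\int_0^{r^2 t}\bigl[\ssp\mu(\ssp Z(s))-\lambda m\bigr]\,ds=\int_0^t\ssp\theta\bigl(\ds Z(u)\bigr)\,du\;\longrightarrow\;\int_0^t\theta\bigl(\Delta_k(W^*(u))\wedge k(W^*(u))\bigr)\,du,
\end{equation*}
where I have used that $\ds Z=\ds X\wedge \ds k(\ssp W)$ and the dynamic SSC. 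The uniform convergence on compacts of $\ssp\theta\to\theta$, the Lipschitz property of $k$ and the continuity of $\Delta_k$ make this limit pass through the integral.

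Third, I would invoke an extension of the one-dimensional reflection map (Skorokhod problem on $\R_+$) applied to the triple $(\ds W(0)+\ds M,\,\text{drift integral},\,\ds Y)$ to identify the limit. The resulting SDE
\begin{equation*}
W^*(t)=w^*+\sigma B(t)-\int_0^t\theta\bigl(\Delta_k(W^*(s))\wedge k(W^*(s))\bigr)\,ds+Y^*(t),
\end{equation*}
with $Y^*$ the local time at $0$, has a pathwise unique strong solution because the drift coefficient is Lipschitz: indeed, $\Delta_k$ and $k$ are Lipschitz (the former because it is built out of piecewise-linear compositions of the Lipschitz function $k$), and $\theta$ is locally Lipschitz by assumption. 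Finally, \eqref{eq:total-jobs-diffusion-limit-dyn} follows from \eqref{eq:worload-difussion-limit-dyn} together with dynamic SSC and the continuous mapping theorem applied to the continuous map $w\mapsto\Delta_k(w)$.

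\textbf{Main obstacle.} The hardest step is the dynamic SSC. In the static case, the concurrency cap $K$ is fixed, which lets one argue that the ``snapshot principle'' for residual service times holds uniformly. When $k$ depends on $\ds W$, one must rule out the possibility that rapid oscillations of $\ds W$ across thresholds where $k(w)$ changes produce macroscopic deviations in the residual-work profile of the server. Handling this cleanly seems to require either a moment bound on the number of times $\ds W$ crosses a given level in $[0,T]$ or a coupling with a ``frozen-$k$'' system on short intervals and then a concatenation argument that uses Lipschitz continuity of $k$ to glue the pieces together with controlled error. This is presumably the reason the authors state the result as a conjecture rather than a theorem.
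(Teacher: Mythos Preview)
The paper does not prove this statement: it is explicitly left as a conjecture, with the authors remarking that ``the key to proving this conjecture is to extend the state space collapse result to allow a dynamic concurrency level and analyze the underlying fluid model \ldots\ Due to the technical intricacies involved, proving the conjecture is beyond the scope of this paper.'' So there is no proof to compare against.

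That said, your outline is faithful to what the paper suggests would be required. You follow the same template as the static case (Theorem~\ref{thm:W_process_convergence}): establish a dynamic version of SSC, decompose the workload into martingale, drift, and reflection terms, pass to the limit using the Skorokhod map with a Lipschitz state-dependent drift, and deduce the head-count limit by continuous mapping. This is exactly the route the authors gesture at. You also correctly isolate the genuine gap --- the dynamic state-space collapse --- and your diagnosis of why it is hard (the concurrency cap now moves with the diffusion-scaled workload, so the ``snapshot'' equilibration of residual sizes to $G_e$ must be argued while the target is shifting) matches the authors' own assessment that extending the fluid-model analysis of \cite{ZhangDaiZwart2011} is the crux. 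Your proposed remedies (freezing $k$ on short intervals and gluing via Lipschitz continuity, or controlling level crossings) are reasonable heuristics, but neither the paper nor you supplies the actual argument; this remains open.
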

In other words, we conjecture that the state space collapse result still holds and the state-dependent concurrency level function $k(\cdot)$ only plays a role in modifying the drift of the diffusion limit of the workload.
The key to proving this conjecture is to extend the state space collapse result to allow a dynamic concurrency level and analyze the underlying fluid model (as in \cite{ZhangDaiZwart2011}). Due to the technical intricacies involved, proving the conjecture is beyond the scope of this paper. Instead, we focus on utilizing the conjectured diffusion limit to identify a near-optimal policy for the original LPS system. 


As mentioned earlier, we will formulate the diffusion control problem with the limiting workload process $W^*$ as the state variable and the map $\Delta_k(\cdot)$ as the state-dependent cost function (using the state space collapse conjecture \eqref{eq:total-jobs-diffusion-limit-dyn}). There are two reasons for choosing $W^*$ over the head count process $X^*$ as the state variable: $(i)$ the variance of $X^*$ is state-dependent making the computation more complicated, while it is a constant for $W^*$, and $(ii)$ headcount does not carry enough information since two different states $(Q_1, Z_1)$ and $(Q_2,Z_2)$ along the state-space collapse trajectory  may have the same head count but different workloads. Therefore the control is not uniquely obtained as a function of the number of jobs in the system.

Let $V_{\gamma}(w)$ denote the discounted total cost (with discount rate $\gamma$) for the limiting process $W^*$ under a control policy $k(\cdot)$ when the workload starts in state $w$:
\begin{align}
  \label{eqn:Bellman_eqn}
  V_{\gamma}(w) &= \E_w \Big[\int_{0}^{\infty} e^{-\gamma t} \Delta_{k}(W^*(t)) dt\Big].
\end{align}

\noindent \textbf{Optimality Equations}\\
Consider a small $\delta>0$. According to It\={o} calculus
\begin{align*}
  V_\gamma(w) &= \Delta_k(w) + (1-\gamma\delta) \E\left[V_\gamma\left(W^*(\delta)\right)\right]+o(\delta)\\
  &= \Delta_k(w) + (1-\gamma\delta) \E\left[V_\gamma(w)+V'_\gamma(w)(W^*(\delta)-w)
  +\frac{V''_\gamma(w)}{2}(W^*(\delta)-w)^2+o(\delta)\right]+o(\delta)\\
  &= \Delta_k(w) + (1-\gamma\delta) \left[V_\gamma(w)+V'_\gamma(w)\theta(k(w))\delta
  +\frac{V''_\gamma(w)}{2}\sigma^2\delta\right]+o(\delta),
\end{align*}
where, recall,  $\sigma^2 \doteq \lambda m^2 (c^2_s+c^2_a)$. 
We thus have the following relation for the discounted value function $V_\gamma$:
\begin{equation}
  \label{eq:Bellman}
  \gamma V_\gamma(w) = \Delta_k(w) - \theta(k(w)) V'_\gamma(w) + \frac{\sigma^2}{2} V''_\gamma(w).
\end{equation}
Letting $\gamma \to 0$, define
\begin{align*}
v = \lim_{\gamma \to 0} \gamma V_\gamma(w) \ , \ \mbox{ and }\  G(w) = \lim_{\gamma \to 0} V'_\gamma(w),
\end{align*}
where $v$ is the average cost of policy $k(\cdot)$, and the value function gradient $G(w)$ solves the following ordinary differential equation (ODE):
\begin{align}
\label{eqn:G_ODE}
v &= \Delta_k(w) - \theta(k(w)) G(w) + \frac{\sigma^2}{2} G'(w).
\end{align}
Above, we have provided a heuristic derivation to arrive at the average cost optimal control problem as a limit of the discounted cost problem. For a formal treatment of the relation between discounted and average cost problems (i.e., by defining discounted relative cost functions $h_\gamma(w) = V_\gamma(w)-V_\gamma(\tilde{w})$ for some positive recurrent state $\tilde{w}$, taking limit $h(w) = \lim_{\gamma \downarrow 0} h_\gamma(w)$ and $v = \lim_{\gamma \downarrow 0} \gamma V_{\gamma}(\tilde{w})$), we refer readers to \cite[Chapter 5]{Hernandez-LermaDiscreteMCP}, \cite{BertsekasDP}.

\begin{proposition}
\label{prop:monotonic_V}
The discounted value function $V_{\gamma}(w)$ is non-decreasing in $w$ for all $\gamma$, and hence $G(w) \geq 0$. 
\end{proposition}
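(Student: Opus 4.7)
The plan is to prove the stronger fact that $V_\gamma'(w)\geq 0$ for every $\gamma>0$; non-negativity of $G(w)=\lim_{\gamma\downarrow 0}V_\gamma'(w)$ then follows by passing to the limit.  Two ingredients are needed: a pathwise coupling of the reflected diffusion, and a contradiction argument based on the valuation equation \eqref{eq:Bellman}.

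\emph{Step 1 (Coupling).} First I would fix $w_1<w_2$ and couple two copies $W_1,W_2$ of the controlled RBM from Conjecture~\ref{thm:W_process_convergence-dyn} driven by the same Brownian motion and using the same admissible policy $k\in\mathcal K$, starting from $w_1$ and $w_2$ respectively.  Because $k$ is Lipschitz by the definition of $\mathcal K$ and $\theta$ is locally Lipschitz by \eqref{eq:cond-HT}, the drift coefficient $-\theta\circ k$ is Lipschitz on $\mathbb R_+$, so a standard pathwise comparison theorem for one-dimensional reflected SDEs gives $W_1(t)\leq W_2(t)$ almost surely for every $t\geq 0$.

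\emph{Step 2 (Contradiction via the valuation equation).}  Suppose $V_\gamma$ is not non-decreasing.  The reflection at zero gives the Neumann boundary condition $V_\gamma'(0)=0$, and the decomposition $\Delta_k(w)=w/m+(1-m_e/m)k(w)$ together with admissibility $k(w)\leq w/m_e$ and the integrability condition in the definition of $\mathcal K$ gives the lower bound $V_\gamma(w)\to\infty$ as $w\to\infty$.  Consequently $V_\gamma$ must have an interior local minimum at some $w^\star>0$ (with $V_\gamma'(w^\star)=0$ and $V_\gamma''(w^\star)\geq 0$) and, because $V_\gamma$ must first descend before ascending to this minimum, an interior local maximum at some $w^\dagger\in(0,w^\star)$ (with $V_\gamma'(w^\dagger)=0$ and $V_\gamma''(w^\dagger)\leq 0$).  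Evaluating the valuation equation \eqref{eq:Bellman} at these two points,
\[
\gamma V_\gamma(w^\star)=\Delta_k(w^\star)+\frac{\sigma^2}{2}V_\gamma''(w^\star)\geq\Delta_k(w^\star),\qquad
\gamma V_\gamma(w^\dagger)=\Delta_k(w^\dagger)+\frac{\sigma^2}{2}V_\gamma''(w^\dagger)\leq\Delta_k(w^\dagger),
\]
and combining with $V_\gamma(w^\dagger)>V_\gamma(w^\star)$ forces $\Delta_k(w^\dagger)>\Delta_k(w^\star)$ with $w^\dagger<w^\star$.

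\emph{Step 3 (Ruling out $\Delta_k(w^\dagger)>\Delta_k(w^\star)$).}  When $c_s^2\leq 1$ we have $1-m_e/m\geq 0$, and the decomposition $\Delta_k(w)=w/m+(1-m_e/m)k(w)$ combined with any reasonable structural property of $k$ (e.g.\ monotonicity along the optimal trajectory, or directly the lower bound $\Delta_k(w)\geq w/m$) gives the contradiction immediately.  When $c_s^2>1$ the pointwise function $\Delta_k$ need not be monotone in $w$; here I would use the coupling of Step~1 to transport the contradiction from pointwise to \emph{integrated} form.  Starting $W$ at $w^\dagger$ and at $w^\star$ under the same Brownian motion and strong Markov property at the coupling time, the dominant $w/m$ component of the cost out-weighs any downward excursion of the non-monotone $(1-m_e/m)k(w)$ component along the coupled paths, which together with Step~2 yields the required contradiction.

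The main obstacle is Step 3 in the regime $c_s^2>1$: the instantaneous cost $\Delta_k(w)$ can be genuinely non-monotone in $w$ under an admissible Lipschitz $k\in\mathcal K$, which is precisely the pathology that the state-dependence of the action space introduces and that the ``simple criterion'' mentioned in the introduction is designed to rule out.  Once Step~3 is secured, the inequality $V_\gamma'(w)\geq 0$ holds for every $\gamma>0$, and the limit $\gamma\downarrow 0$ gives $G(w)\geq 0$.
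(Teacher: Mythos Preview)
Your approach is genuinely different from the paper's, and Step~3 in the case $c_s^2>1$ is a real gap rather than a detail to be filled in. You work throughout with a \emph{fixed} policy $k$ and the valuation equation \eqref{eq:Bellman}, and then in Step~3 you need $\Delta_k(w^\dagger)>\Delta_k(w^\star)$ with $w^\dagger<w^\star$ to be impossible. But for the optimal policy $k^*$ the instantaneous cost $\Delta_{k^*}(w)$ can be non-monotone when $c_s^2>1$, exactly as you note; the ``integrated form'' coupling argument you sketch is too vague to close the gap, and in fact the heuristic that ``the dominant $w/m$ component out-weighs the non-monotone component along coupled paths'' is not obviously true without further structure.

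The paper's proof avoids this difficulty entirely by exploiting optimality rather than fixing the policy. For $w_1<w_2$ it runs the \emph{optimal} policy from $w_2$, but from $w_1$ it runs a deliberately suboptimal policy that at each instant before coupling simply takes the \emph{cheapest available action}, i.e.\ $k_{\pi_1}(t)=\argmin_{k\in[0,W_1(t)/m_e]}\Delta_k(W_1(t))$. The point is that although $\Delta_{k^*}(w)$ need not be monotone, the \emph{minimum} over available actions, $\min_{k\in[0,w/m_e]}\Delta_k(w)=\min\{w/m,\,w/m_e\}$, \emph{is} monotone in $w$. This is the ``simple criterion'' alluded to in the introduction. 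Since $W_1(t)\le W_2(t)$ before coupling, the cheapest action at $W_1(t)$ costs no more than \emph{any} action at $W_2(t)$, in particular the optimal one; after coupling the costs agree. Hence $V_\gamma(w_1)\le\tilde v_1\le V_\gamma(w_2)$.

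Your ODE/extrema argument can be salvaged by the same idea: at a critical point $V_\gamma'(w)=0$ of the \emph{optimal} value function, the HJB equation reduces to $\gamma V_\gamma^*(w)=\min_{k\in[0,w/m_e]}\Delta_k(w)+\tfrac{\sigma^2}{2}(V_\gamma^*)''(w)$, and the minimized cost $\min\{w/m,w/m_e\}$ is monotone in $w$ regardless of the sign of $1-m_e/m$. Evaluating at the local max $w^\dagger$ and local min $w^\star$ then gives the contradiction directly, with no case split on $c_s^2$. The missing ingredient in your write-up is precisely this switch from \eqref{eq:Bellman} (fixed $k$) to the Bellman optimality equation (minimized over $k$).
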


\begin{remark} For a given control policy $k(w)$, equation \eqref{eqn:G_ODE} is a first order ODE for $G(w)$. However, to solve $G(\cdot)$ we also need to know the average cost $v$. This is to be expected since we started from a second order ODE where we would need two boundary conditions to completely specify $V_\gamma$. In our case, one boundary condition is easy to get hold of: since we have a reflecting boundary at $w=0$, we must have (see, for example, \cite[page VIII]{Mandl_analytical}): 
\begin{align}
V'_\gamma(0) &= 0
\end{align}
and therefore, also $G(0)=0$. This observation will be critical in the development of our algorithms.
\end{remark}

Returning to equation \eqref{eq:Bellman}, let $V^*_{\gamma}$ denote the value function for the optimal policy. Then Bellman's principle of optimality becomes:
\begin{align}
\gamma V^*_\gamma(w) &= \min_{ k \in [0, w/m_e] } \left\{  \Delta_{k}(w) - \theta(k) {V^*_\gamma}'(w)   + \frac{\sigma^2}{2} {V^*_\gamma}''(w) \right\}.
\end{align}
If we let $\gamma \to 0$, then
\begin{align}
\label{eqn:G*_ODE}
v^* &= \min_{ k \in [0, w/m_e] } \left\{  \Delta_{k}(w) - \theta(k) {G^*}(w)   + \frac{\sigma^2}{2} {G^*}'(w) \right\},
\end{align}
where again, as remarked earlier, we have the boundary condition $G^*(0)=0$, leaving $v^*$ the only unknown.

Though many diffusion control problems addressed in the literature have a nice structure allowing a closed-form solution, e.g., \cite{HST1983,HarrisonTaksar1983}, the problem \eqref{eqn:G*_ODE} is intrinsically difficult mainly due to the generality of the service rate curve. Thus we seek numerical algorithms, which presents another challenge. For diffusion control problems where a closed-form solution can be found, one of the boundary conditions is imposed by setting the coefficient of the exponential term in the solution of the second order ODE to zero. This captures the physical constraint that the optimal value function should asymptotically grow at a polynomial rate and not exponentially. However, this trick cannot be applied when searching for a numerical solution, which led us to develop the algorithms in Sections~\ref{sec:bin_search} and \ref{sec:Newton} to get around this obstacle. While the majority of numerical algorithms for solving diffusion control problems rely on the Markov chain method where time and space are discretized and a probability transition matrix is engineered to satisfy local consistency requirements (e.g., \cite{KushnerDupuis_NumericalMethods}), we directly work with the ODE in \eqref{eqn:G*_ODE}.
  


\begin{figure}[ht]
\hspace{-0.3in}
\begin{minipage}{7.5in}
\subfigure[Drift function $\theta(\cdot)$]{
\includegraphics[width=2.45in]{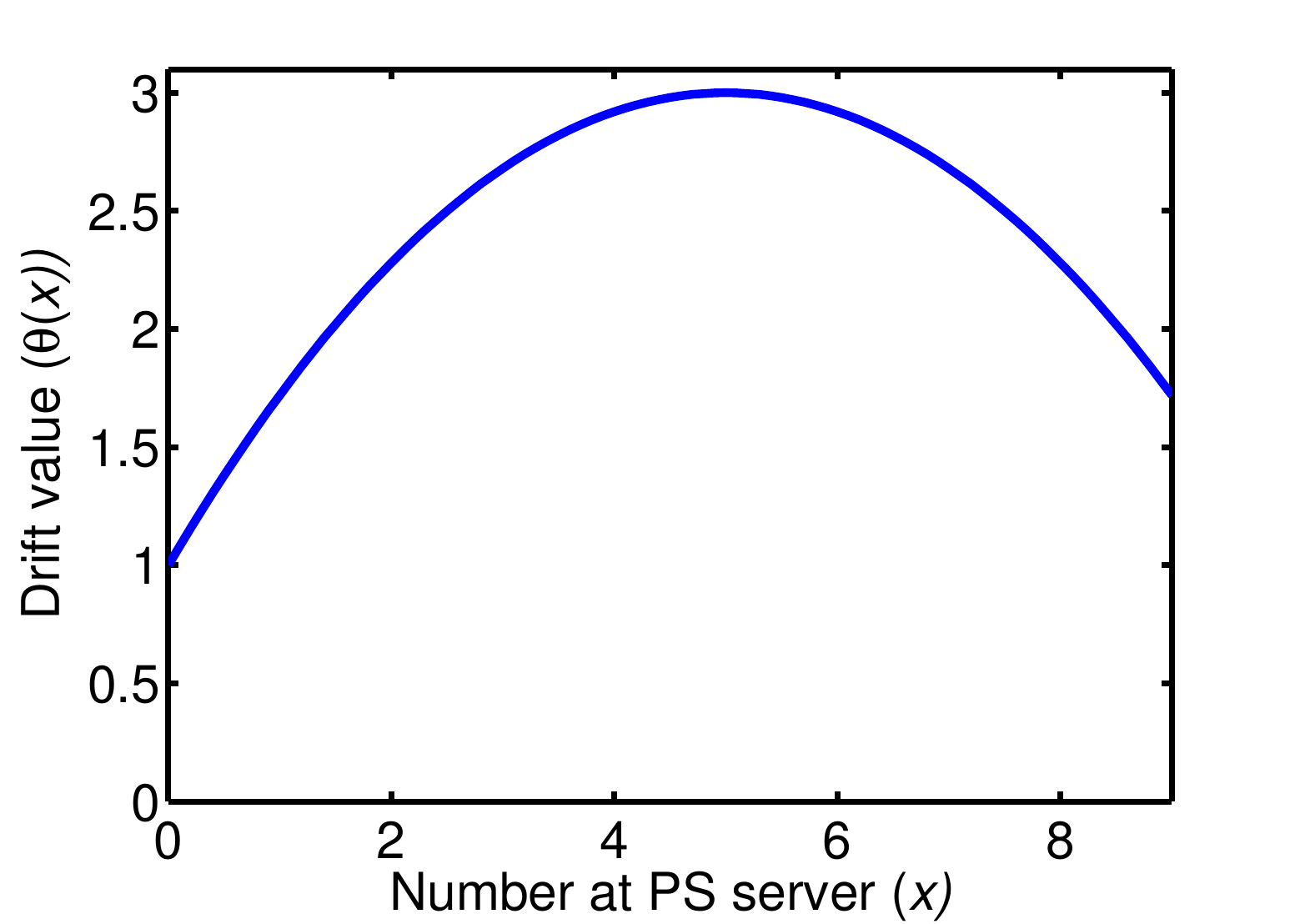}}
\hspace{-0.3in}
\subfigure[$c^2_a=c^2_s=10$]{
\includegraphics[width=2.45in]{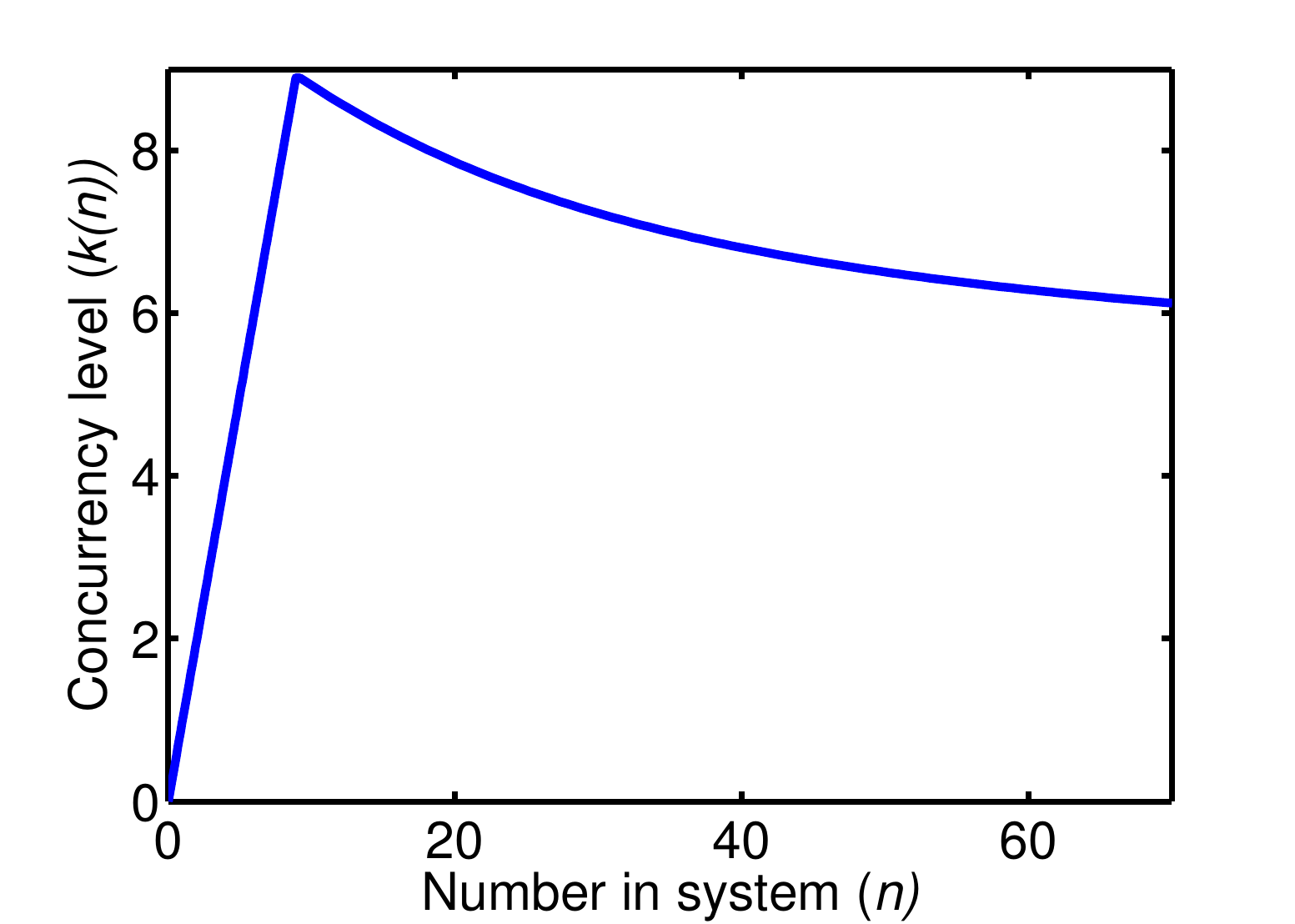}}
\hspace{-0.3in}
\subfigure[$c^2_a=c^2_s=0.3$]{
\includegraphics[width=2.45in]{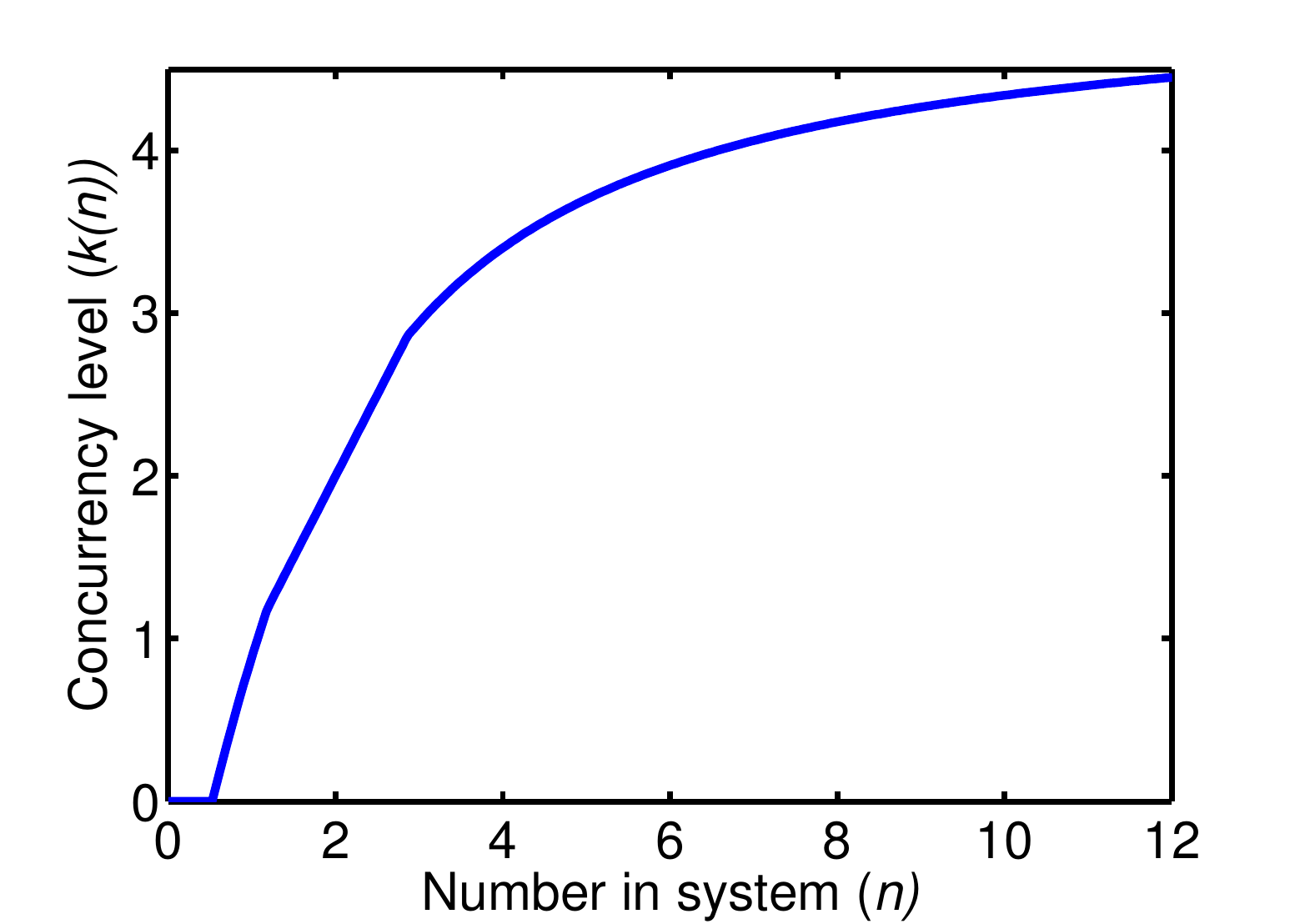}}
\end{minipage}
\caption{A hypothetical drift function $\theta(x)$ and the optimal diffusion control policies for two choices of workload parameters $c^2_a,c^2_s$. }
\label{fig:sample_policies}
\end{figure}
For an illustration of what an optimal dynamic policy might look like, see Figure~\ref{fig:sample_policies}. The first figure shows an illustrative example of the $\theta(x)$ function for the PS server. As can be seen, the PS server is most efficient when there are $\hat{k}=5$ jobs at the server, and the speed drops on either side of this point. The second figure shows the optimal dynamic policy (translated from $k(w)$ to $k(n)$, that is, as a function of the number of jobs in the system, for clarity) 
when $c^2_s=c^2_a=10$. This corresponds to a workload that has significant variability, and the optimal policy increases the concurrency level to approximately 9 when the number of jobs in the system is small but scales it back when there is a long queue. The third figure shows the policy for $c^2_s=c^2_a=0.3$. This is a low variability workload, and as the number of jobs in the system increases, initially the PS server acts as an FCFS server and thus compromises speed to keep the concurrency level small. At $n \approx 0.5$, the system switches to a controlled PS behavior by gradually increasing the concurrency level to increase service rate. At $n \approx 1.2$ the system switches to a pure PS behavior admitting everyone in queue, and finally at $n=3$ it switches back to a controlled PS behavior, gradually increasing the concurrency level to $\hat{k}=5$ as queue becomes longer. The graphs shown were produced using the Newton-Raphson average cost iteration algorithm described in Section~\ref{sec:Newton}.

\subsection{Binary search algorithm for solving the diffusion control problem} \label{sec:bin_search}
Before giving the algorithm, we discuss the main intuition and ideas behind it.
Let us assume that an oracle reveals to us the average cost $v^*$ of the optimal policy. This, together with the boundary condition $G^*(0)=0$, would allow us to numerically solve for the optimal control by evolving $G^*(\cdot)$ forward: Assuming we have solved $G^*(w)$ for $w \in [0, x]$, we first find 
\begin{equation}
\label{eqn:optimal_k}
  k^*(x) = \argmin_{k \in [0,x/m_e]} \left\{ k\left(1-\frac{m_e}{m} \right) -\theta(k) G^*(x) \right\} 
\end{equation}
and then
\[ \frac{\sigma^2}{2}{G^*}'(x) = v^* - \left[ \frac{x}{m} + k^*(x)\left( 1 - \frac{m_e}{m}\right) - \theta(k^*(x)) G^*(x) \right] \]
allows us to evolve $G^*(w)$ forward in a small enough interval $(x, x+\delta x]$.
Here then is the idea of the binary search algorithm in a nutshell: we maintain an interval $[L,U]$ within which $v^*$ is known to lie. We test if $\frac{L+U}{2}$ is the average cost of a \emph{feasible} control (we describe the feasibility test shortly). If it is, then $v^*$ is at most $\frac{L+U}{2}$ and we update $U$ to this value, otherwise we update $L$ to this value. Therefore, within $O(\log\frac{1}{\epsilon})$ iterations, we would have $(U-L) \leq \epsilon$, at which point we return a control corresponding to the average cost $U$ (which is feasible). (If we were interested in solving the discounted cost problem, the only thing to change would be to search the value of $\gamma V^*_{\gamma}(0)$ instead of the average cost.) 

\begin{description} 
\item[\bf Detecting infeasibility ($v < v^*$) : ] Let us assume that we guess a value $v $ that is smaller than the optimal average cost $v^*$ and solve the ODE \eqref{eqn:G*_ODE} forward starting from $0$. What would go wrong? It turns out that in this case $G^*(w) < 0$ for some $w>0$,  contradicting Proposition~\ref{prop:monotonic_V}. Indeed, while $v<v^*$ is infeasible for the original LPS system, it is still the cost of a feasible policy for a \emph{finite buffer LPS loss system}. If $\underline{W}(v)$ denotes the smallest $w$ at which $G^*(w)<0$, then $v$ is the optimal cost of the finite buffer system with buffer size $\underline{W}(v)$, and during the forward evolution of $G^*(\cdot)$ we have in fact found the optimal control policy for this finite buffer loss system. This provides us with a one-sided test of infeasibility: if ever $G^*(w) < 0$ for some $w>0$, our guess $v$ is too optimistic (i.e., $v<v^*$). We formalize these statements in the proposition below:
\begin{proposition}
\label{prop:infeasible_v}
Let $v^*$ be the average cost of the optimal control for the diffusion control problem \eqref{eqn:G*_ODE}, and let $v<v^*$. Let $G_v(\cdot)$ be the solution of the Bellman equation:
\begin{align}
  \label{eqn:ODE_infeasible}
  v &= \min_{k \in [0, w/m_e]} \left[ \frac{w}{m} + k \left(1 - \frac{m_e}{m} \right) - \theta(k) G_v(w) \right] 
  + \frac{\sigma^2}{2}G'_v(w)
\end{align} 
with initial condition $G_{v}(0)=0$, and $k^*_v(w)$ be the policy obtained while solving for $G_v(\cdot)$. Then $k^*_v(w)$ is the optimal control policy for a finite (workload) buffer system with buffer limit $\underline{W}(v)$, where 
\begin{align}
  \label{eq:tech-W-v}
  \underline{W}(v) &= \inf \{ w>0 : G_{v}(w) < 0\}.
\end{align}
Further, $\underline{W}(v) = O \left(\log \frac{1}{v^*-v} \right)$.
\end{proposition}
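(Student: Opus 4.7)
For part (i), the plan is to observe that the forward integration from $G_v(0)=0$ of the Bellman ODE \eqref{eqn:ODE_infeasible} produces, by construction, a pair $(G_v, k^*_v)$ that satisfies the ODE pointwise with $k^*_v(w)$ the argmin of the right-hand side. The left boundary condition $G_v(0)=0$ is the reflecting condition at $w=0$, while $G_v(\underline W(v))=0$ holds by the definition of $\underline W(v)$ and continuity. The pair $\{G(0)=0, G(\underline W(v))=0\}$ is exactly the pair of reflecting boundary conditions for the relative value gradient of an RBM on $[0,\underline W(v)]$ with reflection at both endpoints. A standard verification theorem for average-cost MDPs (e.g., \cite[Ch.~5]{Hernandez-LermaDiscreteMCP}) then identifies $(v, G_v, k^*_v)$ as the triple of optimal average cost, relative value gradient, and optimal policy for the finite-buffer reflected control problem.

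For part (ii), the plan is to construct an admissible extension $\tilde k$ of $k^*_v$ to $[0,\infty)$, bound the gap $\tilde v - v$ between the infinite-horizon average cost under $\tilde k$ and $v$, and exploit $\tilde v \ge v^*$. Set $\tilde k(w) = k^*_v(w)$ for $w \le \underline W(v)$ and $\tilde k(w) = \hat k$ for $w > \underline W(v)$, smoothing in an arbitrarily small neighborhood of $\underline W(v)$ to satisfy the Lipschitz requirement of $\mathcal K$. Since $\theta(\hat k) = \hat\theta>0$, the reflected workload under $\tilde k$ is positive recurrent with exponentially decaying tail, so $\tilde k \in \mathcal K$, its average cost $\tilde v$ is finite, and $\tilde v \ge v^*$ by minimality of $v^*$.

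On $[0,\underline W(v)]$ the policies $\tilde k$ and $k^*_v$ coincide, so subtracting the Poisson equations for $G_{\tilde k}$ and $G_v$ yields, for $H := G_{\tilde k} - G_v$, the linear first-order ODE $\tfrac{\sigma^2}{2}H'(w)=(\tilde v-v)+\theta(k^*_v(w))H(w)$ with $H(0)=0$, whose explicit solution at $w=\underline W(v)$ is
\begin{equation*}
H(\underline W(v)) \;=\; \tfrac{2(\tilde v-v)}{\sigma^2}\cdot\frac{\int_0^{\underline W(v)}\pi_v(s)\,ds}{\pi_v(\underline W(v))},\qquad \pi_v(w)\propto \exp\!\Bigl(-\!\int_0^w\!\tfrac{2\theta(k^*_v(u))}{\sigma^2}\,du\Bigr).
\end{equation*}
On $(\underline W(v),\infty)$, with $\tilde k\equiv \hat k$ and the polynomial-growth requirement on $G_{\tilde k}$ forcing the coefficient of $e^{2\hat\theta w/\sigma^2}$ to vanish, the Poisson equation gives $G_{\tilde k}(w)=w/(m\hat\theta) + b_{\tilde v}$ with $b_{\tilde v}$ linear in $\tilde v$. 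Matching $H(\underline W(v))=G_{\tilde k}(\underline W(v))$ (since $G_v(\underline W(v))=0$) and solving for $\tilde v-v$ produces a ratio of the form $(c_1\underline W(v)+c_0-v/\hat\theta)\big/\bigl(\hat\theta^{-1} + (2/\sigma^2)\int_0^{\underline W(v)}\pi_v(s)\,ds/\pi_v(\underline W(v))\bigr)$ with $c_0,c_1>0$ depending only on the primitives.

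To close the argument, I would use that stability of $k^*_v$ on the finite buffer combined with \eqref{eq:cond-stability-dyn} forces $\theta(k^*_v(u)) \ge \theta_{\min}>0$ on the bulk of $[0,\underline W(v)]$, so that $\int_0^{\underline W(v)} \pi_v(s)\,ds/\pi_v(\underline W(v)) = \Omega(e^{\alpha\underline W(v)})$ with $\alpha = 2\theta_{\min}/\sigma^2$. Plugging this into the ratio above yields $\tilde v-v = O(\underline W(v)\,e^{-\alpha\underline W(v)})$; combining with $v^* - v \le \tilde v - v$ and taking logarithms then gives $-\log(v^*-v) \ge \alpha\underline W(v) - \log\underline W(v) - O(1) \ge (\alpha/2)\underline W(v)$ for $\underline W(v)$ large, i.e., $\underline W(v) = O(\log(1/(v^*-v)))$. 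The main technical obstacle is justifying the uniform lower bound $\theta(k^*_v(u))\ge\theta_{\min}>0$ on the bulk of the interval. Since $k^*_v(0)=0$ and $\theta(0)=0$, the bound cannot hold down to the reflecting boundary at $0$; controlling the contribution of a $[0,w_0]$ transient region and certifying that the pointwise minimizer \eqref{eqn:optimal_k} drives $k^*_v$ into the positive-drift regime once $G_v$ is bounded away from $0$ requires a careful monotonicity analysis of the forward-integrated $(G_v, k^*_v)$.
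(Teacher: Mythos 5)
Your high-level structure mirrors the paper's: part (i) identifies the forward-integrated triple $(v, G_v, k^*_v)$ as the optimal solution for a two-sided-reflection problem via the boundary conditions $G_v(0)=G_v(\underline W(v))=0$, and part (ii) extends $k^*_v$ by the fluid policy $\hat k$ and compares the cost of this extension to $v$ using the Poisson equation and the invariant density. The formula you derive for $H(\underline W(v))$ is correct, and the final sandwich $v^*-v\le\tilde v-v$ is the right way to close.

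The genuine gap is exactly the one you flag: showing $\int_0^{\underline W(v)}\pi_v(s)\,ds/\pi_v(\underline W(v))=\Omega(e^{\alpha\underline W(v)})$, which amounts to showing $T(w)\doteq\int_0^w\theta(k^*_v(u))\,du$ grows linearly. Your attempted route — a pointwise bound $\theta(k^*_v(u))\ge\theta_{\min}>0$ on "the bulk" of the interval — is both stronger than needed and, as you note, not obviously available (the minimizer equals $0$ near the left boundary and could in principle dip into negative-drift territory repeatedly; the stability condition \eqref{eq:cond-stability-dyn} only says a positive drift is achievable, not that the optimal finite-buffer policy uses it). The paper avoids the pointwise bound entirely via an averaging argument: first it establishes a linear upper bound $G^*_W(w)\le\alpha+w/(m\hat\theta)$ (by comparison with the fluid-continuation envelope and with the infinite-buffer $G^*$), then it integrates the HJB equality over $[0,W]$. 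Using the two reflecting boundary conditions to make the left side vanish, it obtains
\begin{equation*}
0 \;\le\; W v^*_W \;-\; \frac{W^2}{2(m\vee m_e)} \;+\; \Bigl(\alpha+\frac{W}{m\hat\theta}\Bigr)T_W(W).
\end{equation*}
The $-W^2$ term then forces $T_W(W)=\Theta(W)$, otherwise the right side goes negative. This integral argument is the missing ingredient; without it, your chain of bounds does not produce an explicit $\alpha>0$ and the final logarithmic bound does not follow.

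A secondary, smaller gap is in part (i): you assert that forward integration identifies the finite-buffer optimal triple, but you do not establish that the buffer size $\underline W(v)$ is the \emph{unique} finite buffer whose optimal cost equals $v$. The paper proves this by showing (via a Lipschitz/monotonicity lemma) that $v^*(W)$ is continuous and strictly monotone in $W$, arguing that if two buffer sizes $W_1<W_2$ shared the same optimal cost, then $G'_{v,W_2}(W_1)<0$ would contradict monotonicity of the finite-buffer value function. Without this, "the policy obtained" is not pinned down as \emph{the} finite-buffer optimum with buffer $\underline W(v)$ specifically.
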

Since we do not have a priori a concrete bound on the value of $w$ for which $G_v(w)<0$ for $v<v^*$, this infeasibility test alone cannot be translated into an efficient algorithm to find $v^*$. We therefore need a test of feasibility, which is provided next. 

\item[\bf Detecting feasibility ($v > v^*$) : ] Detecting infeasibility relied on \emph{(a)} choosing a set of alternate systems (finite buffer LPS loss systems) which lower bound the cost of the original problem, but have the same HJB equation as the original system albeit with a different boundary condition ($G(\underline{W})=0$), and \emph{(b)} being able to identify in bounded time which finite buffer system our guess $v$ maps to through satisfaction of a second boundary condition (although we only expressed this bound in order notation instead of a concrete bound).

To test if a guess $v$ is larger than the true optimal cost $v^*$, we will employ a class of suboptimal policies we call \emph{fluid continuation policies}.
\begin{definition}
\label{def:fluid-cont-policy}
The set of fluid continuation policies with \emph{fluid continuation point} $W$ is defined as
\begin{align}
  \label{eqn:fluid_continuation_class}
  \mathcal{F}_W = \left\{ k \in \mathcal{K} : k(w) = k_f(w) \doteq \argmax_{x \leq w/m_e} \{\theta(x)\} , w \geq W \right\}. 
\end{align}
That is, beyond the fluid continuation workload point $W$, the control is chosen to be the most efficient service rate available. Denote the cost of the optimal (minimum cost)  policy in $\mathcal{F}_W$ by $v_f(W)$. Let $\overline{W}(v) = \min\{W \geq 0 : v = v_f(W)\}$.
\end{definition}

Clearly all policies in $\mathcal{F}_W$ are stable due to condition \eqref{eq:cond-stability-dyn}. In fact, the policy $k_f \in \mathcal F_0$ is optimal when $c^2_s=1$ since in this case $m_e=m$, and \eqref{eqn:optimal_k} simplifies to \[k^*(w) = \argmin_{k\in [0,w/m_e]} \theta(k) G^*(w) = \argmin_{k\in[0,w/m_e]} \theta(k).\] We will use the next proposition to answer the question whether or not any given guess of the average cost is feasible.

\begin{proposition}
\label{prop:feasible_v} 
Let $v^* \leq v \leq v_f(0)$. Then $v$ is the average cost of an optimal fluid continuation policy $k_v$ with continuation point $\overline{W}(v)$. That is:
\begin{align}
\label{eqn:fluid_continuation_policies}
k_{v}(w) &= \begin{cases}
\argmin_{k \in [0, w/m_e]} \left\{ k\left(1 - \frac{m_e}{m} \right) -\theta(k) G_{v}(w) \right\} & w \leq \overline{W}(v), \\
k_f(w) & w > \overline{W}(v),
\end{cases}
\intertext{where $G_{v}$ is the value function gradient for policy $k_{v}$ and satisfies the ODE}
\label{eq:ODE-contingency}
v &=  \frac{w}{m} + k_{v}(w) \left(1 - \frac{m_e}{m} \right) - \theta(k_{v}(w)) {G_{v}}(w)   + \frac{\sigma^2}{2} G'_{v}(w). 
\end{align}
Further, $\overline{W}(v) = O\left(\log \frac{1}{v-v^*}\right)$.
\end{proposition}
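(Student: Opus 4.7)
The plan has three parts. First, I derive \eqref{eqn:fluid_continuation_policies}--\eqref{eq:ODE-contingency} as the Bellman optimality conditions for the constrained average-cost problem on $\mathcal F_W$. Second, I argue that $W\mapsto v_f(W)$ is well-defined and monotone so that $\overline W(v)$ is well-defined. Third (the main obstacle), I establish the logarithmic growth of $\overline W(v)$ as $v\downarrow v^*$.

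For part (i), restricted to $\mathcal F_W$ the control is unrestricted on $[0,W]$ but pinned to $k_f$ on $(W,\infty)$. Pointwise HJB minimization on $[0,W]$ yields the $\argmin$ in \eqref{eqn:fluid_continuation_policies}, and substituting $k_v$ back into the Bellman equation gives \eqref{eq:ODE-contingency} (the action-independent term $w/m$ becomes explicit through $\Delta_k(w)=w/m+k(1-m_e/m)$ for $k\le w/m_e$). Two boundary conditions close the problem: $G_v(0)=0$ from reflection at the origin (the remark after Proposition~\ref{prop:monotonic_V}) and polynomial growth at infinity for positive recurrence. On $(W,\infty)$ the ODE is linear in $G_v$ with homogeneous solutions $\exp(\int^w 2\theta(k_f(u))/\sigma^2\,du)$, so exactly one boundary value $G_v^{\mathrm{st}}(W;v)$ at $w=W$ suppresses the growing mode. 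Requiring continuity at $W$ matches this to the value obtained by forward-integrating from $0$; the matching condition pins down $v_f(W)$ uniquely, and a standard verification argument (It\^o's formula applied to an antiderivative of $G_v$, plus ergodicity of the RBM under $k_v$) identifies $v_f(W)$ as the long-run average cost of $k_v$.

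For part (ii), larger $W$ enlarges $\mathcal F_W$, so $v_f(W)$ is non-increasing in $W$, with endpoint values $v_f(0)$ (cost of the pure fluid policy $k_f$) and $\lim_{W\to\infty}v_f(W)=v^*$. Therefore $\overline W(v)$ is finite for every $v\in[v^*,v_f(0)]$.

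Part (iii) is the main technical step. The intuition is that $v_f(W)-v^*$ is the long-run rate of excess cost incurred by being forced to use $k_f$ rather than $k^*$ on the tail region $(W,\infty)$. Two ingredients combine to give an exponentially small rate. First, because $\hat k$ is the unconstrained maximizer of $\theta$ (so $\theta'(\hat k)=0$) and $G^*(w)\to\infty$ forces the pointwise minimizer $k^*(w)\to\hat k$ at large $w$, the per-unit-time excess cost $\bigl[\Delta_{k_f}(w)-\theta(k_f(w))G^*(w)\bigr]-\bigl[\Delta_{k^*(w)}(w)-\theta(k^*(w))G^*(w)\bigr]$ is $O((k^*(w)-\hat k)^2)$ by a first-order envelope argument and is in particular uniformly bounded on the tail. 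Second, by Lemma~\ref{lemma:RBM_ss} together with \eqref{eq:cond-stability-dyn}, the stationary density of $W^*$ under the optimal policy decays at least as fast as $\exp(-2\underline\theta w/\sigma^2)$ for some $\underline\theta>0$. Ergodic averaging of the per-unit-time loss against this density over $(W,\infty)$ then yields $v_f(W)-v^*\le C\exp(-2\underline\theta W/\sigma^2)$; inversion gives $\overline W(v)=O(\log(1/(v-v^*)))$. Making these informal steps rigorous is the main challenge: one must simultaneously control both the perturbation in the optimal control and the perturbation in the stationary distribution when passing from $k^*$ to the $\mathcal F_W$-optimal policy. Either a linearized-ODE argument based on an explicit exponential Green's function, or a Girsanov-type change-of-measure comparison between the two RBMs, seems to be the right technical vehicle.
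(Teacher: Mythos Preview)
Your outline is aligned with the paper's route (the paper omits the proof, stating it mirrors that of Proposition~\ref{prop:infeasible_v}): one shows $v_f(W)-v^*=O(e^{-\beta W})$ by exhibiting a policy in $\mathcal F_W$ whose cost exceeds $v^*$ by an amount controlled by the exponentially small tail mass of the stationary workload. Your parts~(i)--(ii) are fine.

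In part~(iii) there are two concrete gaps. First, your justification for the exponential tail is incomplete: Lemma~\ref{lemma:RBM_ss} together with \eqref{eq:cond-stability-dyn} only says a strictly positive drift is \emph{available}, not that the optimal control $k^*$ actually uses it on the tail. The paper's mechanism (Steps~1 and~3 of the proof of Proposition~\ref{prop:infeasible_v}, transplanted to $k^*$) is to first establish the linear bound $G^*(w)\le\alpha+w/(m\hat\theta)$ by comparison with the fluid-continuation envelope $\overline G_{v^*}$, and then integrate the HJB inequality $(\sigma^2/2)(G^*)'(w)\le v^*-w/(m\vee m_e)+\theta(k^*(w))G^*(w)$ over $[0,W]$: the left-hand side is $O(W)$ by the linear bound, while the right-hand side contains a $-\Theta(W^2)$ term, which forces $\int_0^W\theta(k^*(u))\,du=\Theta(W)$ and hence, via Lemma~\ref{lemma:RBM_ss}, the exponential density decay you need. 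Second, the ``per-unit-time excess cost'' you write down is a Bellman-gap expression involving $G^*(w)$, not the actual cost rate $\Delta_k(w)$, and the envelope/quadratic argument is more than is required. The analogue of the paper's Step~5 is simpler: take $\tilde k_W(w)=k^*(w)$ for $w\le W$ and $\tilde k_W(w)=\hat k$ for $w>W$, so that $\tilde k_W\in\mathcal F_W$ and hence $v_f(W)\le\tilde v_W$. Its stationary density agrees with that of $k^*$ on $[0,W]$ up to normalization, the tail mass beyond $W$ is $O(e^{-\beta W})$, and the cost integrand $\Delta_k(w)$ grows only linearly in $w$; this yields $\tilde v_W=v^*+O(e^{-\beta W})$ directly, with no need for the Girsanov or linearized-ODE machinery you contemplate.
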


The advantage of the class of feasible policies described by \eqref{eqn:fluid_continuation_policies} is that for any $v$, the function $G_{v}(w)$ for $w \geq \overline{W}(v)$ can be easily computed, and is in fact independent of $\overline{W}(v)$. Let us call this the \emph{fluid continuation of the value function gradient} and denote it by $\overline{G}_v(w)$. The function $\overline{G}_v$ will act as the boundary condition for detecting the feasibility of $v$.

$\overline{G}_v(w)$ solves the ODE
\begin{align}
\label{eqn:Gf_ODE}
v &= \frac{w}{m} + k_{f}(w) \left(1 - \frac{m_e}{m} \right) - \theta(k_{f}(w)) {\overline{G}_{v}}(w)   + \frac{\sigma^2}{2} {\overline{G}}'_{v}(w)
\end{align}
with an as-yet-unspecified boundary condition. Note that $\overline{G}_v(0)$ is not necessarily 0, unless $v$ happens to be the average cost of policy $k_f$ (that is, $v=v_f(0)$). We first consider the range $w \in [\hat{k}m_e, \infty)$ where $k_f(w)=\hat{k}$. In this part, the ODE for $\overline{G}_v$ becomes
\begin{align}
v &= \frac{w}{m} + \hat{k} \left(1 - \frac{m_e}{m} \right) - \hat{\theta} {\overline{G}_{v}}(w)   + \frac{\sigma^2}{2} \overline{G}'_{v}(w)m,
\end{align}
which is a first order non-homogeneous differential equation with constant coefficients, and a non-zero part that is a linear function of $w$. The solution to \eqref{eqn:Gf_ODE} has a homogeneous (general) part and a particular solution with two unknown constants. To determine the unknown constants we need two boundary conditions, one of which is $\overline{G}_v(0)=0$. We obtain our equivalent of the second boundary condition by setting the coefficient of the homogeneous part of the solution (which is $e^{\frac{2 \hat{\theta}w}{\sigma^2}}$ ) to zero because the value function cannot grow exponentially. This gives
\begin{align}
\overline{G}_v(w) &=  \frac{w}{m \hat{\theta}} + \left( \hat{k}\left(1-\frac{m_e}{m} \right) + \frac{\sigma^2}{2m\hat{\theta}} - v \right)\frac{1}{\hat{\theta}},
\quad  w \geq \hat{k} m_e.
\end{align} 
To solve for $\overline{G}_v$ for $w\in [0, \hat{k}m_e]$, we solve the ODE \eqref{eqn:Gf_ODE} backwards starting with the terminal condition 
\[ \overline{G}_v( \hat{k}m_e) = \left(\hat{k} - v + \frac{\sigma^2}{2m\hat{\theta}}\right) \frac{1}{\hat{\theta}}\ .\]  
If it turns out that $\overline{G}_v(0) < 0$, then our guess $v > v_f(0)$ and therefore $v$ is the average cost of some feasible policy. Otherwise, we solve the ODE \eqref{eqn:G*_ODE} forward by starting with initial condition $G_{v}(0)=0$ and substituting our guess $v^* = v$. If $G_{v}$ `hits' $\overline{G}_v$ from below, that is $G_{v}(W)=\overline{G}_v(W)$ for some $W$, then $\overline{W}(v)=W$, and following policy $k_{v}(w)$ for $w\in[0,\overline{W}(v)]$ and $k_f$ for $w \geq \overline{W}(v)$ is a policy with average cost $v$, indicating feasibility of $v$. 
\end{description}
The step-by-step procedure is given in Algorithm~\ref{alg:binary_search}.

\begin{algorithm}
\caption{Average cost iteration (binary search method)}\label{alg:binary_search}
\algblock{Solve}{EndSolve}
\algnewcommand\algorithmicsolve{\textbf{solve}}
\algnewcommand\algorithmicendsolve{\textbf{end\ solve}}
\algrenewtext{Solve}{\algorithmicsolve}
\algrenewtext{EndSolve}{\algorithmicendsolve}

\begin{algorithmic}[0]
\State {\bf define} $\hat{k} \doteq \argmax_k \theta(k)$; $\hat{\theta} \doteq \theta(\hat{k})$
\State {\bf define} $k_f(w) \doteq \argmax_{k \in[0,w/m_e]} \theta(k)$; $\theta_f(w) \doteq \theta(k_f(w))$ \Comment(Fluid optimal policy)
\State {\bf initialize} $L \gets 0, U \gets \emptyset, v \gets 1$ \Comment(Search interval $[L,U]$ and initial guess)
\While{$U-L \geq \epsilon$ }
	\Solve{ for the fluid continuation $\overline{G}_v(w)$ with average cost $v$:}
	\State $\overline{G}_v(w) = \frac{w}{m \hat{\theta}} + \left( \hat{k}\left(1-\frac{m_e}{m} \right) + \frac{\sigma^2}{2m \hat{\theta}} - v \right)\frac{1}{\hat{\theta}}  \quad \ldots \ w \in [\hat{k}m_e, \infty)$ \Comment(Terminal condition)
	\State $ v = \frac{w}{m}+k_f(w) (1-\frac{m_e}{m}) -\theta_f(w) \overline{G}_v(w) + \frac{\sigma^2}{2}\overline{G}'_v(w) \quad \ldots \ w \in [0, \hat{k}m_e]$ \Comment(ODE)
	\EndSolve
\If{$\overline{G}_v(0) < 0$}	\Comment($v$ is larger than avg. cost of $k_f$)
	\State $U \gets v$ \Comment(Therefore, $v^* \leq v$)
	\State $v \gets \frac{L+U}{2}$ \Comment(Updated guess for next iteration)
\Else
	\Solve{ for policy $k_{v}(w)$ and $G_{v}(w)$:}
		\State $G_{v}(0) = 0$ \Comment(Initial condition)
		\State $ v = \min_{k \in [0,w/m_e]}\left\{  \frac{w}{m}+k (1-\frac{m_e}{m}) -\theta(k) G_{v}(w) + \frac{\sigma^2}{2}G'_{v}(w) \right\} $ \Comment(ODE)
		\State {\bf until} $W = \inf\{ w:  (G_{v}(w) \geq  \overline{G}_v(w)) \mbox{ OR } ( G_{v}(w) < -\epsilon) \}$ \Comment(Terminal event)
	\EndSolve
	\If{ $G_{v}(W) \geq \overline{G}_v(W)$} \Comment($v$ is feasible)
		\State $U \gets v$ \Comment(Therefore, $v^* \leq v$)
		\State $v \gets \frac{L+U}{2}$ \Comment(Update guess for next iteration)
	\Else	\Comment($v$ is infeasible)
		\If{$U=\emptyset$}
			\State $v \gets 2v$ \Comment(Double the guess until we find one feasible value)
		\Else
			\State $L \gets v$	\Comment($v^* \geq v$)
			\State $v \gets \frac{L+U}{2}$ \Comment(Update guess for next  iteration)
		\EndIf	
	\EndIf 
\EndIf
\EndWhile
\State \Return Cost $v=U$; Policy $k_U(w)$
\end{algorithmic}
\end{algorithm}

\subsection{Newton-Raphson method for solving the diffusion control problem} \label{sec:Newton}
The binary search algorithm we proposed in Section~\ref{sec:bin_search} was based on first guessing an average cost value $v$, forward evolving ODE \eqref{eqn:G*_ODE} with the (initial) boundary condition $G_{v}(0)=0$ until a terminal boundary condition was met thereby verifying feasibility or infeasibility of $v$ as the average cost, and then updating the guess for $v$. The algorithm we propose in this section will be based on backward evolution of ODE \eqref{eqn:G*_ODE}.

As in Section~\ref{sec:bin_search}, here we will find an optimal policy in the class of fluid continuation policies (Definition~\ref{def:fluid-cont-policy}) $\mathcal{F}_W$ for some fluid continuation point $W$. However, this time we will first fix a large enough value of $W$ ($W \geq \hat{k}m_e$) and seek the optimal policy and the optimal average cost in $\mathcal{F}_W$.
Recall that we denote the average cost of the optimal policy in $\mathcal{F}_W$ by $v_f(W)$.
As mentioned later, we can use a standard doubling trick to settle on a `large enough' $W$.
Next, we will guess an average cost value $v$ and devise a test to compare $v$ with $v_f(W)$. For this, we evolve ODE \eqref{eqn:G*_ODE} backwards with the (terminal) boundary condition
\begin{align}
\label{eqn:NR_terminal_condition}
G_{v}(W) &= \left(\hat{k} \left(1- \frac{m_e}{m} \right) - v + \frac{\sigma^2}{2m\hat{\theta}}\right) \frac{1}{\hat{\theta}}  + \frac{1}{m\hat{\theta}}\cdot W 
\end{align}
(For notational simplicity, we have suppressed the dependence of $G_v()$ on $W$). If indeed $v=v_f(W)$ then we must have $G_v(0)=0$, and the sign of $G_v(0)$ can tell us if $v<v_f(W)$ or $v>v_f(W)$. This would be similar to the binary search iterative algorithm with a linear convergence rate. However, since we know $G_{v_f(W)}(0)=0$, we can (and will) find $v_f(W)$ by solving for the root of the equation $G_{v}(0)=0$ (in $v$) using the Newton-Raphson method which has a faster quadratic convergence rate.

Recall that to solve for the root of a function $h(x)=0$ with a current estimate of $x_n$, the Newton-Raphson update step is given by
\[ x_{n+1} = x_n - \frac{h(x_n)}{h'(x_n)}.\]
Let us assume that our current guess for $v_f(W)$ is $v_n$. To generate the next guess $v_{n+1}$ via the Newton-Raphson method, we need the derivative of $G_{v}(0)$ at $v=v_n$. With some abuse of notation, define
\[ g_{v}(w) \doteq \frac{d G_{v}(w)}{dv} .\]
(What we really mean by the above is that $g_{v}(w) \doteq \frac{\partial G(v,w)}{\partial v}$, where $G(v,w) = G_v(w)$.) As we will show in the proof of Proposition~\ref{prop:NR_convergence} (see step 2 of the proof), $G_v(w)$ is Lipschitz continuous and decreasing in $v$ for all $w$ and therefore $g_v(w)$ exists almost everywhere, and further it is bounded away from 0.
With $W \geq \hat{k}m_e$ representing the point at which we switch to the fluid policy $k_f$, we can write $G_{v}(w)$ as the following integral: for $w\leq W$, 
\begin{align}
  \label{eqn:G_integral}
  G_{v_n}(w) &= G_{v_n}(W) + \frac{2}{\sigma^2}\int_{W}^w 
  \left[ v_n - \min_{k \in [0,u/m_e]}  \left\{ \Delta_k(u) - \theta(k) G_{v_n}(u)  \right\} \right] du,
\end{align}
where $G_{v_n}(W)$ is given by \eqref{eqn:NR_terminal_condition}. Differentiating \eqref{eqn:G_integral} with respect to $v_n$ yields
\begin{align*}
  g_{v_n}(w) &= g_{v_n}(W) + \frac{2}{\sigma^2}
  \int_{W}^w \left[ 1 - \frac{d}{dv_n} \min_{k \in [0,u/m_e]}  
    \left\{ \Delta_k(u) - \theta(k) G_{v_n}(u)  \right\} \right] du \\
  &= -\frac{1}{\hat{\theta}} + \frac{2}{\sigma^2}
  \int_{W}^w \left[ 1 + \theta(k_{v_n}(u)) g_{v_n}(u) \right] du.
\end{align*}
Since the policy $k_{v_n}()$ also depends on $v_n$, to arrive at the last equality, we have used the \emph{envelope theorem}: If $k^*(v) = \argmin_{k} \phi(k,v)$ and $\phi^*(v)=\phi(k^*(v),v)$, then $\frac{d \phi^*(v)}{dv} = \frac{\partial \phi(k^*(v),v)}{\partial v}$ (where $\frac{\partial \phi(k,v)}{\partial v}$ is the partial derivative with respect to $v$).
Therefore, very similar to $G_{v_n}$, $g_{v_n}$ satisfies the following ODE
\begin{align}
1 &=  - \theta(k_{v_n}(w)) {g_{v_n}}(w)  + \frac{\sigma^2}{2} g'_{v_n}(w)
\intertext{with the terminal condition}
\label{eqn:NR_g_terminal_condition}
g_{v_n}(W) &= -\frac{1}{\hat{\theta}}.
\end{align}
The updated guess for average cost is then
\begin{align}
\label{eqn:NR_equation}
v_{n+1} = v_n - \frac{G_{v_n}(0)}{g_{v_n}(0)}.
\end{align}
Remarkably, it turns out that $v_{n+1}$ is exactly the average cost of the policy, call it $k_{v_n}(w)$, that is implicitly generated when solving for $G_{v_n}$ and $g_{v_n}$. This is because if we fix the policy  $k(w)=k_{v_n}(w)$, then from \eqref{eqn:G_integral} we can see that $G_{v}$ is linear in $v$. Therefore the Newton-Raphson update is effectively solving for that $v$ for which $G_{v}(0)=0$ when $k(w)=k_{v_n}(w)$, which is the average cost of $k_{v_n}$. Therefore the sequence of average cost iterates produced by the algorithm are in fact the average costs of a sequence of feasible policies. The next proposition formally states the result on convergence of the Newton-Raphson average cost iteration algorithm.
\begin{proposition}
\label{prop:NR_convergence}
Let $v_1, v_2, \ldots$ denote the average cost iterates generated by the Newton-Raphson method \eqref{eqn:NR_equation}. Let 
\[ d_\theta \doteq \sup_k \theta(k) - \inf_k \theta(k) < \infty. \]
The sequence $\{v_n\}$ monotonically decreases to $v_f(W)$, which is the average cost of the optimal diffusion control policy in the set $\mathcal{F}_W$ of fluid continuation policies with fluid continuation point $W$. Further, assuming that $\theta(k)$ is twice differentiable everywhere, and that 
\begin{enumerate}
\item the first derivative of $\theta(k)$ is finite, i.e., 
  \begin{equation}
    \label{eq:tech-1st-der-bd}
    S_\theta \doteq \sup_k \left| \frac{d\theta(k)}{dk} \right| < \infty,
  \end{equation}
\item the second derivative of $\theta(k)$ is bounded away from 0, i.e., 
  \begin{equation}
    \label{eq:tech-2nd-der-bd}
    D_\theta \doteq \inf_k \left| \frac{d^2\theta(k)}{dk^2} \right| > 0,
  \end{equation}
\end{enumerate}
the errors of the Newton-Raphson iterates, $\epsilon_n = (v_n-v_f(W))$, decrease quadratically.
\end{proposition}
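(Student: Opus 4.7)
The plan is to proceed in four stages, corresponding to two qualitative claims (monotone decrease and convergence to $v_f(W)$) and then the quantitative quadratic-rate claim. First, I would establish basic regularity of the map $v \mapsto G_v(\cdot)$: for each fixed $w \in [0,W]$, $G_v(w)$ is strictly decreasing and Lipschitz in $v$, with $g_v(w) = \partial G_v(w)/\partial v$ existing almost everywhere and bounded away from $0$. This follows from the integral representation \eqref{eqn:G_integral} together with the terminal condition \eqref{eqn:NR_terminal_condition}: subtracting the ODEs for $G_v$ and $G_{v'}$ (with $v<v'$), applying the envelope theorem to bound $|\theta(k_v(u))G_v(u)-\theta(k_{v'}(u))G_{v'}(u)|$ in terms of $|G_v - G_{v'}|$, and then invoking Gronwall's inequality on $[0,W]$ gives both Lipschitzness and the sign of $g_v$. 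The boundedness of $d_\theta$ is what keeps the Gronwall constants finite.

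Second, I would show monotone decrease. The essential observation, flagged in the text, is that holding the policy fixed at $k_{v_n}(\cdot)$ the ODE makes $G_v(w)$ affine in $v$; hence the Newton-Raphson update \eqref{eqn:NR_equation}, which linearizes $v \mapsto G_v(0)$ about $v_n$, returns precisely the (unique) value of $v$ for which $G_v(0)=0$ along that affine slice, namely the average cost of policy $k_{v_n} \in \mathcal F_W$. Since $k_{v_n}$ is the greedy policy with respect to $G_{v_n}$, a single step of average-cost policy iteration yields $v_{n+1}\le v_n$. Combined with the lower bound $v_f(W)\le v_{n+1}$ (because $v_{n+1}$ is the cost of a feasible policy in $\mathcal F_W$), monotone convergence gives $v_n \downarrow v_\infty \ge v_f(W)$. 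To identify $v_\infty$ with $v_f(W)$, I would pass to the limit in the integral equations for $G_{v_n}$: continuity of $\theta$, boundedness of $k_{v_n}(w)\le w/m_e$, and the Lipschitz continuity from the first step yield (along a subsequence) a limit pair $(v_\infty, G_\infty)$ satisfying the HJB \eqref{eqn:G*_ODE} on $[0,W]$ with the fluid-continuation boundary condition at $W$ and $G_\infty(0)=0$. Uniqueness then forces $v_\infty=v_f(W)$.

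Third, for quadratic convergence, I would define $h(v) \doteq G_v(0)$, so that $v_f(W)$ is the root of $h$ and \eqref{eqn:NR_equation} is exactly Newton-Raphson applied to $h$. Classical Newton theory gives a quadratic rate provided $h\in C^2$ in a neighbourhood of $v_f(W)$, $h'(v_f(W))\ne 0$, and $h''$ is bounded. The first-order condition for the inner minimization, $\theta'(k)G_v(w) = 1 - m_e/m$, together with the assumption \eqref{eq:tech-2nd-der-bd} that $|\theta''|\ge D_\theta>0$, makes this equation non-degenerate in $k$; the implicit function theorem then produces $k_v(w)$ as a $C^1$ function of $(v,w)$, with derivative bounds controlled by $S_\theta/D_\theta$ from \eqref{eq:tech-1st-der-bd}--\eqref{eq:tech-2nd-der-bd}. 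Differentiating the integral representation \eqref{eqn:G_integral} twice in $v$ and invoking the envelope theorem gives the desired bounds on $h''$, while $h'(v_f(W))=g_{v_f(W)}(0)$ is bounded away from $0$ by the first step. The standard error recursion $\epsilon_{n+1}\le \tfrac{1}{2}(\sup|h''|/\inf|h'|)\,\epsilon_n^2$ then yields quadratic convergence.

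The main obstacle I anticipate is Step 3: one must handle the inner minimization carefully at workload values where the minimizer $k_v(w)$ hits the boundary of its feasible interval $[0,w/m_e]$, since $C^2$-smoothness of $h$ can fail where the active constraint changes. In the interior regime the implicit-function argument is clean, but on boundary regions one needs either a separate argument (the minimizer is then a smooth explicit function, namely $0$ or $w/m_e$) or a localization showing that near $v_f(W)$ the set of $w$'s on which the boundary is active is stable. The fluid-continuation choice of boundary condition at $W\ge \hat k m_e$ eliminates the boundary issue on $[W,\infty)$, leaving only a compact analysis on $[0,W]$; but ensuring $h''$ is uniformly bounded up to $v_f(W)$ is where all the technical work concentrates.
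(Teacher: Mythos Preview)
Your proposal follows the same architecture as the paper's proof: (i) Lipschitz/monotonicity of $v\mapsto G_v(w)$ via Gronwall on the difference of the backward ODEs, (ii) identification of $v_{n+1}$ with the average cost of the feasible policy $k_{v_n}\in\mathcal F_W$ (hence the lower bound $v_{n+1}\ge v_f(W)$), and (iii) quadratic convergence by bounding the second variation of $G_v(0)$ in $v$ using the regularity assumptions on $\theta$ to control $\partial k_v(w)/\partial v$. Your anticipated obstacle about the boundary cases $k_v(w)\in\{0,w/m_e\}$ is exactly what the paper has to handle case-by-case in its Step~3.

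There is one genuine gap in your Stage~2. You write ``since $k_{v_n}$ is the greedy policy with respect to $G_{v_n}$, a single step of average-cost policy iteration yields $v_{n+1}\le v_n$.'' This is not standard policy iteration: the policy-improvement theorem requires greediness with respect to the \emph{value function gradient of the previous policy}, i.e., the solution of the linear evaluation ODE for $k_{v_{n-1}}$ with $G(0)=0$. But $G_{v_n}$ instead solves the \emph{nonlinear HJB} with guess $v_n$ and the terminal condition at $W$, and in general $G_{v_n}(0)\neq 0$; it is not the relative value function of any policy with cost $v_n$. So the textbook improvement inequality does not apply as stated. The paper's route is cleaner and uses only what you already established in Stage~1: from strict monotonicity of $v\mapsto G_v(0)$ and $G_{v_f(W)}(0)=0$ one gets $G_{v_n}(0)<0$ whenever $v_n>v_f(W)$; combined with $g_{v_n}(0)<0$, the Newton update gives $v_{n+1}=v_n-G_{v_n}(0)/g_{v_n}(0)<v_n$. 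Monotone plus bounded below gives $v_n\downarrow v_\infty$, and continuity of $G_v(0)$ (Lipschitz) plus $g_{v_\infty}(0)\neq 0$ force $G_{v_\infty}(0)=0$, hence $v_\infty=v_f(W)$ by uniqueness of the root. This replaces your compactness argument and avoids the policy-iteration appeal entirely.
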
 
Since close to the root, the error roughly squares in each iteration, it takes $O(\log \log \frac{1}{\epsilon})$ iterations to reach an $\epsilon$-optimal policy within $\mathcal{F}_W$. To find the $\epsilon$-optimal policy among all policies, we can keep doubling the value of $W$ until the error between successive iterates is sufficiently small. By our earlier result, we need a $W = O(\log \frac{1}{\epsilon})$ to arrive at an $\epsilon$-optimal policy. Since each iteration of the Newton-Raphson method takes $O(W)$ time, the overall time complexity of the algorithm to find an $\epsilon$-optimal policy is $O\left( \log\frac{1}{\epsilon}  \log \log \frac{1}{\epsilon}\right)$. On the other hand, the overall time complexity of the binary search algorithm is $O\left( \left(\log\frac{1}{\epsilon}\right)^2 \right)$.

The step-by-step procedure is described in Algorithm~\ref{alg:Newtons_method}. In the description, we have omitted iterating over values of $W$, the fluid continuation point, for clarity and to focus on the core of the algorithm. Note that we choose the average cost of the fluid policy as the initial guess for average cost $v_0$ which is computed using a single step of Newton-Raphson iteration (shown in the initialize block). This is a minor optimization that also takes care of a corner case in the proof of convergence of Algorithm~\ref{alg:Newtons_method}, although we believe this initialization is not necessary for quadratic convergence to hold.

\begin{algorithm}
\caption{Average cost iteration (Newton-Raphson method)}\label{alg:Newtons_method}
\algblock{Solve}{EndSolve}
\algnewcommand\algorithmicsolve{\textbf{solve}}
\algnewcommand\algorithmicendsolve{\textbf{end\ solve}}
\algrenewtext{Solve}{\algorithmicsolve}
\algrenewtext{EndSolve}{\algorithmicendsolve}
\algblock{Init}{EndInit}
\algnewcommand\algorithmicinit{\textbf{initialize}}
\algnewcommand\algorithmicendinit{\textbf{end\ initialize}}
\algrenewtext{Init}{\algorithmicinit}
\algrenewtext{EndInit}{\algorithmicendinit}
\begin{algorithmic}[0]
\State {\bf define} $\hat{k} \doteq \argmax_k \theta(k)$; $\hat{\theta} \doteq \theta(\hat{k})$
\State {\bf require} $W \geq \hat{k}m_e$ \Comment(Fluid continuation point)
\Init \Comment(Compute initial guess for average cost $v_f(0)$, see Defn. \ref{def:fluid-cont-policy})
	\Solve{ functions $G_{f}(w)$ and $g_{f}(w)$ for $w\in[0, \hat{k}m_e]$:}
		\State $G_{f}(\hat{k}m_e) = \left(\hat{k} (1-m_e/m)  + \frac{\sigma^2}{2m\hat{\theta}}\right) \frac{1}{\hat{\theta}}  + \frac{1}{m\hat{\theta}}\cdot \hat{k}m_e $  \Comment(Terminal condition for $G_{f}$)
		\State $g_{f}(\hat{k}m_e) = -\frac{1}{\hat{\theta}}$ \Comment(Terminal condition for $g_{f}$)
		\State $k_{f}(w) = \argmax_{k \in [0,w/m_e]} \theta(k) $ \Comment(Fluid optimal policy)
		\State $ 0 = \frac{w}{m}+k_{f}(w) (1-\frac{m_e}{m}) -\theta(k_{f}(w)) G_{f}(w) + \frac{\sigma^2}{2}G'_{f}(w) $ \Comment(ODE for $G_{f}$)
		\State $1 = -\theta(k_{f}(w)) g_{f}(w) + \frac{\sigma^2}{2} g_{f}'(w)$ \Comment(ODE for $g_{f}$)
	\EndSolve
\State 	$v_0 \gets v_f(0) = - \frac{G_f(0)}{g_f(0)}$
\EndInit
\Repeat
	\Solve{ policy $k_{v_n}(w)$, functions $G_{v_n}(w)$ and $g_{v_n}(w)$ for $w\in[0,W]$:}
		\State $G_{v_n}(W) = \left(\hat{k} (1-m_e/m) - v_n + \frac{\sigma^2}{2m\hat{\theta}}\right) \frac{1}{\hat{\theta}}  + \frac{1}{m\hat{\theta}}\cdot W $  \Comment(Terminal condition for $G_{v_n}$)
		\State $g_{v_n}(W) = -\frac{1}{\hat{\theta}}$ \Comment(Terminal condition for $g_{v_n}$)
		\State $k_{v_n}(w) = \argmin_{k \in [0,w/m_e]} \left\{ k\left(1-\frac{m_e}{m}\right) - \theta(k) G_{v_n}(w)  \right\}$
		\State $ v_n = \frac{w}{m}+k_{v_n}(w) (1-\frac{m_e}{m}) -\theta(k_{v_n}(w)) G_{v_n}(w) + \frac{\sigma^2}{2}G'_{v_n}(w) $ \Comment(ODE for $G_{v_n}$)
		\State $1 = -\theta(k_{v_n}(w)) g_{v_n}(w) + \frac{\sigma^2}{2} g_{v_n}'(w)$ \Comment(ODE for $g_{v_n}$)
	\EndSolve
\State {\bf update } $v_{n+1} \gets v_n - \frac{G_{v_n}(0)}{g_{v_n}(0)} $ \Comment(Newton-Raphson update)
\Until{$|G_{v_n}(0)| \leq \epsilon$}
\State \Return Cost $v_{n+1}$; Policy $k_{v_n}(w)$

\end{algorithmic}
\end{algorithm}


\textbf{Comparison with the policy iteration algorithm:} Puterman and Brumelle \cite{PutermanBrumelle1979convergence} formally proved that the policy iteration algorithm for discrete-time Markov decision processes is equivalent to the Newton-Raphson algorithm for finding the fixed point of the dynamic programming functional equation, but performed in the value function space. Puterman \cite{Puterman1977control} presented a policy iteration algorithm for control of a diffusion process in a bounded region in $\Re^n$ for finite horizon total cost optimization. It is therefore instructive to compare our average cost iteration algorithm with his policy iteration algorithm for control of diffusions. One difference is that we carry out the Newton-Raphson algorithm in the space of average cost. Another major difference is that the policy iteration algorithm alternates between policy evaluation and policy improvement steps. Our algorithm can be viewed as one where we have folded the policy evaluation and policy improvement into one step.

\section{Concluding Remarks}
\label{sec:conclusions}
The primary goal of the present paper was to propose a diffusion scaling to aid the analysis and control of State-dependent Limited Processor Sharing (LPS) systems. 
Our philosophy while designing the scaling was to fix a limiting distribution for the steady-state number of jobs in the system, and then reverse-engineer the sequence of service rate curves that yields this limit. By choosing the limiting distribution as the one of the original state-dependent system under an $M/M/$ input, our scaling captures the effect of the entire service rate curve. 
The resulting diffusion approximation leads to the choice of a near-optimal static concurrency limit.

To compute dynamic control policies, we generalized our scaling by defining it directly in terms of the service rate curves of the original system. Again, as proof-of-concept, we presented experimental results demonstrating that the dynamic policies resulting from the associated diffusion control problem are extremely close to the optimal dynamic control policies.

While carrying out our experiments for dynamic policies, we realized that there were no numerical algorithms for solving diffusion control problems in the literature that we could conveniently use for our unique setting. We thus devised two new algorithms which iterate over the average cost. The first algorithm uses binary search and thus has a linear convergence rate. The second algorithm uses the Newton-Raphson method for finding roots, and thus has a superior quadratic convergence rate. We believe that these algorithms are by themselves very useful.

\bibliographystyle{plain}
\bibliography{references}

\appendix

\section{Diffusion and Steady State Analysis for the Workload Processes}

Following from the dynamic equation \eqref{eq:workload-dynamics}, the diffusion-scaled workload is
\begin{equation}\label{eq:workload-diffusion}
  \ds W(t) = \ds W(0)+\frac{1}{r}\sum_{i=1}^{\ssp{\Lambda}(r^2t)}\ssp{v}_i 
  - \frac{1}{r}\int_0^{r^2t}\ssp{\mu}(\ssp{Z}(s))\id{\ssp{W}(s)>0}ds
\end{equation}
Now, introduce the notations
\begin{align}
  \label{eq:Kiefer-fs}
  \fls K(t,x) &= \frac{1}{r^2}\sum_{i=1}^{\cl{r^2t}}\id{\ssp{v}_i\le x},\\
  \label{eq:Kiefer-ds}
  \ds K(t,x) &= r[\fls K(t,x) - t G(x)].
\end{align}
The second term on the right-hand side of \eqref{eq:workload-diffusion} can be written as
\begin{align*}
  &\quad r\int_0^t\int_0^\infty x d\fls K(\frac{1}{r^2}\ssp{\Lambda}(r^2s),x)\\
  &=\int_0^t\int_0^\infty x d\ds K(\frac{1}{r^2}\ssp{\Lambda}(r^2s),x)
  +\frac{1}{r}\int_0^t\int_0^\infty x dG(x) d\ssp{\Lambda}(r^2s)\\
  &=\int_0^t\int_0^\infty x d\ds K(\frac{1}{r^2}\ssp{\Lambda}(r^2s),x)
  + m\int_0^t d \frac{1}{r}[\ssp{\Lambda}(r^2s)-\lambda r^2s]+\lambda rmt.
\end{align*}
The last term on the right-hand side of \eqref{eq:workload-diffusion} can be written as
\begin{align*}
  &\quad - r\int_0^t \ssp{\mu}(r\ds Z(s)) \id{\ssp{W}(r^2s)>0}ds\\
  &= \int_0^t r[\lambda m-\ssp{\mu}(r\ds Z(s))] ds + r\int_0^t \id{\ds W(s)=0}ds-\lambda rmt\\
  &= \int_0^t r[\lambda m-\ssp{\mu}(r\Delta(\ds W(s))\wedge \ssp{k})] ds 
  +  \int_0^t r[\ssp{\mu}(r\ds Z(s))-\ssp{\mu}(r\Delta(\ds W(s)) \wedge \ssp{k})] ds \\
  &\quad  + r\int_0^t \id{\ds W(s)=0}ds-\lambda rmt\\
  &= \int_0^t \ssp{\theta}\left(\Delta(\ds W(s)) \wedge \frac{\ssp{k}}{r}\right) -
              \theta\left(\Delta(\ds W(s)) \wedge \frac{\ssp{k}}{r}\right)  ds
  + r\int_0^t \id{\ds W(s)=0}ds-\lambda rmt \\
  &\quad +  \int_0^t r[\ssp{\mu}(r\ds Z(s))-\ssp{\mu}(r\Delta(\ds W(s)) \wedge \ssp{k})] ds 
  + \int_0^t \theta\left(\Delta(\ds W(s)) \wedge \frac{\ssp{k}}{r}\right) ds
\end{align*}
In summary, we can write the workload process as
\begin{equation}\label{eq:workload-diffusion-map}
  \begin{split}
    \ds W(t)
    &= \ds W(0) + \ds M_s(t) + \ds M_a(t) + \ds G_1(t) + \ds G_2(t)\\
    &\quad + \int_0^t \theta\left(\Delta(\ds W(s))\wedge \frac{\ssp{k}}{r}\right) ds 
    + r\int_0^t \id{\ds W(s)=0}ds,
  \end{split}
\end{equation}
where
\begin{align}
  \label{eq:3}
  \ds M_s(t) &= \int_0^t\int_0^\infty x d\ds K(\frac{1}{r^2}\ssp{\Lambda}(r^2s),x),\\
  \ds M_a(t) &= m\int_0^t d \frac{1}{r}[\ssp{\Lambda}(r^2s)-\lambda r^2s],\\
  \ds G_1(t) &= \int_0^t r[\ssp{\mu}(r\ds Z(s))-\ssp{\mu}(r\Delta(\ds W(s))\wedge \ssp{k})] ds,\\
  \ds G_2(t) &= \int_0^t \ssp{\theta}\left(\Delta(\ds W(s) \wedge \frac{\ssp{k}}{r})\right) -
              \theta\left(\Delta(\ds W(s)) \wedge \frac{\ssp{k}}{r}\right)  ds.
\end{align}

The following lemma is an extension of the classical one-dimensional Skorohod problem. The proof can be found in \cite{LeeWeerasinghe2011}.  
\begin{lemma}\label{lem-reflection-map}
  Suppose $g$ is a Lipschitz continuous function. For any $x\in\D(\R^+)$, there exists a unique pair $(y,z)\in\D^2(\R^+)$ satisfying
  \begin{align}
    \label{eq:var-sko-1}
    & z(t) = \int_0^tg(z(s))ds + x(t) + y(t),\\
    \label{eq:var-sko-1.5}
    & z(t)\ge 0,\quad \textrm{for all }t\ge 0,\\
    \label{eq:var-sko-2}
    & y(0) = 0 \textrm{ and $y$ is non-decreasing},\\
    \label{eq:var-sko-3}
    & \int_0^t z(s) dy(s) = 0.
  \end{align}
  More over, denote $z=\psi(x)$. The mapping $\psi:\D(\R^+)\to \D(\R^+)$ is continuous in the uniform topology on compact set. 
\end{lemma}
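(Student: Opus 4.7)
The plan is to recast equations \eqref{eq:var-sko-1}--\eqref{eq:var-sko-3} as a fixed-point problem for the classical Skorohod reflection map and apply a Picard iteration. Let $\psi_0 : \D(\R^+) \to \D(\R^+)$ denote the one-dimensional Skorohod map, which sends $u$ to $z_0 = u + \sup_{s \leq \cdot}(-u(s))^+$; this is the unique solution of \eqref{eq:var-sko-1}--\eqref{eq:var-sko-3} when $g \equiv 0$, and $\psi_0$ is well known to be $2$-Lipschitz in the uniform metric on any bounded time interval. Define the operator $\Phi : \D(\R^+) \to \D(\R^+)$ by
\[ \Phi(z)(t) = \psi_0\!\left(x(\cdot) + \int_0^{\cdot} g(z(s))\,ds\right)(t). \]
A pair $(y,z)$ satisfies \eqref{eq:var-sko-1}--\eqref{eq:var-sko-3} if and only if $z$ is a fixed point of $\Phi$ and $y$ is the associated regulator produced by $\psi_0$. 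Thus uniqueness of $y$ follows automatically from uniqueness of $z$ together with the rigidity of the standard Skorohod decomposition, and it remains to establish existence, uniqueness and continuous dependence of the fixed point.

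For existence and uniqueness, I would fix $T > 0$ and work in the complete metric space $(\D([0,T]), \|\cdot\|_T)$ with $\|f\|_t \doteq \sup_{s \leq t}|f(s)|$. Given $z_1, z_2 \in \D([0,T])$, combining the $2$-Lipschitz property of $\psi_0$ with the Lipschitz constant $L$ of $g$ gives
\[ \|\Phi(z_1) - \Phi(z_2)\|_t \leq 2 \sup_{s \leq t}\left| \int_0^s [g(z_1(u)) - g(z_2(u))]\,du \right| \leq 2L \int_0^t \|z_1 - z_2\|_s \,ds. \]
Iterating this bound inductively in the standard way yields $\|\Phi^n(z_1) - \Phi^n(z_2)\|_T \leq \frac{(2LT)^n}{n!}\|z_1 - z_2\|_T$, so some iterate of $\Phi$ is a strict contraction and Banach's fixed-point theorem produces a unique $z$ on $[0,T]$. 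Because $T$ is arbitrary and the equation is local in time, solutions on overlapping intervals coincide and can be concatenated to a unique global fixed point in $\D(\R^+)$.

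For continuity of $\psi$, apply the same estimate to two inputs: if $z_i = \psi(x_i)$ for $i = 1, 2$, then
\[ \|z_1 - z_2\|_t \leq 2\|x_1 - x_2\|_t + 2L \int_0^t \|z_1 - z_2\|_s \,ds, \]
and Gronwall's inequality yields $\|z_1 - z_2\|_T \leq 2 e^{2LT}\|x_1 - x_2\|_T$, giving continuity in the uniform topology on compact sets. The main technical obstacle is verifying that $\Phi$ actually maps $\D(\R^+)$ into itself and that $\psi_0$ retains its $2$-Lipschitz property in the c\`adl\`ag setting; the former is routine because $t \mapsto \int_0^t g(z(s))\,ds$ is continuous and $\psi_0$ preserves the c\`adl\`ag property, while the latter is a known consequence of the explicit formula for $\psi_0$. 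Once these two ingredients are in place, everything else reduces to the standard Picard--Gronwall template sketched above.
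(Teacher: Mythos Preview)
Your argument is correct: recasting the problem as a fixed point of the composition of the classical Skorohod map with the drift integral, then running a Picard iteration and closing with Gronwall, is the standard route to existence, uniqueness, and Lipschitz continuity of $\psi$ on compact time intervals. The paper does not give its own proof of this lemma; it simply cites \cite{LeeWeerasinghe2011}, so there is nothing to compare against beyond noting that your approach is exactly the kind of argument one expects to find in that reference.
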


\proofof{Theorem~\ref{thm:W_process_convergence}}
  We first study the first four terms on the right-hand side of equation \eqref{eq:workload-diffusion-map}.  For the initial condition $\ds W(0)$, its convergence to some random variable $w_0$ is part of the assumption \eqref{eq:cond-initial-u} on the initial state.

  According to Lemma~3.8 in \cite{KrichaginaPuhalskii1997}, 
  \begin{equation*}
    \int_0^t\int_0^\infty x d\ds K(\frac{1}{r^2}\ssp{\Lambda}(r^2s),x)
    \dto \sqrt\lambda m c_s M_s(t),\quad\textrm{as }r\to\infty,
  \end{equation*}
  where $M_s(t)$ is a standard Brownian motion (with zero drift and variance $1$). 
  
  It follows from the assumption~\eqref{eq:HT-arrival} that
  \begin{align*}
    \ds M_a(t)
    =m \ds \Lambda(t)\dto \sqrt\lambda m c_a  M_a(t), \quad \textrm{as }r\to\infty.
  \end{align*}
  
  We now study the terms $\ds G_1$ and $\ds G_2$. By the stochastic bound (Lemma~\ref{lem:workload-upperbound}) proved in Section~\ref{sec:state-space-collapse}, for any $\epsilon>0$, there exists $C$ such that $\prs{\Omega_r}\ge 1-\epsilon$, where $\Omega_r = \left\{\sup_{t\in[0,T]}\max\left(\ds Z(s), \Delta \ds W(s)\right)\le C\right\}$ (noting that we naturally have $\ds Z(\cdot)\le \ssp{k}/r$). According to condition \eqref{eq:cond-HT}, for any sample path in the event $\Omega_r$, we have
  \begin{align*}
    \ds G_1(t) \dto 0,\ \quad \ 
    \ds G_2(t) \dto 0,\quad \textrm{as }r\to\infty.
  \end{align*}
  
  Let $\ds Y(t)=r\int_0^t \id{\ds W(s)=0}ds$. It is easy to see that 
  \begin{equation}
    \int_0^t \ds W(s)d\ds Y(s) = 0.
  \end{equation}
  Thus $(\ds W,\ds Y)$ is the solution to the reflection mapping in Lemma~\ref{lem-reflection-map}. So 
  \begin{equation*}
    \ds W = \psi\left(\ds W(0) + \ds M_s+\ds M_a+\ds G_1+\ds G_2\right).
  \end{equation*}
  By the continuous mapping theorem, $\ds W\dto W^*$, where $W^*=\psi(w_0+ \sqrt\lambda m c_s M_s(t) + \sqrt\lambda m c_a M_a(t))$. In other words, the limit $W^*$ satisfies
  \begin{align}
    W^*(t) = w_0 + \sqrt\lambda m c_s M_s(t) + \sqrt\lambda m c_a M_a(t) -\theta(\Delta(W^*))(t) + Y^*(t),
  \end{align}
  with $Y^*(0)=0$ and being non-decreasing and 
  \begin{equation}
    \int_0^t W^*(s)d Y^*(s) = 0.
  \end{equation}
  Thus, we have shown that the diffusion limit of the workload process is an RBM with state-dependent drift $-\theta(\Delta_K(W^*(t)) \wedge K)$ and variance $\lambda m^2(c_s^2+c_a^2)$. 
  The proof of \eqref{eq:total-jobs-diffusion-limit} follows immediately from the continuous mapping theorem.
\endproof

{\vskip 12pt}
\proofof{Lemma~\ref{lemma:RBM_ss}} 
\cite[Chapter 15]{KarlinTaylor_Second} prescribed an approach based on the Kolmogorov equation to compute the stationary distribution for general diffusion processes. Here we provide an alternate derivation using the basic adjoint relationship for an RBM with state-dependent drift and variance. 

We can write $W(t)$ as
\begin{equation}
  \label{eq:RBM-state-dep-drift}
  W(t)=W(0)-\int_0^t\beta(W(\tau))ds+\int_0^t \sqrt{s(W(\tau))} dB(\tau)+Y(t),
\end{equation}
where $B$ is a standard Brownian motion and $Y$ is the regulator process that prevents $W$ from becoming negative. The process $Y$ is non-decreasing and satisfies
\begin{equation}\label{eq:push}
  \int_0^tW(\tau)dY(\tau)=0.
\end{equation}
Let $f$ be a twice differentiable function. By Ito's formula,
\begin{align*}
  f(W(t))-f(W(0))&=\int_0^t \sqrt{s(W(\tau))} f'(W(\tau))dB(\tau)\\
  &\quad+\int_0^t\left[\frac{1}{2} s(W(\tau)) f''(W(\tau))-\beta(W(\tau))f'(W(\tau))\right]d\tau\\
  &\quad+\int_0^tf'(W(\tau))dY(\tau).
\end{align*}
Note that $\int_0^tf'(W(\tau))dB(\tau)$ is a martingale, and that 
\begin{equation*}
  \int_0^tf'(W(\tau))dY(\tau) = f'(0)Y(t),
\end{equation*}
due to regulation \eqref{eq:push}.  Taking the conditional expectation with respect to the stationary distribution $\pi$ on both sides of the above formula, we have
\begin{align*}
  0=\E_{\pi}\int_0^t\left[\frac{1}{2}s(W(\tau))f''(W(\tau))-\beta(W(\tau))f'(W(\tau))\right]d\tau
  +f'(0)\E_{\pi}[Y(t)].
\end{align*}
So
\begin{equation}
  \label{eq:BAR}
  \int_0^\infty\left[\frac{1}{2}s^2(W(\tau))f''(w)-\beta(w)f'(w)\right]d\pi(w)
  +f'(0) \frac{\E_{\pi}[Y(t)]}{t}=0.
\end{equation}
This is known as the \emph{basic adjoint relation} (BAR) in the literature. Our goal is to guess a functional form for $\pi$ so that the integral in the above expression can be decomposed as $f'(0)$ times a term independent of $f$.
 

Consider the following derivative:
\begin{align*}
  &\quad \left[ c(w) h(w) e^{\int_{0}^w g(u)du}\right]' \\
  &= \Big[ c(w) h'(w) + c(w) h(w) g(w) + c'(w) h(w)  \Big] e^{\int_{0}^w g(u)du} \\
  &=\Big[ c(w) h'(w) + \big[ c(w) g(w) + c'(w)\big] h(w) \Big] e^{\int_{0}^w g(u)du} 
\end{align*}
Now substituting $h(w) = f'(w)$ and $c(w) = \frac{1}{2}s(w)$, we obtain the expression inside the square brackets in the integral term of BAR if $ c'(w) + c(w)g(w) = -\mu(w)$. Equivalently, $g(w) = \frac{-\mu(w) - \frac{1}{2}s'(w)}{\frac{1}{2}s(w)}$.
Therefore, letting $d\pi(w) = \alpha \cdot e^{-\int_{0}^w \frac{\mu(v) + \frac{1}{2}s'(v)}{\frac{1}{2}s(v)} dv}$,  
\begin{align*}
  &\quad \int_{0}^{\infty} \left[\frac{1}{2}s(w) f''(w) - \mu(w) f'(w) \right] d\pi(w) \\
  &= \int_{0}^{\infty} \left[\frac{1}{2}s(w) f''(w) - \mu(w) f'(w) \right] \alpha \cdot 
  e^{-\int_{0}^w \frac{\mu(v) + \frac{1}{2}s'(v)}{\frac{1}{2}s'(v)} dv} \\
  &= \alpha \int_{0}^{\infty} d \left[ \frac{1}{2} s(w) f'(w) e^{-\int_{0}^w \frac{\mu(v) + 0.5 s'(v)}{0.5 s(v)} dv}   \right] \\
  &= \alpha \left[ \frac{1}{2} s(\infty) f'(\infty) e^{-\int_{0}^{\infty} \frac{\mu(v)+0.5 s'(v)}{0.5 s(v)} dv}
    - \frac{1}{2} s(0) f'(0) \right] \\
  &= -\frac{1}{2} s(0) \alpha f'(0).
\end{align*}
If we plug in $f(w)=w$ into the BAR \eqref{eq:BAR}, we will get $\frac{\E_{\pi}[Y(t)]}{t} = \frac{1}{2} s(0) \alpha$. This proves the lemma. 
\endproof


\proofof{Proposition~\ref{prop:final_approximation}}
We start with Lemma~\ref{lemma:RBM_ss} and substitute state-dependent variance and drift as
\begin{align*}
s(w) &= \lambda m^2 (c^2_a+c^2_s) \\
\beta(w) &= 
\begin{cases} 
\theta(w/m_e)  = - \lambda m \left.\frac{d \log f(x)}{dx}\right|_{x=\frac{w}{m_e}} & w \leq K\cdot m_e \\
\theta(K)  = - \lambda m \left. \frac{d \log f(x)}{dx}\right|_{x=K} & w > K\cdot m_e 
\end{cases}
\end{align*}
To obtain a further simplification, we use our assumption that $\frac{d \log f(x)}{dx}$ is a constant for $x \geq K$, and therefore 
\[ \beta(w) = -\lambda m \left.\frac{d \log f(x)}{dx}  \right|_{x = \frac{w}{m_e}}, \quad \forall\ w \in [0, \infty) \] 
We then get
\begin{align*}
\prob{W^*(\infty) \leq w} &= \frac{\alpha'}{\lambda m^2 (c^2_a+c^2_s)} \int_0^w e^{ \frac{2}{\lambda m^2 (c^2_a+c^2_s)} \int_0^u \lambda m d \log f(v/m_e)   } du \\
&= \frac{\alpha'}{\lambda m^2 (c^2_a+c^2_s)} \int_0^w e^{ \frac{2 \lambda m m_e}{\lambda m^2 (c^2_a+c^2_s)} \int_0^{u/m_e} d \log f(z)   } du \\
&= \frac{\alpha''}{\lambda m^2 (c^2_a+c^2_s)} \int_0^w e^{ \frac{1+c^2_s}{ c^2_a+c^2_s} \log f(u/m_e)   } du \\
&= \frac{\alpha''}{\lambda m^2 (c^2_a+c^2_s)} \int_0^w f\left( \frac{u}{m_e}\right)^{ \frac{1+c^2_s}{ c^2_a+c^2_s}   } du \\
&= \alpha  \int_0^\frac{w}{m_e} f(u)^{ \frac{1+c^2_s}{ c^2_a+c^2_s} } du 
\end{align*}
which proves \eqref{eqn:W_RBM_ss}.\\
From \eqref{eq:total-jobs-diffusion-limit} and the continuous mapping theorem
\begin{equation}
  \label{eq:X-W-ss}
  X^*(\infty) = \frac{W^*(\infty) \wedge Km_e }{m_e} + \frac{(W^*(\infty) - Km_e)^+}{m}.
\end{equation}
It now follows that
\begin{align*}
\prob{X^*(\infty) \leq x} &= 
\begin{cases}
\prob{W^*(\infty) \leq x m_e} & x \leq K \\
\prob{W^*(\infty) \leq K m_e + (x-K) m} & x > K
\end{cases}
\end{align*}
which, together with \eqref{eqn:W_RBM_ss}, gives \eqref{eq:X_RBM_ss}. \\
To find $\expct{X^*(\infty)}$, we will find it convenient to start with \eqref{eqn:W_RBM_ss} and rewrite it as 
\begin{align}
\prob{\frac{W^*(\infty)}{m_e} \leq z} &= \alpha \int_0^z f(x)^{\frac{c^2_s+1}{c^2_s+c^2_a}} dx.
\end{align}
Therefore, $f(x)^{\frac{c^2_s+1}{c^2_s+c^2_a}}$ is the density of $\frac{W^*(\infty)}{m_e}$. Now we again use the map \eqref{eq:X-W-ss} to write
\begin{align*}
\expct{X^*(\infty)} &= \expct{ \frac{W^*(\infty)}{m_e} \wedge K } + \frac{m_e}{m}\expct{ \left(\frac{W^*(\infty)}{m_e}-K \right)^+} \\
&= \frac{\int_{0}^\infty (x\wedge K) f(x)^\cratio dx}{\int_0^\infty f(x)^{\cratio} dx} + \frac{c^2_s+1}{2} \frac{\int_0^\infty (x-K)^+ f(x)^\cratio dx}{\int_0^\infty f(x)^\cratio dx},
\end{align*}
which proves \eqref{eq:X-expectation-approx}.
\endproof


\proofof{Theorem~\ref{thm:XW_ss_convergence}} 
This theorem essentially establishes the interchange of the steady state and heavy traffic limits for the constructed sequence of Sd-LPS models. Proving such an interchange usually involves quite a complicated analysis of a well-constructed Lyapunov function (see, for example, \cite{GamarnikZeevi2006} and \cite{LeeWeerasinghe2011}). Taking advantage of the existing studies, we use a coupling argument to prove the interchange for our model. The proofs for both the workload and queue length essentially follow the same argument. We only focus on the queue length in this proof. 

For each $r$, we construct an auxiliary system which takes exactly the same arrival stream as the $r$th Sd-LPS system and the same initial condition. 
Denote
\begin{equation*}
  \ssp\mu_\dagger = \ssp\mu(\ssp{k}). 
\end{equation*}
When the number of jobs in the auxiliary system is more than $\ssp{k}$, the server works at rate $\ssp\mu_\dagger$. When the number of jobs drops below $\ssp{k}$, the server works at speed 0 (in other words it completely shuts down). Without loss of generality, we assume that the initial number of jobs is larger than $\ssp{k}$. Let $\ssp{Q}(t)$ and $\ssp{Q}_\dagger(t)$ denote the number of jobs in the queue in the Sd-LPS and auxiliary systems, respectively. It is clear that 
\begin{equation}
  \label{eq:couple-Q}
  \ssp{Q}(t)<\ssp{Q}_\dagger(t).
\end{equation}
Due to parallel processing, overtaking can happen in each system, i.e., the $j$th arriving job may leave the system earlier than the $i$th arriving job even if $j>i$. However, due to the coupling, the $i$th arriving job in the auxiliary system can never enter service earlier than the corresponding job in the Sd-LPS system. 

By condition \eqref{eq:cond-stability}, $\ssp\mu_\dagger>\lambda m$ for all large enough $r$. So both $\ssp{Q}$ and $\ssp{Q}_\dagger$ are stationary. Let $\ssp\pi$ denote the stationary probability measure of the diffusion-scaled process $\ds{Q}$. Similarly, Let $\ssp\pi_\dagger$ denote the stationary probability measure of the diffusion-scaled queue length $\ds{Q}_\dagger$ in the coupled system. The key step to showing that $\ssp{X}(\infty)\dto X^*(\infty)$ as $r\to\infty$ is to show that the family of probability measures $\{\ssp\pi\}_{r\in\N}$ is tight. (Since $\ds{X}(t)\le \ds{Q}(t)+\ssp{k}/r$, studying only the queue length suffices.) Readers can refer to the proof of Theorem~8 in \cite{GamarnikZeevi2006} for a standard argument of how to prove the convergence using tightness. We now focus on proving the tightness of probability measures $\{\ssp\pi\}_{r\in\N}$.

We can model the $r$th auxiliary system as if it has $\ssp{k}$ identical servers. All the servers either work or stop in perfect synchronization.  Denote by $\ssp S_{\dagger,i}(\cdot)$, $i=1,\ldots, \ssp{k}$, independent renewal processes with inter-renewal time following distribution $G(\cdot/\ssp{k})$, where $G$ is the distribution of job sizes. In other words, the inter-renewal time has mean $m\ssp{k}$ and SCV $c_s^2$.
The queueing dynamics of the $r$th auxiliary system can be written as 
\begin{align*}
  \ssp Q_\dagger(t) = \ssp Q_\dagger(0)
  + \ssp \Lambda(t) -\sum_{i=1}^{\ssp{k}}\ssp S_{\dagger,i}(\ssp{B}_\dagger(t)),
\end{align*}
where $\ssp{B}_\dagger(t)$ is the cumulative busy time for each of the servers. Applying the diffusion scaling, we have
\begin{align}
  \label{eq:sko-mapping}
  \ds Q_\dagger(t) = \ds Q_\dagger(0)
  + \ds \Lambda(t) -\sum_{i=1}^{\ssp{k}}\ds S_{\dagger,i}(\frac{1}{r^2}\ssp{B}_\dagger(r^2t))
  + r(\lambda - \ssp{\mu}_\dagger /m)t
  + \frac{\ssp{\mu}_\dagger}{rm}(r^2t-\ssp{B}(r^2t))
\end{align}
where 
\begin{align*}
  \ds \Lambda(t) = \frac{1}{r}\left(\ssp\Lambda(r^2t)-\lambda r^2 t\right),
  \quad
  \ds S_{\dagger,i} = \frac{1}{r}\left(\ssp S_{\dagger,i}(r^2t)-\frac{\ssp\mu_\dagger r^2}{m\ssp{k}} t\right).
\end{align*}
Note that $r^2t-\ssp{B}(r^2t)$ increases only when $\ds Q_\dagger(t)=0$, so \eqref{eq:sko-mapping} is the same as the Skorohod mapping for the $G/G/1$ queue except that the service process is the superposition of $\ssp{k}$ renewal processes with a much lower speed (roughly $1/\ssp{k}\approx 1/r$) rather than a single renewal process. We now take advantage of the tools developed in \cite{LeeWeerasinghe2011} by verifying that the processes $\ds \Lambda(t)$ and $\ds S_{\dagger,i}$ satisfy condition (A8.p) there. That is, we want to show
\begin{align}
  \label{eq:A8p-arr}
  &\E \left[\sup_{0\le s\le t}|\ds \Lambda(s)|^2\right]<C(1+t),\\
  \label{eq:A8p-ser}
  &\E \left[\sup_{0\le s\le t}|\ds S_{\dagger,i}(s)|^2\right]<\frac{C}{r}(1+t).
\end{align}
Condition \eqref{eq:A8p-arr} directly follows from assumption \eqref{eq:HT-arrival}, following the same argument as in \cite{LeeWeerasinghe2011} (essentially using Lemma~3.5 in \cite{BudhirajaGhosh2006}). To verify \eqref{eq:A8p-ser}, we need to further investigate the proof of Lemma~3.5 in \cite{BudhirajaGhosh2006}. It follows from (3.31) and (3.32) in the proof that the result of Lemma~3.5 can be enhanced as follows: The right-hand side of the second inequality in (3.20), which is $C^*(1+t)$, can be replaced by $\frac{2}{r}+\frac{C_2}{r^2}+3C_2t$. Since our renewal process $\ssp S_{\dagger,i}$ has speed $1/\ssp{k}$ rather than 1, by time change, we can replace the time $t$ by $\frac{t}{\ssp{k}}$. So $\E \left[\sup_{0\le s\le t}|\ds S_{\dagger,i}(s)|^2\right]<\frac{2}{r}+\frac{C_2}{r^2}+3C_2 \frac{t}{\ssp{k}}\le \frac{3}{r}+6KC_2 \frac{t}{r}$ for all large enough $r$. 
This proves \eqref{eq:A8p-ser}. Then following exactly the same argument, Theorem~3.3 in \cite{LeeWeerasinghe2011} holds for our problem. This implies Theorem~3.2 in \cite{LeeWeerasinghe2011}, i.e., $\sup_r\int_0^\infty w\ssp\pi_\dagger(dw)<\infty$. By the coupling construction \eqref{eq:couple-Q}, we have $\int_0^\infty x\ssp\pi(dx)<\int_0^\infty x\ssp\pi_\dagger(dx)$. This implies tightness of $\{\ssp\pi\}_{r\in\N}$.
\endproof


\section{State Space Collapse for the Sd-LPS system}
\label{sec:state-space-collapse}

We introduce a strengthened version of the mapping $\Delta_K$ as the follows.  Let $\Delta_{K,\nu}:\R_+\to\M\times\M$ be the lifting map associated with the probability measure $\nu$ and constant $K$ given by
\begin{equation*}
  \Delta_{K,\nu} w=\Bigl(\frac{(w-K\beta_e)^+}{\beta}\nu,
  \frac{w\wedge K\beta_e}{\beta_e}\nu_e\Bigr)
  \quad\textrm{ for }w\in\R_+.
\end{equation*}

We aim to prove the following full version of the SSC 
\begin{theorem}[Full State Space Collapse]\label{thm:full-ssc}
  Under the conditions \eqref{eq:HT-arrival}--\eqref{eq:cond-HT} and \eqref{eq:cond-initial}--\eqref{eq:cond-init-ssc}, for any $T>0$,
  \begin{equation*}
    \sup_{t\in[0,T]}\pov[(\ds\buf(t),\ds\ser(t)),\Delta_{K,\nu}\ds W(t)]
    \dto 0\quad\textrm{ as }r\to\infty.    
  \end{equation*}
\end{theorem}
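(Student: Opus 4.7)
The plan is to adapt the framework developed by \cite{ZhangDaiZwart2011} for the constant-rate LPS queue, exploiting the observation that at the \emph{fluid} scale the state-dependent service rate degenerates to the constant $\lambda m$. Indeed, \eqref{eq:cond-HT} yields $\ssp{\mu}(rx) = \lambda m - \ssp{\theta}(x)/r$ with $\ssp{\theta}$ converging locally uniformly to a bounded function $\theta$, so the state-dependence contributes only an $O(1/r)$ perturbation to the server speed at the fluid scale. This essentially reduces the fluid analysis to the state-independent case already treated in \cite{ZhangDaiZwart2011}; the state-dependent drift only becomes visible at the diffusion scale, where it has already been accounted for in Theorem~\ref{thm:W_process_convergence}.

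First I would establish stochastic compactness. Using the coupling with a slower auxiliary $G/G/k$ system that underlies Lemma~\ref{lem:workload-upperbound} (as in the proof of Theorem~\ref{thm:XW_ss_convergence}) together with the higher-moment assumption \eqref{eq:cond-initial-u}, one obtains that $\{(\ds{\buf}, \ds{\ser})\}$ is tight in $\D([0,T], \M\times\M)$ and that $\inn{\chi^{1+p}}{\ds\buf(t)+\ds\ser(t)}$ is stochastically bounded uniformly in $t\in[0,T]$. The extra moment is needed to upgrade weak convergence of the measures to convergence of the mass integrals $\inn{1}{\cdot}$ and $\inn{\chi}{\cdot}$, preventing mass escape to infinity.

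Next I would pass to the fluid limit. Define $\fls{\buf}(t) = \frac{1}{r}\ssp{\buf}(rt)$ and $\fls{\ser}(t) = \frac{1}{r}\ssp{\ser}(rt)$. Applying functional-law-of-large-numbers versions of \eqref{eq:HT-arrival} and the renewal structure of the cumulative service \eqref{eq:cumulative}, and noting that the state-dependent rate deviation integrates to $o(1)$ at the fluid scale, any subsequential limit $(\bar{\buf}, \bar{\ser})$ satisfies exactly the constant-rate LPS fluid model equations of \cite{ZhangDaiZwart2011} with server rate $\lambda m$ and concurrency limit $K$. The invariant manifold of that fluid model is known to coincide with the range of $\Delta_{K,\nu}$: on the manifold, jobs in the buffer are distributed according to $\nu$ (they are yet to be served), while jobs in service have residual sizes distributed according to $\nu_e$ (by an inspection-paradox argument arising from FCFS admission).

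Finally I would apply a hydrodynamic/multiscale argument to transfer the invariant-manifold result from the fluid scale to the diffusion scale. The diffusion-scaled process restarted at time $r^2 t$ and observed over a window of length $rL$ is, after rescaling, a fluid-scale process of length $L$ with initial condition of bounded workload (by Step~1). By the fluid-limit step, this window process is close to a fluid trajectory, and by uniform attraction of the fluid model to its invariant manifold (as proved in \cite{ZhangDaiZwart2011}) the endpoint is $o(1)$-close to $\Delta_{K,\nu}$ applied to the workload at that endpoint. Sending $r\to\infty$ followed by $L\to\infty$ delivers
\begin{equation*}
\sup_{t\in[0,T]} \pov\bigl[(\ds\buf(t),\ds\ser(t)),\,\Delta_{K,\nu}\ds W(t)\bigr] \dto 0.
\end{equation*}
The main obstacle is this last step: one must show that uniformly over $t\in[0,T]$ and over the compact family of initial conditions reachable with bounded workload, the fluid trajectories enter any $\epsilon$-neighborhood of the invariant manifold in a time $L(\epsilon)$ independent of $r$, and further that the accumulated $O(1/r)$ state-dependent perturbation to the drift does not nudge the trajectory off the manifold by more than $o(1)$ over the relevant $O(r)$-length windows before diffusion rescaling.
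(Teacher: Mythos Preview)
Your proposal is essentially the same route the paper takes: reduce to the constant-rate LPS fluid model via the observation $\ssp{\mu}(r\,\cdot)=\lambda m+O(1/r)$, obtain uniform workload and $(1+p)$-moment bounds, and then run Bramson's shifted-fluid-scaling (``hydrodynamic'') machinery from \cite{ZhangDaiZwart2011} to pull the diffusion-scaled process onto the invariant manifold $\Delta_{K,\nu}$. One small correction: the workload bound in Lemma~\ref{lem:workload-upperbound} is obtained not by the $G/G/k$ coupling of Theorem~\ref{thm:XW_ss_convergence} but by dominating $\ds W$ with a reflected process having the constant drift $-\underline{\theta}=-\inf_{x\in[0,K]}\theta(x)$; also, the paper works directly with the shifted fluid family $\{(\sfs\buf,\sfs\ser)\}_{l\le rT}$ and proves tightness there rather than first on the diffusion scale, but this is only an organizational difference.
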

It is clear that Theorem~\ref{thm:full-ssc} implies Proposition~\ref{prop:state_space_collapse}. The rest of this section is devoted to the proof of the full SSC.

\subsection{Tightness of Shifted Fluid-Scaled Processes}
The key to proving SSC, which was originally developed by \cite{Bramson1998}, is to ``chop'' the diffusion-scaled processes into pieces.\\
\noindent\textbf{Shifted Fluid Scaling}
Introduce,
\begin{equation}
  \sfs\buf(t)=\frac{1}{r}\ssp{\buf}(rl+rt),\quad
  \sfs\ser(t)=\frac{1}{r}\ssp{\ser}(rl+rt),
\end{equation}
for all $m\in\N$ and $t\ge 0$.
To see the relationship between these two scalings, consider the diffusion-scaled process on the interval $[0,T]$, which corresponds to the interval $[0,r^2T]$ for the unscaled process.  Fix a constant $L>1$, the interval will be covered by $\fl{rT}+1$ overlapping intervals 
\[[rl,rl+rL]{\hskip 15 pt}l=0,1,\cdots,\fl{rT}.\] 
For each $t\in[0,T]$, there exists an $l\in\{0,\cdots,\fl{rT}\}$ and $s\in[0,L]$ (which may not be unique) such that $r^2t=rl+rs$. Thus 
\begin{equation}\label{eq:relation-frm-dr}
  \ds{\buf}(t)=\sfs{\buf}(s),\quad\ds{\ser}(t)=\sfs{\ser}(s).
\end{equation}
This will serve as a key relationship between fluid and diffusion-scaled processes.

The quantities $\ssp{Q}(\cdot)$, $\ssp{Z}(\cdot)$, $\ssp{X}(\cdot)$, $\ssp{W}(\cdot)$ are essentially functions of $(\ssp{\buf}(\cdot),\ssp{\ser}(\cdot))$, so the scaling for these quantities is defined as the functions of the corresponding scaling for $(\ssp{\buf}(\cdot),\ssp{\ser}(\cdot))$. For example
\begin{align*}
  \sfs{W}(t)&=\inn{\chi}{\sfs\buf(t)+\sfs\ser(t)}=\frac{1}{r}\ssp{W}(rl+rt).
\end{align*}
We define the shifted fluid scaling for the arrival process as
\begin{equation*}
  \sfs \Lambda(t)=\frac{1}{r}\ssp{\Lambda}(rl+rt),
\end{equation*}
for all $t\ge 0$.  By \eqref{eq:B(t)}, the shifted fluid scaling for
$\ssp{B}(\cdot)$ is
\begin{equation*}
  \sfs B(t)=\sfs E(t)-\sfs Q(t),
\end{equation*}
for all $t\ge 0$.
A shifted fluid-scaled version of the stochastic dynamic equations
\eqref{eq:stoc-dym-eqn-B} and \eqref{eq:stoc-dym-eqn-S} can be written
as, for any $A\subset (0,\infty)$, $0\le s\le t$,
\begin{eqnarray}
  \label{eq:mechanism-B}
  &&\begin{split}
    \sfs\buf(t)(A)&=\sfs\buf(s)(A)
    +\frac{1}{r}\sum_{i=r\sfs E(s)+1}^{r\sfs E(t)}\delta_{v_i}(A)\\
    &\quad-\frac{1}{r}\sum_{i=r\sfs B(s)+1}^{r\sfs B(t)} \delta_{v_i}(A),
  \end{split}\\
  \label{eq:mechanism-S}
  &&\begin{split}
    \sfs\ser(t)(A)&=\sfs\ser(s)(A+\ssp{S}(rl+rs,rl+rt))\\
                  &\quad+\frac{1}{r}\sum_{i=r\sfs B(s)+1}^{r\sfs B(t)}
                   \delta_{\ssp{v}_i}(A+\ssp{S}(\ssp{\tau}_i,rl+rt)).
  \end{split}
\end{eqnarray}
We point out that the cumulative service process $\ssp{S}$ is never scaled because it tracks the amount of service received by each individual customer. However, via some algebra we can see that
\begin{equation}
  \label{eq:accum-ser-sfs}
  \ssp{S}(rl+rs,rl+rt) 
  = \int_{rl+rs}^{rl+rt} \frac{\ssp{\mu}(\ssp{Z}(\tau))}{\ssp{Z}(\tau)}d\tau
  = \int_{s}^{t} \frac{\ssp{\mu}(r\sfs{Z}(\tau))}{\sfs{Z}(\tau)}d\tau.
\end{equation}
This gives two interesting observations. First, the shifted fluid scaling is essentially fluid scaling, meaning the shifted fluid-scaled processes should be close to some fluid model solutions. Second, the corresponding fluid model is essentially the same as the fluid model in \cite{ZhangDaiZwart2011} since by \eqref{eq:cond-HT},
\begin{equation*}
  \ssp{\mu}(r\sfs{Z}(\tau)) = 1 + O^+(\frac{1}{r}),
\end{equation*}
where $O^+(1/r)$ means the quantity is positive and of the same order as $1/r$ when $r\to\infty$. So
\begin{equation}
  \label{eq:accum-ser-r-est}
  \ssp{S}(rl+rs,rl+rt) 
  = \int_{s}^{t} \frac{1}{\sfs{Z}(\tau)}d\tau
  +O^+(\frac{1}{r}).
\end{equation}
Intuitively, $\sfs{Z}$ is close to some fluid limit denoted by $\tilde Z$ as $r$ becomes very large (in the mathematical sense of convergence in probability), then 
\begin{equation}
  \ssp{S}(rl+rs,rl+rt) 
  \dto \int_{s}^{t} \frac{1}{\tilde{Z}(\tau)}d\tau.
\end{equation}
So we can conclude that the underlying fluid is the same as the one for the regular LPS system. Thus, we can use existing properties developed in \cite{ZhangDaiZwart2011}.
We hope to make the argument rigorous and concise in the follows.

\noindent\textbf{Some Bound Estimation}\\
The tightness property, which guarantees that the shifted fluid-scaled process $\{\sfs{\buf},\sfs{\ser}\}$ has a convergent subsequence, can be proved in a similar way as in \cite{ZhangDaiZwart2011}. There are two key differences. First is the service process as pointed out before. Second is that \cite{ZhangDaiZwart2011} heavily relies on the known result on the diffusion of the workload (see Proposition~2.1). However, we do not have such a diffusion limit of workload a priori. Instead, we try to prove such a diffusion limit by SSC. Looking into the details of the machinery in \cite{ZhangDaiZwart2011}, what essentially is needed for the workload process is some kind of stochastic bound, which we prove in the following lemma.

\begin{lemma}[An Upper Bound of the Workload]
  \label{lem:workload-upperbound}
  For any $\eta>0$ there exists a constant $M$ such that
  \begin{equation}\label{ineq:W-bound}
    \prs{
      \max_{l\le {rT}}\sup_{t\in[0,L]}\sfs{W}(t)<M
    }>1-\eta.
  \end{equation}
\end{lemma}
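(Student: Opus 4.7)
The plan is to dominate the Sd-LPS workload sample-path-wise by the workload of an auxiliary system for which established heavy-traffic machinery supplies the needed tightness. The construction mirrors the coupling in the proof of Theorem~\ref{thm:XW_ss_convergence}: the auxiliary system takes the same arrival stream, the same job sizes, and the same initial state as the $r$-th Sd-LPS system, but its server runs at the constant rate $\ssp\mu_\dagger = \ssp\mu(\ssp k)$ whenever the number of jobs exceeds $\ssp k$ and idles otherwise. By~\eqref{eq:cond-HT} and the stability assumption~\eqref{eq:cond-stability}, $\ssp\mu_\dagger - \lambda m > 0$ for all sufficiently large $r$, so the auxiliary is positive recurrent.

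First I would establish the pathwise comparison $\ssp W(t) \leq \ssp W_\dagger(t)$ for every $t \geq 0$. The coupling guarantees that each job enters service no later in the Sd-LPS than in the auxiliary, so sample-path-wise the set of jobs present in the Sd-LPS at time $t$ is a subset of the set present in the auxiliary, and the residual size of each shared job in the Sd-LPS is bounded by its residual size in the auxiliary. Summing residuals over the two nested populations yields the workload ordering, with the excess jobs in the auxiliary absorbing any slack created at instants when $\ssp Z(t) < \ssp k$ and the Sd-LPS transiently processes at a rate below $\ssp\mu_\dagger$.

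With the workload comparison in hand, the task reduces to bounding $\sfs W_\dagger$. Following exactly the argument used later in the proof of Theorem~\ref{thm:XW_ss_convergence}, I would model the auxiliary as a bank of $\ssp k$ synchronized servers with iid service times from distribution $G(\cdot/\ssp k)$, cast its workload through the Skorokhod representation~\eqref{eq:sko-mapping}, and verify conditions (A8.p) of \cite{LeeWeerasinghe2011} via the estimates~\eqref{eq:A8p-arr}--\eqref{eq:A8p-ser}. Their Theorem~3.3 then yields $C$-tightness of the diffusion-scaled auxiliary workload $\ds W_\dagger$ on $\mathbf D([0, T+L])$; in particular $\sup_{s \in [0, T+L]} \ds W_\dagger(s)$ is tight as $r \to \infty$.

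Finally I would convert the diffusion-scale tightness into the required shifted-fluid-scale bound via the reparameterization
\begin{equation*}
\sfs W(t) \;\leq\; \sfs W_\dagger(t) \;=\; \frac{1}{r}\ssp W_\dagger(rl + rt) \;=\; \ds W_\dagger\!\left(\tfrac{l + t}{r}\right),
\end{equation*}
whose argument lies in $[0, T + L/r] \subseteq [0, T+L]$ whenever $l \leq rT$ and $t \in [0, L]$. Therefore $\max_{l \leq rT}\sup_{t \in [0,L]} \sfs W(t) \leq \sup_{s \in [0, T+L]} \ds W_\dagger(s)$, and tightness of the right-hand side yields an $M$ satisfying the required probability bound. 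The main obstacle is the workload-level domination in the first step: the non-monotonicity of $\ssp\mu(\cdot)$ rules out any crude rate-by-rate comparison of cumulative service processes, so one must work at the level of individual coupled jobs and their coupled completion epochs rather than appeal directly to the queue-length coupling $\ssp Q(t) \leq \ssp Q_\dagger(t)$ of Theorem~\ref{thm:XW_ss_convergence}.
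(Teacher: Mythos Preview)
Your coupling approach is genuinely different from the paper's, and the central step---the pathwise inequality $\ssp W(t)\le \ssp W_\dagger(t)$---does not hold. The paper never attempts a workload-level sample-path comparison; instead it works directly with the diffusion-scaled representation \eqref{eq:workload-diffusion-map} and replaces the state-dependent drift by its infimum $\underline\theta=\inf_{x\in[0,K]}\theta(x)$, obtaining a dominating process $\ds W_1$ that, via Lemma~\ref{lem-reflection-map}, converges to a one-dimensional RBM. The stochastic boundedness then follows from tightness of that RBM, without ever invoking a second queueing system.

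The gap in your argument is concrete. Your claim that ``the residual size of each shared job in the Sd-LPS is bounded by its residual size in the auxiliary'' is false once the per-job rate $\ssp\mu_\dagger/\ssp k$ in the auxiliary exceeds the per-job rate $\ssp\mu(n)/n$ in the Sd-LPS for some $n<\ssp k$. Take $K=2$, $\mu(1)=0.1$, $\mu(2)=1$, and start both systems with two jobs of sizes $0.1$ and $10$. After the small job departs at time $0.2$, the auxiliary idles while the Sd-LPS creeps along at rate $0.1$. A new arrival at time $1$ restores both to two jobs, and from then on both run at aggregate rate $1$; but the large job finishes \emph{earlier} in the Sd-LPS, after which the Sd-LPS drops to rate $0.1$ while the auxiliary continues at rate $1$ for a further interval. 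One checks that shortly after the auxiliary's large job departs, the remaining job has strictly smaller residual in the auxiliary than in the Sd-LPS, so $\ssp W(t)>\ssp W_\dagger(t)$. Thus neither the residual ordering nor the workload ordering survives, and no purely job-level coupling argument will rescue it.

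There is also an internal inconsistency: you first posit ``the same job sizes'' to argue the residual comparison, then switch to the independent-renewal representation with service distribution $G(\cdot/\ssp k)$ to import the tightness from \cite{LeeWeerasinghe2011}. These are two different auxiliary systems; the Skorokhod representation \eqref{eq:sko-mapping} is not available for the first one. The paper's constant-drift bound sidesteps all of this by never leaving the workload decomposition of the original system.
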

\proof
Using the relationship between the shifted fluid scaling and diffusion scaling, we essentially need to prove that 
  \begin{equation*}
    \prs{
      \sup_{t\in[0,L]}\ds{W}(t)<M
    }>1-\eta.
  \end{equation*}
 Recall the representation \eqref{eq:workload-diffusion-map} for the diffusion-scaled workload processes.
Let $\underline\theta = \inf_{x\in[0,K]}\theta(x)$, which is finite due to condition \eqref{eq:cond-HT}, so the process $\ds W_1$ satisfying
 \begin{equation*}
    \ds W_1(t)
    = \ds W(0) - \underline\theta t + \ds M_s(t) + \ds M_a(t) + \ds G_1(t) + \ds G_2(t)
    + r\int_0^t \id{\ds W_1(s)=0}ds
\end{equation*}
is an upperbound of $\ds W$ due to the definition of $\underline\theta$ and condition \eqref{eq:cond-stability}. By Lemma~\ref{lem-reflection-map}, $\ds W_1$ converges to a driftless RBM, which is stochastically bounded. This implies the result.
\endproof

Such a stochastic bound of the workload process helps to establish some useful bound estimates for the stochastic processes underlying the Sd-LPS model. 
\begin{lemma}[Further Bound Estimations]
  \label{lem:bound-estimation}
  For any $\eta>0$, there exists a constant $M>0$ and a probability event $\Omega^r_B(M)$ for each index $r$ such that
  \begin{equation}\label{ineq:Omega-B}
    \liminf_{r\to\infty}\prs{\Omega^r_B(M)}\ge 1-\eta,
  \end{equation}
  and on the event $\Omega^r_B(M)$, we have
  \begin{align}
    \label{eq:Q-bound}
    &\max_{l\le\fl{rT}}\sup_{t\in[0,L]} \sfs Q(t)\le M,\\
    \label{eq:mvp-p-bound}
    &\max_{l\le\fl{rT}}\sup_{t\in[0,L]}
       \inn{\chi^{1+p}}{\sfs\buf(t)+\sfs\ser(t)}\le M.
  \end{align}
\end{lemma}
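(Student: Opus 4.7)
The plan is to construct $\Omega^r_B(M)$ as an intersection of several high-probability good events, on each of which a deterministic pathwise bound is available. Set $\Omega^r_B(M) := \Omega^r_W \cap \Omega^r_\Lambda \cap \Omega^r_0 \cap \Omega^r_v$, where $\Omega^r_W$ is the workload-bound event supplied by Lemma~\ref{lem:workload-upperbound} (giving $\sup\bar W^{(r,l)} \le M_W$), $\Omega^r_\Lambda$ is an FLLN-type event for the renewal arrival process giving $\bar \Lambda^{(r,l)}(t) \le \lambda(T+L)+1 =: M_\Lambda$ uniformly in $l \le \lfloor rT\rfloor, t\in[0,L]$, $\Omega^r_0$ encodes the initial-state bounds $\langle \chi^{1+p},\ds\buf(0)+\ds\ser(0)\rangle \le M_0$ and $\ds X(0) \le M_0$ implied by \eqref{eq:cond-initial}--\eqref{eq:cond-initial-u}, and finally $\Omega^r_v := \{\tfrac{1}{r}\sum_{i=1}^{\lceil M_\Lambda r\rceil} v_i^{1+p} \le M_v\}$ where $M_v := \mathbb{E}[v^{1+p}] M_\Lambda + 1$, made possible by SLLN applied to the i.i.d.\ sample $(v_i^{1+p})$. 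Each of the four events can be made to hold with probability $\ge 1-\eta/4$ for all sufficiently large $r$, so $\prs{\Omega^r_B(M)}\ge 1-\eta$ for a suitable $M$.

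For \eqref{eq:Q-bound} on the good event, use the elementary pathwise inequality $\ssp Q(s) \le \ssp \Lambda(s) + \ssp X(0)$ (no job can be in the queue without having either arrived by time $s$ or been present at time $0$), which after shifted fluid scaling yields $\sfs Q(t) \le \sfs\Lambda(t) + \ds X(0) \le M_\Lambda + M_0$. For \eqref{eq:mvp-p-bound}, partition the jobs currently present at time $rl+rt$ into initial jobs and post-$0$ arrivals. The contribution of an initial job to the measure $\sfs\buf(t)+\sfs\ser(t)$ is supported at a point no larger than its time-$0$ atom in $\ds\buf(0)+\ds\ser(0)$, while a post-$0$ arrival contributes at a point no larger than its original size $v_i$. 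Raising to the $(1+p)$-th power and summing gives $\inn{\chi^{1+p}}{\sfs\buf(t)+\sfs\ser(t)} \le \inn{\chi^{1+p}}{\ds\buf(0)+\ds\ser(0)} + \tfrac{1}{r}\sum_{i=1}^{\ssp\Lambda(rl+rt)} v_i^{1+p} \le M_0 + M_v$ uniformly in $l$ and $t$, since $\ssp\Lambda(rl+rt) \le r M_\Lambda$ on $\Omega^r_\Lambda$.

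The main obstacle is verifying that $\mathbb{E}[v^{1+p}] < \infty$, which is needed for the SLLN step defining $\Omega^r_v$. The model assumptions only explicitly state $c_s^2 < \infty$ and mean $m$, but the elementary inequality $x^{1+p} \le x^2 + 1$ valid for all $x \ge 0, p \in (0,1]$, combined with the finite second moment implied by $c_s^2 < \infty$, yields $\mathbb{E}[v^{1+p}] \le \mathbb{E}[v^2] + 1 < \infty$; and since \eqref{eq:cond-initial-u} only requires the existence of \emph{some} $p>0$, we may without loss of generality restrict to $p\in(0,1]$. Everything else reduces to routine applications of the renewal SLLN and the high-probability bounds already constructed in Lemma~\ref{lem:workload-upperbound}.
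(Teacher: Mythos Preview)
Your argument contains a time-scale error that invalidates both bounds. The shifted fluid-scaled arrival count is $\sfs\Lambda(t)=\tfrac{1}{r}\ssp\Lambda(rl+rt)$, and since $l$ ranges over $\{0,1,\ldots,\lfloor rT\rfloor\}$ the unscaled time $rl+rt$ reaches order $r^2T$. By the functional LLN, $\tfrac{1}{r}\ssp\Lambda(r^2T)\approx \lambda rT$, which is \emph{not} bounded as $r\to\infty$; your event $\Omega^r_\Lambda=\{\sfs\Lambda(t)\le \lambda(T+L)+1\text{ uniformly}\}$ therefore has probability tending to $0$, not $1-\eta/4$. The downstream estimates $\sfs Q(t)\le\sfs\Lambda(t)+\ds X(0)$ and $\tfrac{1}{r}\sum_{i=1}^{\ssp\Lambda(rl+rt)}v_i^{1+p}\le M_v$ both collapse: the right-hand sides grow like $r$.

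The paper's proof avoids this by never bounding $\sfs Q$ or the moments in terms of the cumulative arrival count. For \eqref{eq:Q-bound}, it cites the argument in \cite{ZhangDaiZwart2011}, which uses the workload bound from Lemma~\ref{lem:workload-upperbound}: the jobs in the buffer are the most recent $r\sfs Q(t)$ arrivals, their total workload is at most $r\sfs W(t)\le rM_W$, and a Glivenko--Cantelli/FLLN argument then forces $\sfs Q(t)\lesssim M_W/m$. For the server part of \eqref{eq:mvp-p-bound}, the paper exploits a lower bound on cumulative service, $\ssp S(rs,rt)\ge r(t-s)/\ssp k$, so that a job which entered service in the interval $[r(l-j-1),r(l-j)]$ has residual size at most $(v_i-j/(2K))^+$. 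This yields the decomposition
\[
\inn{\chi^{1+p}}{\sfs\ser(t)}\le \inn{\chi^{1+p}}{\tfrac{1}{r}\ssp\ser(0)}+\sum_{j=0}^{l-1}\Big\langle\big((\chi-\tfrac{j}{2K})^+\big)^{1+p},\tfrac{1}{r}\!\!\!\sum_{i=\ssp B(r(l-j-1))+1}^{\ssp B(r(l-j))}\!\!\!\delta_{v_i}\Big\rangle+\cdots,
\]
where each inner sum contains only $O(r)$ terms (controlled by the $Q$-bound and arrival regularity), and the truncations $(\chi-j/(2K))^+$ make the series over $j$ summable. Your approach omits this service-aging mechanism entirely, and without it there is no way to control the $O(r^2)$ arrivals that have passed through the server.
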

\proof
The result \eqref{eq:Q-bound} holds due to Lemma~4.2 in \cite{ZhangDaiZwart2011}, which only utilizes the regularity of the arrival process \eqref{eq:HT-arrival} and the stochastic bound \eqref{ineq:W-bound} for the workload process proved in Lemma~\ref{lem:workload-upperbound}. For \eqref{eq:mvp-p-bound}, the first half, $\max_{l\le\fl{rT}}\sup_{t\in[0,L]}\inn{\chi^{1+p}}{\sfs\buf(t)}\le M$,
also follows the same reasoning as Lemma~4.3 in \cite{ZhangDaiZwart2011}. Essentially, any results for the ``queue'' part follows the same argument in \cite{ZhangDaiZwart2011}. 

The challenge with the state-dependent service rate lies in the analysis of the server. It follows from the shifted fluid-scaled dynamic equation \eqref{eq:mechanism-S} that for any Borel set $A\subset (0,\infty)$, 
\begin{align*}
  \frac{1}{r}\ssp\ser(rl+rt)(A)
  &=\frac{1}{r}\ssp\ser(0)(A+\ssp{S}(0,rl+rt))\\
  &+\sum_{j=0}^{m-1}\frac{1}{r}\sum_{i= \ssp{B}(r(l-j-1))+1}^{\ssp{B}(r(l-j))}
  \delta_{v_i}(A+\ssp{S}({\ssp{\tau}_i},rl+rt))\\
  &+\frac{1}{r}\sum_{i=\ssp{B}(rl)+1}^{\ssp{B}(rl+rt)}
  \delta_{v_i}(A+\ssp{S}({\ssp{\tau}_i},rl+rt)).   
\end{align*}
Given $0\le j\le m-1$, for those $i$'s with $\ssp{B}(r(l-j-1))<i\le \ssp{B}(r(l-j))$ we have
\[{\ssp{\tau}_i}\in[r(l-j-1),r(l-j)].\]
For the sake of simplicity, let us assume that $\ssp{Z}(s)>0$ for all $s\in[0,rl+rt]$. If this does not hold, we can use a technical trick presented in the proof of Lemma~4.3 in \cite{ZhangDaiZwart2011} to deal with it.  Here we show the main difference coming from the state-dependent service rate. By \eqref{eq:accum-ser-r-est} and the fact that $\ssp{Z}\le \ssp{k}$, we have a lower bound on the cumulative service amount
\begin{equation}
  \label{eq:accu-ser-lower-bound}
  \ssp{S}(rs,rt) \ge \int_{rs}^{rt}\frac{1}{\ssp{Z}(s)}ds
  \ge \frac{r(t-s)}{\ssp{k}}.
\end{equation}
Thus,
\begin{equation*}
  \ssp{S}({\ssp{\tau}_i},rl+rt)\ge
  \ssp{S}(r(l-j),rl)
  \ge\frac{rj}{\ssp{k}}\ge\frac{j}{2K},    
\end{equation*}
for all large $r$ where the last inequality is due to \eqref{eq:K-r}.  For those $i$'s such that $\ssp{\tau}_i$ is larger than $\ssp{B}(rl)$, we use the trivial lower bound $\ssp{S}({\ssp{\tau}_i},rl+rt)\ge 0$.  Also take the trivial lower bound that $\ssp{S}(0,rl+rt)\ge 0$.  Then we have the following inequality on the $(1+p)$th moment:
\begin{equation}\label{ineq:1+p-moment-bound-est-1}
  \begin{split}
    \inn{\chi^{1+p}}{\frac{1}{r}\ssp{\ser}(rl+rt)}
    &\le\inn{\chi^{1+p}}{\frac{1}{r}\ssp{\ser}(0)}\\
    &+\sum_{j=0}^{m-1}
    \inn{\big((\chi-\frac{j}{2K})^+\big)^{1+p}}
    {\frac{1}{r}\sum_{i= \ssp{B}(r(l-j-1))+1}^{\ssp{B}(r(l-j))}\delta_{v_i}}\\
    &+\inn{\chi^{1+p}}
    {\frac{1}{r}\sum_{i=\ssp{B}(rl)+1}^{\ssp{B}(rl+rt)}\delta_{v_i}}.
  \end{split}
\end{equation}
This is the same as (4.22) in \cite{ZhangDaiZwart2011}. The estimation of the first term on the right-hand side in the above follows directly from the initial condition \eqref{eq:cond-initial-u}. The analysis of the second and third terms follows the same way as in \cite{ZhangDaiZwart2011}. 
\endproof

To prove that a family of measure-valued processes is tight, there are three properties to verify, namely \emph{Compact Containment}, \emph{Asymptotic Regularity} and \emph{Oscillation Bound}. For brevity, we will not repeat the exact mathematical statements and their proofs. For the LPS system, these three properties were proved in Lemmas~4.4--4.6 in \cite{ZhangDaiZwart2011}. We just point out that the proof for the above mentioned three properties for the Sd-LPS system relies on $(a)$ the bound estimate in Lemma~\ref{lem:bound-estimation}; and $(b)$ the fact that \eqref{eq:accum-ser-r-est} implies the lower bound of the cumulative service process \eqref{eq:accu-ser-lower-bound}. The proof of Lemma~\ref{lem:bound-estimation} has demonstrated point (b) clearly, we therefore omit a repeat of the argument used in \cite{ZhangDaiZwart2011}. So we reach the conclusion:

\begin{proposition}[Tightness of Shifted Fluid-scaled Processes]
  The family of shifted fluid-scaled processes $\{(\sfs\buf,\sfs\ser\}_{l\le rT, r\in\N}$ is tight.  
\end{proposition}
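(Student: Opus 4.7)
\begin{proofof}{Proposition (Tightness of Shifted Fluid-scaled Processes)}
The plan is to apply the standard Bramson--Jakubowski criterion for tightness of measure-valued processes on the space $\M$ (endowed with the weak topology), which reduces tightness of the family $\{(\sfs\buf,\sfs\ser)\}$ to verifying three properties uniformly in $l\le \fl{rT}$ and $r\in\N$: \emph{(i)} compact containment (the processes stay in a relatively compact subset of $\M\times\M$ with probability arbitrarily close to one), \emph{(ii)} asymptotic regularity (no mass accumulates near the origin in the support in the limit), and \emph{(iii)} an oscillation bound (moduli of continuity in the weak topology can be controlled).  These are exactly the three properties established as Lemmas~4.4--4.6 in \cite{ZhangDaiZwart2011} for the constant-rate LPS model, so the strategy is to re-derive each under the state-dependent service rate and then invoke Prokhorov's theorem.

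For \emph{(i)}, I would use a family of test functions of the form $\chi^{1+p}$ together with Lemma~\ref{lem:bound-estimation}: on the event $\Omega^r_B(M)$, the total mass of $\sfs\buf(t)+\sfs\ser(t)$ is bounded by $M/\ssp{k} + M$ via \eqref{eq:Q-bound}, while the $(1+p)$th moment is bounded by \eqref{eq:mvp-p-bound}.  A uniform bound on total mass plus a uniform $(1+p)$th moment bound furnishes a tight family of measures on $[0,\infty)$ by a Markov-type argument (the mass outside $[0,N]$ is bounded by $M/N^{1+p}$), giving compact containment on $\Omega^r_B(M)$.  For \emph{(ii)}, the argument in \cite{ZhangDaiZwart2011} shows that the residual service times do not concentrate at zero; the only place where the service mechanism enters is through the cumulative service amount $\ssp{S}$, which can be bounded below by $r(t-s)/\ssp{k}$ as in \eqref{eq:accu-ser-lower-bound}.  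This lower bound is exactly the analogue of what is used in the constant-rate case (where one simply has $t-s$ over the concurrency limit), so the same estimates on the measure near the origin transfer without change.

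For \emph{(iii)}, the oscillation bound needs to be checked separately for the buffer and the server components.  The buffer part $\sfs\buf$ is driven purely by the arrival process and the transfer process $\sfs B$ (neither of which depends on $\mu(\cdot)$), so \eqref{eq:mechanism-B} together with the functional central limit theorem assumption \eqref{eq:HT-arrival} and \eqref{eq:Q-bound} yields the same oscillation bound as in \cite{ZhangDaiZwart2011}.  For the server part $\sfs\ser$, \eqref{eq:mechanism-S} shows that oscillation over $[s,t]$ is controlled by the amount of mass pushed in from the buffer (again independent of $\mu$) plus the weak continuity of the shift $A\mapsto A+\ssp S(rl+rs,rl+rt)$.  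Using \eqref{eq:accum-ser-r-est} and the uniform upper bound $\ssp{S}(rl+rs,rl+rt)\le \hat\mu_{\max}(t-s)$ (which follows because $\ssp\mu(\cdot)\to\lambda m$ uniformly on compact sets by \eqref{eq:cond-HT}), the shift modulus is controlled uniformly in $r$ and $l$.

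The main obstacle is that in the state-dependent setting $\ssp{S}(rl+rs,rl+rt)$ is no longer a deterministic linear function of $t-s$, so one has to verify that the two-sided bounds on the cumulative service in \eqref{eq:accu-ser-lower-bound} and via \eqref{eq:accum-ser-r-est} are tight enough to replace the deterministic linear shift used in the LPS proofs.  Once these bounds are in place, every step of the Zhang--Dai--Zwart argument goes through verbatim with $1$ replaced by an $O(1)$ quantity and error terms of order $O(1/r)$ that vanish in the limit; the conclusion is tightness of $\{(\sfs\buf,\sfs\ser)\}_{l\le\fl{rT},\,r\in\N}$ by Prokhorov's theorem.
\end{proofof}
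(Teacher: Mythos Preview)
Your proposal is correct and follows essentially the same approach as the paper: both reduce tightness to verifying the three properties of compact containment, asymptotic regularity, and oscillation bound by adapting Lemmas~4.4--4.6 of \cite{ZhangDaiZwart2011}, with the only modifications being (a) the bound estimates in Lemma~\ref{lem:bound-estimation} and (b) the lower bound \eqref{eq:accu-ser-lower-bound} on the cumulative service process derived from \eqref{eq:accum-ser-r-est}. The paper's argument is in fact terser than yours, simply pointing to these two ingredients and deferring the remaining details to \cite{ZhangDaiZwart2011}.
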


Loosely speaking, tightness means that any subsequence from the family of shifted fluid-scaled processes has a convergent subsequence. This is formally stated in Theorem~4.1 in \cite{ZhangDaiZwart2011}. 

\subsection{Bramson's Framework for SSC}
Sd-LPS and LPS essentially use the same measure-valued framework. The difference lies in the cumulative service process as we explained when deriving \eqref{eq:accum-ser-r-est} and the workload process as we studied in Lemma~\ref{lem:workload-upperbound}. After obtaining the tightness, we can apply the framework invented by Bramson \cite{Bramson1998} in the same way as how Section~5 in \cite{ZhangDaiZwart2011} applies it to the measure-valued process. The high level-logic is as the follows: the shifted fluid-scaled processes are ``close'' to the fluid model solution, and the fluid model solution converges to some invariant which exhibits SSC (Theorem~3.1 in \cite{ZhangDaiZwart2011}). Thus SSC, which happens on the diffusion scaling, can be proved based on the relationship \eqref{eq:relation-frm-dr} between diffusion scaling and shifted fluid scaling. We thus refer to Section~5 in \cite{ZhangDaiZwart2011} for the proof of Theorem~\ref{thm:full-ssc}.





\section{Analysis of Algorithms for Finding Optimal Control}

Recall some notation and definitions used in this section.
\begin{align*}
\hat{k} &= \argmax_{k} \theta(k) \\
\hat{\theta} &= \theta(\hat{k}) \\
\Delta_k(w) &= \frac{w}{m} + k \left( 1 - \frac{m_e}{m}\right) \\
d_\theta &= \sup_k \theta(k) - \inf_{k} \theta(k) \\
k_f(w) &= \argmax_{k \in [0, w/m_e]} \theta(k_f) 
\end{align*}
Throughout this section, we assume that $d_\theta$ is finite, and therefore $\theta(k)$ is bounded from above and below. 

\subsection{Some Auxiliary Results}
We first provide some auxiliary results (Lemmas~\ref{lemma:continuous_GvW} and \ref{lemma:G0_Lipschitz}) which will be useful in proving the results in Section~\ref{sec:dynamic_control}.

\begin{lemma}\label{lemma:continuous_GvW}
Consider the solution of the following ODE, parameterized by $v$ and $W$:
\begin{align*}
\intertext{Terminal condition:}
G_{v,W} (w) &= \alpha w +\beta v + \gamma  & \ldots w \geq \max\{W, \hat{k}m_e \} \\
\intertext{ODE:}
v &= \frac{w}{m} +  k_f(w) \left( 1 - \frac{m_e}{m} \right) - \theta(k_f(w)) G_{v,W}(w) + \frac{\sigma^2}{2} G'_{v,W}(w) & \ldots w \in [W, \hat{k}m_e] \\
v &= \min_{k \in [0,w/m_e]} \left\{ \frac{w}{m} +  k \left( 1 - \frac{m_e}{m} \right) - \theta(k) G_{v,W}(w) + \frac{\sigma^2}{2} G'_{v,W}(w)  \right\} & \ldots w \in [0, W]
\end{align*}
Then $G_{v,W}(w)$ is continuous in both $v$ and $W$ for all $w$.
\end{lemma}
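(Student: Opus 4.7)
The plan is to establish continuity of $G_{v,W}(w)$ in $(v,W)$ by applying the classical continuous-dependence theorem for ODEs piecewise on the two sub-intervals $[W,\hat{k}m_e]$ (fluid regime, nonempty only when $W<\hat{k}m_e$) and $[0,W]$ (optimized regime), and then gluing at $w=W$. The first step is to verify that both ODEs have a right-hand side that is globally Lipschitz in $G$. On the fluid interval the RHS is affine in $G$ with slope bounded by $2\hat\theta/\sigma^2$. On the optimized interval the RHS involves $\min_{k\in[0,w/m_e]}\{\Delta_k(w)-\theta(k)G\}$; since $|\min_k f_1(k) - \min_k f_2(k)|\le \sup_k|f_1(k)-f_2(k)|$ and the integrand differs by at most $|\theta(k)|\,|G_1-G_2|$, the minimized value is Lipschitz in $G$ with constant $\sup_k|\theta(k)|\le \hat\theta+d_\theta$. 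Backward-in-$w$ existence and uniqueness of $G_{v,W}$ are thus guaranteed on any bounded interval, and Gronwall's inequality will provide the quantitative stability estimates used in the next steps.

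For continuity in $v$ with $W$ held fixed, observe that the terminal value $\alpha w+\beta v+\gamma$ is linear in $v$, and $v$ enters additively on both ODE's right-hand sides. A backward Gronwall argument on the integral form of the ODE immediately yields $|G_{v_1,W}(w)-G_{v_2,W}(w)|\le C(w)|v_1-v_2|$ on any bounded range of $w$, with $C(w)$ depending only on the length of integration and the uniform Lipschitz constant identified above. For continuity in $W$ with $v$ held fixed, I would treat two regimes. When $W_n\to W$ with both values on the same side of $\hat{k}m_e$, the terminal point $\max\{W,\hat{k}m_e\}$, the terminal value, and the endpoints of each sub-interval all vary continuously in $W$; chaining continuous dependence on the terminal data and on the interval length through the flow of each sub-ODE delivers continuity of $G_{v,W_n}(w)\to G_{v,W}(w)$. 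When $W_n\to W=\hat{k}m_e$ from either side, the fluid sub-interval either degenerates to a point (from the left) or is absent entirely (from the right); in both one-sided limits the problem reduces to solving the optimized ODE on $[0,\hat{k}m_e]$ with terminal condition $\alpha\hat{k}m_e+\beta v+\gamma$, so the two approaches agree at the boundary.

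Joint continuity in $(v,W)$ then follows from the triangle inequality $|G_{v_n,W_n}(w)-G_{v,W}(w)|\le |G_{v_n,W_n}(w)-G_{v,W_n}(w)| + |G_{v,W_n}(w)-G_{v,W}(w)|$, combined with uniformity of the Lipschitz-in-$v$ constant $C(w)$ on a neighborhood of $W$, which holds because the integration length varies continuously in $W$. The main technical obstacle I anticipate is the case analysis at the boundary $W=\hat{k}m_e$, where the structural form of the ODE system switches between a two-piece (fluid + optimized) and a single-piece (optimized only) description. This is resolved by the observation that both sub-ODEs share the common Lipschitz bound $2(\hat\theta+d_\theta)/\sigma^2$ on their right-hand sides, so a single Gronwall estimate controls the glued solution uniformly across the boundary and in a neighborhood of any $(v,W)$.
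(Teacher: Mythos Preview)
Your proposal is correct and takes essentially the same approach as the paper: both arguments hinge on the observation that the optimized right-hand side is Lipschitz in $G$ via $|\min_k f_1(k)-\min_k f_2(k)|\le \sup_k|f_1(k)-f_2(k)|$, and then apply Gronwall's inequality backward from the terminal condition. The only organizational difference is that the paper treats a joint perturbation $(v_a,W_a)\to(v_b,W_b)$ directly (writing out an explicit Gronwall bound on $[0,W_a]$ after controlling $|G_a-G_b|$ on $[W_a,W_b]$), whereas you decouple continuity in $v$ and in $W$ and recombine via the triangle inequality; both routes yield the same estimate.
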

\begin{proof}
Let $(v_a, W_a)$ and $(v_b, W_b)$ denote two parameter settings, and for succinctness, denote the corresponding solutions to the ODE as $G_a$ and $G_b$, respectively. We will consider the case $W_a, W_b \geq \hat{k}m_e$ as other cases are analogous.

Let $W_a \leq W_b$.

At $w=W_b$, we have
\begin{align}
\label{eqn:init_condition}
|G_a(W_b) - G_b(W_b)| &= \beta |v_a-v_b|.
\end{align}
For $w \in [W_a, W_b]$, we have
\begin{align}
  \label{eq:tech-EC3-1}
   G_a(w) &= \alpha w + \beta v_a + \gamma, \\
   G'_b(w) &= \frac{2}{\sigma^2} \left( v_b +\theta(k_b(w)) G_b(w) - \frac{w}{m} +k_b\left(1-\frac{m_e}{m} \right) \right),
\intertext{which gives}
 & \frac{2}{\sigma^2}\left( v_b - \frac{W_b}{m \wedge m_e} -d_\theta G_b(w) \right) \leq  G'_b(w) \leq \frac{2}{\sigma^2} \left(v_b - \frac{W_a}{m \vee m_e}  +  d_\theta  G_b(w)  \right).
\intertext{Since the derivatives are bounded, $G_b(w)$ is bounded in the interval $[W_a, W_b]$. Let $D = \sup_{w\in[W_a, W_b]} |G_b(w)|$. Then,}
|G_a(w) - G_b(w)| & \leq |G_a(w) - G_a(W_a)| + |G_a(W_b) - G_b(W_b)| + |G_b(w) - G_b(W_b) | \\
& \leq \alpha |W_b - w| + \beta |v_a-v_b| + (W_b - w) \frac{2}{\sigma^2} \left( v_b + \frac{W_b}{m \wedge m_e} + d_\theta D \right) \\
\label{eqn:bound_1}
& \leq \alpha |W_b - W_a| + \beta |v_a-v_b| + (W_b - W_a) \frac{2}{\sigma^2} \left( v_b + \frac{W_b}{m \wedge m_e} + d_\theta D \right),
\end{align}
which goes to 0 as $|v_a-v_b| + |W_a-W_b| \to 0$.

For $w \in [0,W_a]$, by Lemma~\ref{lemma:bounding}, 
\begin{align}
\nonumber |G'_{a}(w) - G'_b(w)| & \leq \frac{2}{\sigma^2}|v_a-v_b| + \frac{2}{\sigma^2}\left | \min_{k_a \in [0, w/m_e]} \left( k_b (1-m_e/m) - \theta(k_a) G_a(w) \right) \right. \\
\nonumber & \qquad \qquad \qquad \left. - \min_{k_b\in[0,w/m_e]} \left( k_b (1-m_e/m) - \theta(k_b) G_2(w) \right)  \right| \\
\label{eqn:Gronwall_part2} & \leq \frac{2}{\sigma^2}|v_a - v_b| + \frac{2d_\theta}{\sigma^2} | G_a(w) - G_b(w)|.
\end{align}
Applying Gronwall's inequality, for all $w\in[0,W_a]$
\begin{align*}
  |G_a(w) - G_b(w)| 
  \leq |G_a(W_a)-G_b(W_a)| e^{\frac{2d_\theta}{\sigma^2} (W_a-w) } 
  + \frac{|v_a-v_b|}{ d_\theta} \left( e^{\frac{2d_\theta}{\sigma^2} (W_a-w)} -1 \right),
\end{align*}
which, together with \eqref{eqn:bound_1}, implies that for all $w\in[0,W_a]$
\begin{align*}
  |G_a(w)-G_b(w)| & \leq 
|v_a-v_b| \left( |\beta| e^{\frac{2d_\theta}{\sigma^2}(W_a-w)} + \frac{e^{\frac{2d_\theta}{\sigma^2} (W_a-w)} -1}{d_\theta} \right) \\
& \quad + |W_b-W_a| \left( \alpha + \frac{2}{\sigma^2} \left( v_b + \frac{W_b}{m\wedge m_e} + d_\theta D\right) \right)e^{\frac{2d_\theta}{\sigma^2}(W_a-w)},
\end{align*}
which goes to 0 as $|v_a-v_b| + |W_a-W_b| \to 0$.
\end{proof}

\begin{lemma}\label{lemma:G0_Lipschitz}
Consider $G_{v,W}$ defined in Lemma~\ref{lemma:continuous_GvW} for a given $W \geq \hat{k}m_e$. Then $G_{v,W}(w)$ is monotonic and Lipschitz continuous in $v$ for all $w$. 
\end{lemma}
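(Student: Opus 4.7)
The plan is to establish both properties simultaneously by studying the sensitivity $g_v(w) := \partial G_{v,W}(w) / \partial v$. Since $W \geq \hat{k} m_e$, the interval $[W, \hat{k} m_e]$ in Lemma~\ref{lemma:continuous_GvW} is empty, so I only need to analyze the backward evolution of the ODE on $[0, W]$ with terminal condition $G_{v,W}(W) = \alpha W + \beta v + \gamma$. Once $g_v$ is shown to exist, be uniformly bounded, and have constant sign, both Lipschitz continuity and monotonicity in $v$ follow by integration.

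Differentiating the HJB equation
\begin{align*}
v = \min_{k \in [0, w/m_e]} \left\{ \tfrac{w}{m} + k\,(1 - m_e/m) - \theta(k)\, G_{v,W}(w) + \tfrac{\sigma^2}{2}\, G'_{v,W}(w) \right\}
\end{align*}
in $v$ and invoking the envelope theorem (the first-order contribution from $\partial k^*/\partial v$ vanishes by optimality) yields the linear ODE
\begin{align*}
\tfrac{\sigma^2}{2}\, g_v'(w) \;=\; 1 + \theta(k^*_{v,W}(w))\, g_v(w),
\end{align*}
with terminal condition $g_v(W) = \beta$. Solving backward via the integrating-factor method produces the explicit representation
\begin{align*}
g_v(w) \;=\; \beta\, e^{-\tfrac{2}{\sigma^2} \int_w^W \theta(k^*_{v,W}(s))\,ds} \;-\; \tfrac{2}{\sigma^2} \int_w^W e^{-\tfrac{2}{\sigma^2} \int_w^u \theta(k^*_{v,W}(s))\,ds}\, du.
\end{align*}

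Because $\theta$ is bounded on the admissible range (since $d_\theta < \infty$), both exponentials lie in $[e^{-2\bar\theta W/\sigma^2}, e^{2\bar\theta W/\sigma^2}]$ for a finite $\bar\theta$, so $|g_v(w)|$ is bounded uniformly in $v$ and $w \in [0,W]$; integration in $v$ then delivers $|G_{v_b,W}(w) - G_{v_a,W}(w)| \leq C(W)\,|v_b - v_a|$, proving Lipschitz continuity. For monotonicity, the second term in the explicit formula for $g_v(w)$ is always $\leq 0$ (exponential positive, integrated over $[w,W]$ with a negative coefficient), while the first term carries the sign of $\beta$. In the application to Proposition~\ref{prop:NR_convergence}, the terminal condition \eqref{eqn:NR_terminal_condition} yields $\beta = -1/\hat\theta < 0$, so $g_v(w) \leq 0$ throughout $[0,W]$, giving monotonically non-increasing dependence on $v$.

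The main obstacle is the rigorous justification of the envelope-theorem step, since $k^*_{v,W}$ need not be unique nor smoothly dependent on $v$. I would sidestep this by adapting the comparison argument from Lemma~\ref{lemma:continuous_GvW}: plugging each value's minimizer into the other's optimality relation produces two-sided differential inequalities for $h(w) = G_{v_b,W}(w) - G_{v_a,W}(w)$, namely
\begin{align*}
\tfrac{2}{\sigma^2}\bigl[(v_b-v_a) + \theta(k^*_{v_a,W}(w))\, h(w)\bigr] \;\leq\; h'(w) \;\leq\; \tfrac{2}{\sigma^2}\bigl[(v_b-v_a) + \theta(k^*_{v_b,W}(w))\, h(w)\bigr],
\end{align*}
with terminal value $h(W) = \beta (v_b - v_a)$. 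A backward Grönwall argument on $[0,W]$ then yields the Lipschitz bound directly (without appeal to $g_v$) and, for $\beta \leq 0$, simultaneously pins down the sign of $h$, giving monotonicity.
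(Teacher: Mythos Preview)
Your proposal is correct and your fallback comparison argument is essentially the paper's proof: the paper fixes $v_a>v_b$, records the terminal gap $G_b(W)-G_a(W)=-\beta(v_a-v_b)$, uses Lemma~\ref{lemma:bounding} to obtain two-sided differential inequalities for $G_b-G_a$ (with $d_\theta$ in place of your $\theta(k^*_{v_a}),\theta(k^*_{v_b})$), and integrates backward to get
\[
(v_a-v_b)\Bigl[-\beta+\tfrac{1}{d_\theta}\bigl(1-e^{-2d_\theta W/\sigma^2}\bigr)\Bigr]\le G_b(w)-G_a(w)\le (v_a-v_b)\Bigl[-\beta+\tfrac{1}{d_\theta}\bigl(e^{2d_\theta W/\sigma^2}-1\bigr)\Bigr],
\]
which is exactly your Gr\"onwall conclusion after bounding $\theta(k^*_{v_\bullet})$ by its range.

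Your first route (formal differentiation to the linear ODE for $g_v$ and the integrating-factor formula) is not what the paper uses here, but it is precisely the calculation the paper performs later in the proof of Proposition~\ref{prop:NR_convergence}; so you have effectively anticipated that computation. Your caveat about the envelope step is appropriate, and your remedy---replacing it by the comparison inequalities---is the same safety net the paper uses. One small refinement: your inequalities with $\theta(k^*_{v_a})$ and $\theta(k^*_{v_b})$ are sharper than the paper's $d_\theta$ version, but to run Gr\"onwall cleanly you will in any case replace them by uniform bounds on $\theta$, so the end estimates coincide.
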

\begin{proof}
Fix $W \geq \hat{k}m_e$, and consider $v_a > v_b$. Let $G_a$ and $G_b$ denote the solutions of the ODE defined in Lemma~\ref{lemma:continuous_GvW} for $v_a$ and $v_b$, respectively. We will show that $G_a(w) < G_b(w)$ for all $w \geq 0$. We rely on the following two facts:
\begin{enumerate}
\item Terminal condition:
\[ G_b(w) - G_a(w) = -\beta(v_a-v_b) \quad w \geq W \]
\item Bounds on $G'_b(w) - G'_a(w)$ for $w \in [0,W]$: 
\[ G'_b(w) - G'_a(w) = - \frac{2}{\sigma^2} \left[ (v_a-v_b) - \min_{k\in [0,w/m_e]}\left( \Delta_k(w)-\theta(k)G_a(w) \right) - \min_{k\in[0,w/m_e]} (\Delta_k(w) - \theta(k)G_b(w)) \right]\]
where recall that $\Delta_k(w) = \frac{w}{m}+k\left(1-\frac{m_e}{m} \right)$. Under the assumption $G_a(w) \leq G_b(w)$, from Lemma~\ref{lemma:bounding}:
\begin{align}
-\frac{2}{\sigma} \left[ (v_a-v_b)+d_\theta(G_b(w)-G_a(w)) \right] \leq G'_b(w)-G'_a(w) \leq -\frac{2}{\sigma^2} \left[ (v_a-v_b) - d_\theta \left( G_b(w)-G_a(w)\right) \right]
\end{align}
\end{enumerate}
Combining these two facts, we get for any $w\in[0,W]$
\begin{align}
(v_a-v_b) \left[ -\beta + \frac{1}{d_\theta} \left(1-e^{-\frac{2d_\theta W}{\sigma^2}} \right) \right]
\leq G_b(w) - G_a(w) \leq
(v_a-v_b) \left[ -\beta + \frac{1}{d_\theta} \left(e^{\frac{2d_\theta W}{\sigma^2}} - 1\right) \right]
\end{align}
\end{proof}

\begin{lemma}\label{lemma:bounding}
Let $x_1 = \argmin_{x \in [u,v]} f_1(x)$ and $x_2 = \argmin_{x \in[u,v]} f_2(x)$. Then,
\[ |f_1(x_1) - f_2(x_2) | \leq \sup_{x\in[u,v] } |f_1(x) - f_2(x)| \]
\end{lemma}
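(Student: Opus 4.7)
The plan is to use a simple two-case analysis based on which minimum is larger, exploiting the defining property of $x_1$ and $x_2$ as minimizers.

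First I would assume without loss of generality that $f_1(x_1) \geq f_2(x_2)$ (the symmetric case is handled analogously by swapping the roles of the indices). Under this assumption, removing the absolute value gives $|f_1(x_1) - f_2(x_2)| = f_1(x_1) - f_2(x_2)$. The key observation is that since $x_2 \in [u,v]$ is a feasible point for the optimization defining $x_1$, we have $f_1(x_1) \leq f_1(x_2)$ by definition of $x_1$ as the minimizer of $f_1$ over $[u,v]$.

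Chaining these gives
\begin{equation*}
|f_1(x_1) - f_2(x_2)| = f_1(x_1) - f_2(x_2) \leq f_1(x_2) - f_2(x_2) \leq |f_1(x_2) - f_2(x_2)| \leq \sup_{x \in [u,v]} |f_1(x) - f_2(x)|,
\end{equation*}
which is the desired bound. The symmetric case $f_2(x_2) > f_1(x_1)$ uses $f_2(x_2) \leq f_2(x_1)$ and proceeds identically.

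There is no real obstacle here; the only thing to be careful about is explicitly noting that both $x_1$ and $x_2$ lie in the common feasible set $[u,v]$, so each is a valid comparison point in the other's optimization problem. This is what makes the argument work, and it is guaranteed by the assumption that both minimizations are over the same interval.
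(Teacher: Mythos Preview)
Your proof is correct and is essentially the same argument as the paper's: both use the minimizer property $f_1(x_1)\le f_1(x_2)$ and $f_2(x_2)\le f_2(x_1)$ to bound the difference by pointwise differences at $x_1$ or $x_2$. The only cosmetic difference is that the paper writes the two-sided sandwich $f_1(x_1)-f_2(x_1)\le f_1(x_1)-f_2(x_2)\le f_1(x_2)-f_2(x_2)$ in one line instead of splitting into cases.
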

\proof{Proof}
Since $ f_1(x_1) \leq  f_1(x_2)$ and $f_2(x_2) \leq f_2(x_1)$, 
\[ f_1(x_1) - f_2(x_1) \leq f_1(x_1) - f_2(x_2) \leq f_1(x_2) - f_2(x_2)   \]
and therefore, $|f_1(x_1) - f_2(x_2) | \leq \sup_{x\in[u,v] } |f_1(x) - f_2(x)| $.
\endproof

\subsection{Proofs of Results in Section~\ref{sec:dynamic_control}}
\label{sec:proofs-of-prop-alg}

\begin{proofof}{Proposition~\ref{prop:monotonic_V}}
We should point out that the monotonicity of the value function is not immediate because under the optimal policy $k^*(\cdot)$, the state-dependent cost function $\Delta_{k^*}(w)$ need not be monotonic in $w$. If it were, a simple sample path coupling argument could be used to deduce the monotonicity of the discounted value function by considering initial workloads $w_1 \leq w_2$.

Let $k^*_\gamma(\cdot)$ be the optimal policy minimizing expected discounted cost, and $V_\gamma(w)$ be the corresponding value function. Consider $w_1 \leq w_2$. We will create an alternate control policy $\pi_1$ when the initial workload is $w_1$, and denote the corresponding expected discounted cost by $\tilde{v}_1$. We will then show that $\tilde{v}_1 \leq V_\gamma(w_2)$ (in fact, our construction involves stochastic coupling and implies that the discounted reward starting with $w_1$ and using $\pi_1$ is stochastically smaller than the discounted reward starting with $w_2$ and using $k^*_\gamma(\cdot)$).

Construction of $\pi_1$: We simulate two independent systems in parallel: system 1 with initial workload $W_1(0)=w_1$ under control policy $\pi_1$ (which we will describe shortly); and system 2 with initial workload $W_2(0)=w_2$ under the optimal control policy $k^*_\gamma(w)$. The control at time $t$ under $\pi_1$ is chosen to be 
\[ k_{\pi_1}(t) = \argmin_{k \in \left[0,  W_1(t)/m_e \right]} \frac{W_1(t)}{m} + k(1-m_e/m) \]
for $t\in[0,\tau]$, where $\tau  \doteq \min \{s\geq 0: W_1(s) = W_2(s) \} $ is the coupling time of the two systems. That is, $\tau$ is the first time the workloads of the two coupled processes $W_1$ and $W_2$ coincide. For $ t \geq \tau$, $k_{\pi_1}(t) = k^*_\gamma(W_1(t))$. 

It is easy to see that since $W_1$ and $W_2$ have continuous sample paths, $W_1(t) \leq W_2(t)$
 for $t\leq \tau$. Due to the choice of $k_{\pi_1}$, this further implies that
\begin{align*}
 \min_{k \in [0, W_1(t)/m_e ]} \left( \frac{W_1(t)}{m}+k\left(1-\frac{m_e}{m} \right) \right ) &= \min\left\{ \frac{W_1(t)}{m}, \frac{W_1(t)}{m_e} \right\} \\
& \leq \min \left\{ \frac{W_2(t)}{m}, \frac{W_2(t)}{m_e} \right\} \\
& \leq  \frac{W_2(t)}{m} + k^*_\gamma(W_2(t)) \left(1-\frac{m_e}{m} \right).
\end{align*}
For $t\geq \tau$, $W_1(t)$ is stochastically equal to $W_2(t)$. Therefore, the discounted cost of $\pi_1$ (with initial workload $w_1)$ is stochastically smaller than the discounted cost of $k^*_{\gamma}$ (with initial workload $w_2$). This implies $\tilde{v}_1 \leq V_\gamma(w_2)$, but $V_\gamma(w_1) \leq \tilde{v}_1$ (since $V_\gamma(w_1)$ is the optimal expected discounted cost). Therefore, $V_\gamma(w_1) \leq V_\gamma(w_2)$ when $w_1 \leq w_2$.

Since $ (V_\gamma(w_2) -  V_{\gamma}(w_1)) \geq 0$ for all $\gamma$, this also holds as $\gamma \downarrow 0$.  

\textbf{Note :} The only facts we relied on to argue monotonicity were $(i)$ continuity of sample paths, and $(ii)$ the cost of the cheapest action available in each state is monotonic in $w$. These appear to be weaker than the conditions typically used in the literature where the set of available actions is assumed to be independent of the state. Further, the cost is assumed to be non-decreasing in the state variable for each action.
\end{proofof}
\ \\
We now provide the proofs of Proposition~\ref{prop:infeasible_v} and \ref{prop:NR_convergence} for the analysis of our algorithms. We omit the proof of Proposition~\ref{prop:feasible_v} as it mirrors the proof of Proposition~\ref{prop:infeasible_v}. \\

\begin{proofof}{Proposition~\ref{prop:infeasible_v}}
Consider the diffusion control formulation for the Sd-LPS system but with a finite workload buffer of $W$. For the diffusion corresponding to this loss system, we have reflections at both $w=0$ and $w=W$. Therefore, for any policy for this loss system, the value function gradient is 0 at both these values \cite{Mandl_analytical}:
\[ G(0) = G(W) = 0.  \]
Therefore, \eqref{eqn:ODE_infeasible} defines the HJB equation for the value function gradient of the finite buffer system with workload buffer $W$, together with $G_v(0)=0$ and the additional boundary condition $G_v(W)=0$. Lemma~\ref{lemma:continuous_GvW} guarantees that ODE \eqref{eqn:ODE_infeasible} has a unique solution (by choosing the terminal condition $G_{v,W}(W)=0$).

We first show that for all $v < v^*$, there is a \emph{unique} value of $W$ such that $v$ is the average cost of the optimal finite buffer policy with workload buffer $W$. \\
Consider an arbitrary pair $W, v$ and solve the following ODE
\begin{equation}
  \label{eq:tech-ODE-1}
  v = \min_{k \in [0,w/m_e]} \left\{ \frac{w}{m} +  k \left( 1 - \frac{m_e}{m} \right) - \theta(k) G_{v,W}(w) \right\} + \frac{\sigma^2}{2} G'_{v,W}(w)  
\end{equation}
backwards with terminal condition $G_{v,W}(W)=0$ (note that this is the same ODE as \eqref{eqn:ODE_infeasible} but we do not enforce $G_{v,W}(0)=0$). Lemma~\ref{lemma:G0_Lipschitz} then shows that $G_{v,W}(0)$ is monotonic in $v$. Therefore, for each $W$, there exists a unique $v^*(W)$ such that $G_{v^*(W),W}(0)=0$ for the ODE above, with terminal condition $G_{v^*(W),W}(W)=0$. Further, Lipschitz continuity and Lemma~\ref{lemma:continuous_GvW} imply that the map $v^*(W)$ is continuous. From the foregoing discussion, we see that $v^*(W)$ denotes the cost of the optimal finite buffer policy with finite buffer $W$.

We next show that $v=v^*(W_1) \neq v^*(W_2)$ if $W_1 \neq W_2$. This would imply that two different workload buffer sizes must yield different optimal costs. Assume the contrapositive, and further $W_1 < W_2$. This implies that $G_{v,W_1}(w) = G_{v,W_2}(w)$  for $w\in [0,W_1]$ when the $G_{v,W}$ ODEs are evolved forward with initial condition $G_{v,W_1}=G_{v,W_2}=0$. Then by \eqref{eq:tech-ODE-1},
\begin{align*}
\left. G'_{v,W_2}(w) \right|_{w=W_1} = \left.G'_{v,W_1}(w) \right|_{w = W_1} &= \frac{2}{\sigma^2} \left[v - \min_{k \in [0,W_1/m_e]} \Big( \Delta_k(W_1) - \theta(k) G_{v,W_1}(W_1) \Big) \right] \\
&= \frac{2}{\sigma^2} \left[v - \min_{k \in [0,W_1/m_e]} \Delta_k(W_1) \right] \\
&= \frac{2}{\sigma^2} \left[v - \min \left\{  \frac{W_1}{m}, \frac{W_1}{m_e}\right\}  \right] <0.
\end{align*}
The last inequality is true because $\theta()$ is bounded from below, and hence for the optimal policy with buffer $W_1$, the average cost is strictly smaller than $\min\{W_1/m, W_1/m_e\}$. This implies $G_{v,W_2}(W_1 +\epsilon)<0$ for any $\epsilon > 0$. A similar argument as in Proposition~\ref{prop:monotonic_V} shows that the optimal value function for the finite buffer system is monotonic and hence $G_{v,W_2}(w) \geq 0, \ w\in[0,W_2]$, which contradicts $G_{v,W_2}(W_1)=0$ and $G'_{v,W_2}(W_1)<0$. 
 
Therefore, $\underline{W}(v)$ as defined in \eqref{eq:tech-W-v} (if it exists) is the unique buffer size corresponding to the optimal finite buffer policy with average cost $v$. 
To get a control on how $\underline{W}(v)$ grows as $v \uparrow v^*$ (where $v^*$ denotes the average cost of the infinite buffer control), 
we will next argue that $\underline{W}(v) = O \left(\log \frac{1}{v^*-v} \right)$ (and hence also finite). We will instead prove the following equivalent result: let $v^*_W$ denote the average cost of the optimal finite buffer control with workload buffer limit $W$, then $(v^* - v^*_W) = O(e^{-\beta W})$ as $W \to \infty$ for some constant $\beta > 0$.

Intuitively, the service rate of the optimal control must asymptotically approach $\hat{\theta}$ as the backlog builds up, and hence the distribution of the workload (and therefore number of jobs in the system) should decay at an exponential rate. Therefore, the effect of truncation at workload $W$ for a finite buffer system, that is $(v^*-v^*_W)$, should also be $O(e^{-\beta W})$ for some constant $\beta > 0$.

The proof will proceed in several steps:\\
{\bf Step 1:} For the optimal infinite buffer control, there exists a constant $\alpha$ such that the optimal value function gradient $G^*(w)$ satisfies
\begin{equation}
  \label{eq:tech-G-bdd}
  G^*(w) \leq \alpha + \frac{w}{m \hat{\theta}}.
\end{equation}
\emph{\hspace{0.1in} Proof:} Consider the upper envelope function $\overline{G}_{v^*}(\cdot)$ given by
\begin{align}
\overline{G}_{v^*}(w) &=  \frac{w}{m \hat{\theta}} + \left( \hat{k}\left(1-\frac{m_e}{m} \right) + \frac{\sigma^2}{2m\hat{\theta}} - v^* \right)\frac{1}{\hat{\theta}},
\quad  w \geq \hat{k} m_e.
\end{align} 
For $w \in[0, \hat{k}m_e]$,  $\overline{G}_{v^*}$ is obtained by solving ODE \eqref{eqn:Gf_ODE} backwards starting with the terminal condition 
\[ \overline{G}_{v^*}( \hat{k}m_e) = \left(\hat{k} - v^* + \frac{\sigma^2}{2m\hat{\theta}}\right) \frac{1}{\hat{\theta}}\ .\]  
Note that this is the same upper envelope function we use for detecting feasibility in the binary search algorithm. Now  $v^* \leq v_f(0)$ implies $\overline{G}_{v^*}(0) \geq 0$ since $\overline{G}_v(0)$ is monotonically decreasing and linear in $v$, and $\overline{G}_{v_f(0)}(0)=0$.\\
Assume that \eqref{eq:tech-G-bdd} does not hold. Then $G^*(\cdot)$ must cross $\overline{G}_{v^*}(\cdot)$ from below at some $ \hat{w} \geq 0$. But then, by following the fluid control for $ w \geq \hat{w}$ we get a feasible control with average cost $v^*$. Therefore, $G^*(w) = \overline{G}_{v^*}(w)$ for $w \geq \hat{w}$, and hence $G^*(w) \leq \overline{G}_{v^*}(w)$ for $w \geq 0$ contradicting our assumption.\\
{\bf Step 2:} Let $G^*_W(w)$ denote the value function gradient for the optimal finite buffer control with workload buffer $W$ and average cost $v^*_W$. Then 
\[ G^*_{W}(w) \leq G^*(w) \qquad \mbox{for } w \in [0,W] \]
and hence $G_W^*(w) \leq \alpha + \frac{w}{m\hat{\theta}}$.

\emph{\hspace{0.1in} Proof:} Compare the HJB equation for $G^*(w)$ and $G^*_W(w)$:
\begin{align*}
v^* &= \min_{k \in [0, w/m_e]} \left( \Delta_k(w) - \theta(k) G^*(w) \right) + \frac{\sigma^2}{2} (G^*)'(w) \\
v^*_W &= \min_{k \in [0, w/m_e]} \left( \Delta_k(w) - \theta(k) G_W^*(w) \right) + \frac{\sigma^2}{2} (G_W^*)'(w)
\end{align*}
Now, for any $ w \geq 0$, if $G^*(w) = G^*_W(w)$, then $(G^*)'(w) \geq (G^*_W)'(w)$ since the terms in the parentheses are equal and $v^* \geq v^*_W$. That is, $G^*_W(\cdot)$ can never cross $G^*(\cdot)$ from below. Since $G^*(0)=G^*_W(0)$, $G^*(w) \geq G^*_W(w)$ for all $w \geq 0$.\\
{\bf Step 3:} Define
\[ T_W(w) \doteq \int_{0}^w \theta(k^*_W(u)) du,  \]
where $k^*_W(w)$ is the optimal finite buffer control with workload buffer $W$. Then as $W \to \infty$, $T_W(W) = \Theta(W)$. 
That is, the integral of the drift over the interval $[0,W]$ for the family of controls $k^*_W(\cdot)$ parameterized by $W$ must asymptotically grow linearly in the buffer limit.

\emph{\hspace{0.1in} Proof:} 
We begin by rewriting the HJB equation for $G^*_W(w)$
\begin{align*}
\frac{\sigma^2}{2}(G^*_W)'(w) &= v^*_W - \min_{k \in [0, w/m_e]} \left( \Delta_k(w) - \theta(k) G_W^*(w) \right) \\
& \leq v^*_W - \frac{w}{m \vee m_e} + \theta(k^*_W(w)) G^*_W(w).
\intertext{Take the integration}
\int_{w=0}^W \frac{\sigma^2}{2}(G^*_W)'(w) dw &\leq W \cdot v^*_W - \frac{W^2}{2(m \vee m_e)} +  \left( \alpha + \frac{W}{m \hat{\theta}} \right)  \int_{0}^W \theta(k^*_W(w)) dw.\\
\intertext{This implies}
\frac{\sigma^2}{2} \left(G^*_W(W) - G^*_W(0) \right) & \leq W \cdot v^*_W - \frac{W^2}{2(m \vee m_e)} +  \left( \alpha + \frac{W}{m \hat{\theta}} \right)  T_W(W).
\end{align*}
The left-hand side of the above inequality is 0 (since the workload reflects at $w=0$ and $w=W$). The first term on the right-hand side grows linearly in $W$ since $v^*_W$ is bounded by $v^*$. The second term grows as $\Theta(W^2)$. If $T_W(W) = o(W)$ then the right-hand side becomes negative for $W$ large enough --  a contradiction. Therefore, $T_W(W)$ must grow at least linearly. By our assumptions, $\theta(w) \leq \hat{\theta} < \infty$. Therefore, $T_W(W) = \Theta(W)$.\\
{\bf Step 4:} Denote the density function of the workload under control $k^*_W$ by $f_W$. The preceding step implies 
\[ f_W(W) = O(e^{-\beta W}) \]
for some positive constant $\beta > 0$. That is, the density function for the workload under optimal finite buffer controls falls exponentially.

\emph{\hspace{0.1in} Proof:} The density function is given by
\begin{align*}
f_W(w) &= \kappa_W e^{-\int_0^w  \frac{\theta(k^*_W(u))}{\sigma^2/2}} du \\
&= \kappa_W e^{- \frac{T_W(w)}{\sigma^2/2}},
\end{align*}
where $\kappa_W$ is the normalization constant. Since $T_W(W) = \Theta(W)$ by the preceding step, $f_W(W) = O(e^{-\beta W)}$ for some $\beta > 0$.\\
{\bf Step 5:} Finally consider the following infinite buffer control, parameterized by workload $W$:
\begin{align*}
\tilde{k}_W(w) &= \begin{cases}
k^*_W(w) & w \leq W, \\
\hat{k} & w > W.
\end{cases}
\end{align*}
That is, we create a fluid continuation control with prefix $k^*_W(w)$. This results in a suboptimal infinite buffer control with average cost $\tilde{v}_W \geq v^* \geq v^*_W$. However, since the workload density decays exponentially under control $k^*_W(\cdot)$, and $\theta(\hat{k}) > 0$, the cost of $\tilde{k}_W(\cdot)$ is at most $O(e^{-\beta W})$ higher than $v^*_W$. That is
\[ (v^*-v^*_W) \leq (\tilde{v}_W - v^*_W) = O(e^{-\beta W}). \]

\end{proofof}


\begin{proofof}{Proposition~\ref{prop:NR_convergence}}
Recall the Newton-Raphson algorithm from Algorithm~\ref{alg:Newtons_method}: We first pick a large enough value of workload $W \geq \hat{k}m_e$ (which is not changed during subsequent iterations). The goal of the Newton-Raphson algorithm then is to find the average cost of the optimal dynamic policy under the restriction that the control for $w\geq W$ is the fluid control $\hat{k}$. With $v_n$ as our guess in the $n$th iteration, we backwards evolve the ODEs:
\begin{align}
\label{eqn:app_eq1}
v_n &= \min_{k \in [0, w/m_e]} \left[\frac{w}{m} + k \left(1-\frac{m_e}{m} \right) - \theta(k) G_{v_n}(w)\right] + \frac{\sigma^2}{2} G'_{v_n}(w) \\
1 &= - \theta(k_{v_n}(w)) g_{v_n}(w) + \frac{\sigma^2}{2} g'_{v_n}(w)
\intertext{for $w\in[0,W]$ with (terminal) boundary conditions:}
G_{v_n}(W) &= \left( \hat{k} \left( 1-\frac{m_e}{m} \right) -v_n + \frac{\sigma^2}{2\hat{\theta}}\right) \frac{1}{\hat{\theta}} + \frac{1}{m\hat{\theta}} W \\
\label{eqn:app_eq2}
g_{v_n}(W) &= -\frac{1}{\hat{\theta}}
\end{align}
Here $k_{v_n}(w)$ denotes the policy obtained while solving the ODE for $G_{v_n}$.

The updated guess for the $(n+1)$st iteration is
\[ v_{n+1} = v_n - \frac{G_{v_n}(0)}{g_{v_n}(0)}. \]
We develop our proof of the proposition in several steps.

{\bf Step 1 :} $v_n \geq v_f(W)$ for $n\geq 1$ 

\emph{\hspace{0.1in} Proof:} Let $\widetilde{G}_{v}(w)$ (parameterized by $v,w$) be given by the ODE
\[ v = \frac{w}{m} + k_{v_n}(w) \left( 1 - \frac{m_e}{m} \right) - \theta(k_{v_n}(w)) \widetilde{G}_{v}(w) + \frac{\sigma^2}{2} \widetilde{G}'_{v}(w)\]
for $w \in [0,W]$
with boundary condition 
\[ \widetilde{G}_{v}(W) =  \left( \hat{k} \left( 1-\frac{m_e}{m} \right) -v + \frac{\sigma^2}{2\hat{\theta}}\right) \frac{1}{\hat{\theta}} + \frac{1}{m\hat{\theta}} W \]
This is essentially the same ODE as \eqref{eqn:app_eq1} but with the $\min$ operator replaced by the fixed policy $k_{v_n}$. The first observation is that  $\widetilde{G}_{v}(0)$ is a linear function of $v$, and $\widetilde{G}_{v_n}(w) = G_{v_n}(w)$ for all $w \in [0,W]$. Further, denoting 
\[ \widetilde{g}_{v}(w) = \frac{d}{d v} \widetilde{G}_{v}(w) \]
it is easy to see that $\widetilde{g}_{v}(w) = g_{v_n}(w)$ . Therefore,
\[ v_{n+1} = v_{n} - \frac{G_{v_n}(0)}{g_{v_n}(0)} = v_{n} - \frac{\widetilde{G}_{v_n}(0)}{\widetilde{g}_{v_n}(0)} \]
Since $\widetilde{G}_v(w)$ is a linear function in $v$ for all $w$, the Newton-Raphson update for $\widetilde{G}_{v}(0)$ directly yields that value of $v$ for which $\widetilde{G}_{v}(0)=0$. But this must be the average cost of policy $k_{v_n}$. Therefore, $v_{n+1}$ is in fact the average cost of policy $k_{v_n}$. Since $k_{v_n}$ is a feasible policy in the set $\mathcal{F}_W$, its average cost must be no less than $v_f(W)$ and hence all the iterates $\{v_1, v_2, \ldots\}$ produced are larger than $v_f(W)$.

{\bf Step 2:} The iterates for average cost $\{v_1, v_2, \ldots \}$ form a strictly decreasing sequence.

\emph{\hspace{0.1in} Proof:} For this, we will show that $G_{v}(0)$ is monotonically decreasing and Lipschitz continuous in $v$ with derivative bounded away from 0. This would imply that for $v > v_f(W)$, $G_v(0) < 0$, as well as $g_v(0) < 0$, and hence $v_1 > v_2 > \ldots > v_f(W)$.

Consider $v_a > v_b$, and let $G_a,g_a,k_a$ and $G_b,g_b,k_b$ represent the solution of \eqref{eqn:app_eq1}-\eqref{eqn:app_eq2} and the optimal controls for $v_a$ and $v_b$, respectively. Our goal is to show $G_a(w) < G_b(w)$ for all $w \geq 0$. We rely on the following two facts:
\begin{enumerate}
\item Terminal condition:
\[ G_b(w) - G_a(w) = \frac{a-b}{\hat{\theta}} \quad w \geq W \]
\item Bounds on $G'_b(w) - G'_a(w)$ for $w \in [0,W]$: \footnote{We use primes to denote derivatives with respect to $w$. Derivatives with respect to $v$ are denoted with $\frac{\partial }{\partial v}$ notation.}
\[ G'_b(w) - G'_a(w) = - \frac{2}{\sigma^2} \left[ (v_a-v_b) - \min_{k\in [0,w/m_e]}\left( \Delta_k(w)-\theta(k)G_a(w) \right) - \min_{k\in[0,w/m_e]} (\Delta_k(w) - \theta(k)G_b(w)) \right]\]
where recall that $\Delta_k(w) = \frac{w}{m}+k\left(1-\frac{m_e}{m} \right)$. 
By the assumption $G_a(w) \leq G_b(w)$ and Lemma~\ref{lemma:bounding},
\begin{align*}
-\frac{2}{\sigma} \left[ (v_a-v_b)+d_\theta(G_b(w)-G_a(w)) \right] \leq G'_b(w)-G'_a(w) \leq -\frac{2}{\sigma^2} \left[ (v_a-v_b) - d_\theta \left( G_b(w)-G_a(w)\right) \right].
\end{align*}
\end{enumerate}
Combining these two facts, we get
\begin{align}
(v_a-v_b) \left[ \frac{1}{\hat{\theta}} + \frac{1}{d_\theta} \left(1-e^{-\frac{2d_\theta W}{\sigma^2}} \right) \right]
\leq G_b(0) - G_a(0) \leq
(v_a-v_b) \left[ \frac{1}{\hat{\theta}} + \frac{1}{d_\theta} \left(e^{\frac{2d_\theta W}{\sigma^2}} - 1\right) \right].
\label{eqn:G_Lipschitz}
\end{align}
Thus we have proved that $G_v(0)$ is monotonically decreasing in $v$ and therefore the Newton-Raphson iterates will form a monotonically decreasing sequence. Further, $G_v(w)$ is Lipschitz continuous in $v$ for all $w$, and therefore its derivative with respect to $v$ exists almost everywhere, and according to the first inequality of \eqref{eqn:G_Lipschitz} this derivative is bounded away from 0.

The properties proved so far are sufficient to prove that the Newton-Raphson algorithm converges to the optimal $v_f(W)$, and the convergence rate is at least linear.

{\bf Step 3:} The second derivative of $G_v(0)$ with respect to $v$ is finite in some neighborhood of $v_f(W)$, and hence the Newton-Raphson iterates converge quadratically to $v_f(W)$.

\emph{\hspace{0.1in} Proof:} 
A sufficient condition to show a quadratic convergence rate for the Newton-Raphson method is that the first derivative is non-zero and the second derivative is finite in some neighborhood of the root. We have already shown that the first condition holds. That is $g_v(0) > 0$ for all $v$. What remains to be done is to show that $\frac{\partial^2 G_v(0)}{\partial v^2} = \frac{\partial g_v(0)}{\partial v}$ is finite in a neighborhood of $v_f(W)$. \\
First consider the case where $\left| 1 - \frac{m_e}{m} \right| = 0$, or in other words $m_e=m$. As pointed out before, the optimal policy in this case is the fluid policy for which the optimal average cost is found in the initialization phase itself and the algorithm terminates after one step. Therefore we assume $\left| 1 - \frac{m_e}{m} \right| > 0$ in the remainder of the proof.\\ 
Abbreviating $\theta_a(w) \doteq \theta(k_a(w))$ and $\theta_b(w) \doteq \theta(k_b(w))$, consider the ODE for $g_a(w)$:
\begin{align*}
g'_a(w) &= \frac{2}{\sigma^2} \left( 1+\theta_a(w) g_a(w) \right)
\end{align*} 
with boundary condition $g_a(W) = -\frac{1}{\hat{\theta}}$. Therefore, we have
\begin{align}
|g'_a(w) - g'_b(w)| &= \frac{2}{\sigma^2} |\theta_a(w)g_a(w)-\theta_b(w)g_b(w)|\\
\label{eqn:gp_diff_bound} & \leq \frac{2}{\sigma^2} \left( |\theta_a(w)-\theta_b(w)| \left(|g_a(w)|+|g_b(w)|\right)  + 2d_\theta |g_a(w)-g_b(w)| \right).
\end{align}
As we have shown in the previous step, $|G_b(w)-G_a(w)| \leq \kappa(w)|v_a-v_b|$ for some bounded function $\kappa(w)$. Combined with our regularity assumptions on $\theta()$, this implies
\begin{align}
\label{eqn:theta_diff_bound}
|\theta_a(w) - \theta_b(w)| \leq |G_b(w)-G_a(w)| \frac{S_\theta^3}{D_\theta \left| 1 - \frac{m_e}{m} \right|} \leq |v_a-v_b| \frac{\kappa(w) S^3_\theta}{D_\theta \left|1 - \frac{m_e}{m} \right|}.
\end{align}
A rough justification of the first inequality is the following: Since
\[ k_{a}(w) = \argmin_{k \in [0, w/{m_e}]} \left( \frac{w}{m_e} - k \left( 1 - \frac{m_e}{m}\right)  + \theta(k) G_a(w) \right), \]
one of three cases occurs: (1) $k_a(w)=0$, (2) $k_a(w)=w/m_e$, or (3) $  (1-\frac{m_e}{m})/G_a(w) \in \partial \theta(k_a)$ where $\partial \theta(k_a)$ is the set of subderivatives of $\theta(k)$ at $k=k_a$. Consider the scenario where the third case occurs for $k_a(w)$ as well as $k_b(w)$ (other cases are easier to handle). By \eqref{eq:tech-1st-der-bd}, the absolute value of the first derivative of $\theta(k)$ is bounded by $S_\theta$. 
Therefore for the case $0 < k_a(w) < \frac{w}{m_e}$ to occur $|G_a(w)|, |G_b(w)| $ must be bounded away from 0. More precisely, $\min(|G_a(w)|, |G_b(w)|) \geq  \frac{\left| 1-\frac{m_e}{m} \right|}{S_\theta}$. 
Now, by \eqref{eq:tech-2nd-der-bd} $D_\theta>0$, we must have
\begin{align*}
\left| 1-\frac{m_e}{m} \right| \left| \frac{1}{G_a(w)} - \frac{1}{G_b(w)} \right| \geq D_\theta |k_a(w) - k_b(w)|,
\end{align*}
or
\begin{align*}
 |k_a(w) - k_b(w) | &\leq \frac{\left| 1-\frac{m_e}{m} \right|}{D_\theta} \left| \frac{1}{G_a(w)} - \frac{1}{G_b(w)} \right|  \\
 &= \frac{\left| 1-\frac{m_e}{m} \right| }{D_\theta} \left| \frac{G_a(w)-G_b(w)}{G_a(w)G_b(w)} \right|  \\
 & \leq  \frac{S^2_\theta}{D_\theta \left| 1-\frac{m_e}{m} \right| } \left| G_a(w)-G_b(w) \right|.
\end{align*}
Finally,
\begin{align*}
|\theta_a(w) - \theta_b(w)| &= |\theta(k_a(w)) - \theta(k_b(w))| \\
& \leq S_\theta \left| k_a(w) - k_b(w) \right| \\
&  \leq  \frac{S^3_\theta}{D_\theta \left| 1-\frac{m_e}{m} \right| } \left| G_a(w)-G_b(w) \right|. 
\end{align*}

Finally substituting \eqref{eqn:theta_diff_bound} into \eqref{eqn:gp_diff_bound} gives
\begin{align}
|g'_a(w) - g'_b(w)| & \leq \frac{2}{\sigma^2} \left( |v_a-v_b| \frac{\kappa(w) S^3_\theta}{D_\theta \left| 1 - \frac{m_e}{m} \right|} 
 \left(|g_a(w)|+|g_b(w)|\right)  + 2d_\theta |g_a(w)-g_b(w)| \right).
\end{align}
Now, a similar calculation to that in step 2 shows that $g_v(w)$ is Lipschitz continuous in $v$ for all $w$, and hence for $w=0$. We believe the lower bound on $\left| \frac{d^2 \theta(k)}{dk^2} \right|$ we used to prove quadratic convergence is an artifact of our rather crude proof technique, and that quadratic convergence holds even without this restriction.
\end{proofof}

\end{document}